%% file: mainArxiv.tex
\documentclass[thm-restate]{llncs}
\usepackage{bbding}
\input{macroslncs}

\begin{document}
\title{Weighing Timed Regular Languages:\\ The Final Step (long version)}
\author{Eugene Asarin\orcidID{0000-0001-7983-2202}\inst{1}\Envelope\and Aldric Degorre\orcidID{0000-0003-2712-4954}\inst{1}\and C\u at\u alin Dima\orcidID{0000-0001-5981-4533}\inst{2}
\and \\
Bernardo {Jacobo~Inclán}\orcidID{0009-0009-5323-7945}\inst{1,3} }

\institute{Université Paris Cité, CNRS, IRIF, Paris, France \email{\{asarin,adegorre\}@irif.fr} \and LACL, Université Paris-Est Créteil, France \email{dima@u-pec.fr}
\and LABRI, Université de Bordeaux, France \email{bernardo.jacobo-inclan@labri.fr}
}
\authorrunning{Asarin, Degorre, Dima, and Jacobo~Inclán}
\maketitle
\begin{abstract}
\input{abstract}
\end{abstract}

\section{Introduction}
   \input{introArxiv} 
\section{Background}\label{sec:back}
   \input{back}
   
\section{Measuring timed regular languages}\label{sec:band}
\input{band}
\section{Bandwidth computation for normal automata}   \label{sec:main}

\input{normal}

\input{sect-lower-bound}
\input{sect-upper-bound}
\section{Conclusions} \label{sec:concl}
\input{conclusion}
\vfill

\begin{credits}
This work was funded by ANR project MAVeriQ ANR-CE25-0012. The authors thank Thomas Colcombet for a valuable advice.
\end{credits}
\newpage
\bibliographystyle{splncs04}
\bibliography{entro}
\newpage
\appendix
\input{appendix}

\end{document}

%% file: macroslncs.tex
\usepackage{dsfont,mathtools,multicol,float,subcaption,eufrak,amssymb,listings,hyperref,underscore,complexity}
\usepackage{thm-restate}

\usepackage{tikz}
\usepackage{cleveref}
\usepackage[inline]{enumitem}

\usetikzlibrary {arrows.meta,graphs,quotes,positioning}
\usetikzlibrary{arrows,trees,backgrounds,automata,shapes,plotmarks,decorations,fit}
\tikzstyle{block}=[rectangle,draw, thin, inner sep=3pt, text centered,fill=orange!20!yellow!20] 
\tikzstyle{net}=[draw,cloud,fill=yellow!20,aspect=3,inner sep=1pt]
\tikzstyle{dev}=[draw,circle,fill=yellow!20,aspect=2,inner sep=1pt,minimum size=.6cm]
\tikzstyle{pre}=[<-,shorten <=1pt,>=stealth']
\tikzstyle{post}=[->,shorten >=1pt,>=stealth']
\tikzstyle{bi}=[<->,shorten >=1pt,shorten <=1pt, >=stealth']
\tikzstyle{every initial by arrow}=[initial text={},initial distance=1em,post]
\tikzstyle{every state}=[minimum size=0.6cm,fill=cyan!20!yellow!20]
\tikzstyle{transition}= [post,shorten >=1pt,node distance=2cm, inner sep=2pt,bend angle=20]

\newclass{\FPSPACE}{FPSPACE}

\spnewtheorem*{mainproblem}{Main Problem}{\bfseries}{\itshape}
\spnewtheorem{thm}{Theorem}{\bfseries}{\itshape}
\spnewtheorem{con}[theorem]{Construction}{\bfseries}{\itshape}
\crefname{thm}{Thm.}{Thms.}
\crefname{con}{Constr.}{Constrs.}
\crefname{proposition}{Prop.}{Props.}
\crefname{lemma}{Lem.}{Lemmas}
\crefname{definition}{Def.}{Defs.}
\crefname{corollary}{Cor.}{Cors.}
\crefname{figure}{Fig.}{Figs.}
\crefname{section}{Sect.}{Sections}
\crefname{equation}{Eq.}{Eqs.}

\newcommand{\conv}{\mathop{\mathrm{conv}}\nolimits}
\newcommand{\AF}{\mathrm{AF}}
\newcommand{\countb}{\mathrm{cb}}
\newcommand{\prevb}{\mathrm{pb}}
\newcommand{\lastreset}{\mathrm{r}}

\newcommand{\ent}{\mathcal{H}}
\newcommand{\capa}{\mathcal{C}}

\newcommand{\bandc}{\mathcal{BC}}
\newcommand{\bandh}{\mathcal{BH}}

\newcommand{\trans}[2][]{\xrightarrow[#1]{#2}}
\newcommand{\Trans}[1]{\cdot\!\cdot\!\!\xrightarrow{\!#1}}

\newcommand{\guard}{\mathfrak{g}}
\newcommand{\reset}{\mathfrak{r}}

\newcommand{\Net}{\mathsf{Net}}
\newcommand{\Sep}{\mathsf{Sep}}

\newcommand{\x}{\mathbf{x}}
\newcommand{\y}{\mathbf{y}}
\newcommand{\z}{\mathbf{z}}
\newcommand{\w}{\mathbf{w}}
\newcommand{\Kone}{\mathbf{K}}

\newcommand{\Word}{\mathop{word}}

\newcommand{\Runs}{\mathop{Runs}}

\newcommand{\untiming}{\mathop\mathbf{unt}}
\newcommand{\gap}{\mathop\mathrm{gap}}
\newcommand{\length}{\mathop{\mathrm{dur}}}

\newcommand{\real}{\mathds{R}}

\newcommand{\nat}{\mathds{N}}

\newcommand{\aut}{\mathcal{A}}

\newcommand{\dr}{\overrightarrow{d}}

\newcommand{\aaa}{node[blue]{$\bullet$}}
\newcommand{\bbb}{node[red]{$\bullet$}}

\newcommand{\RTA}{\textsc{RsTA}}
\newcommand{\CTA}{\textsc{SFA}}

%% file: abstract.tex
The bandwidth of a timed language characterizes the quantity of information per time unit (with a finite observation precision $\varepsilon$).   
The asymptotic behavior of the bandwidth as $\varepsilon \to 0$ classifies timed regular languages in three classes:
meager, normal, and obese.
Normal timed automata have a bounded frequency of events and some non-punctual transitions, and, up to now, were the only class of timed automata
for which no algorithm was available for computing their bandwidth. 
In this article, we compute the bandwidth of any such automaton in the form $\approx\alpha\log{1/\varepsilon}$. 
Our approach
reduces this problem to computing the best reward-to-cost ratio in a weighted finite graph
constructed from the given timed automaton.

\keywords{timed automata \and information theory \and bandwidth \and entropy}

%% file: introArxiv.tex
Studies of entropy (growth rate, information content) of regular and other formal languages raising to \cite{ChomskyMiller,shannon48}, constitute nowadays a textbook material \cite{perrin,marcus}, fundamental to symbolic dynamics, theory of codes, and practice of data coding and compression. In the long run, we are extending those studies to timed regular languages, with theoretical and practical ambitions.

In earlier works (see \cite{entroJourn}, and also \cite{timedCoding} for applications to information coding), we have defined and explored the notion of entropy of a timed language. It characterizes the quantity of information \emph{per event} contained in its words observed with some precision $\varepsilon$. 
In \cite{bw-our,3classes,CIAA,obese} we take a different, practically more relevant perspective. 
Given a timed regular language (represented by a timed automaton), observed with a precision $\varepsilon$, we want to know its \emph{bandwidth}, as introduced in \cite{bw-our} --- quantity of information in the words of the language  measured in bits \emph{per time unit}.

In \cite{3classes}, we have identified three classes of timed regular languages and proved that they constitute an exhaustive classification of timed regular languages. They are  illustrated by automata on \cref{fig:3kinds}: 
\begin{description}
    \item[meager] with bandwidth $O(1)$, which allows only encoding of information in discrete choices, and not in timing;
    \item[normal]\!\!\footnote{as opposed to meager and obese} with bandwidth $\Theta(\log (1/\varepsilon))$, where time delays (in quantity proportional to the duration) can be used to encode additional information;
    \item[obese] with bandwidth $\Theta(1/\varepsilon)$, events can happen with a very high frequency, and thus a huge amount of information can be encoded in timing.
\end{description}
\input{figures/example1}
As shown in \cite{3classes}, the three classes can be characterized by structural properties (presence or absence of certain patterns), but deciding to which class belongs a TA is \PSPACE-complete. 

The next research aim is a more quantitative analysis: for each class, find the main asymptotic term of the bandwidth:  in the form\footnote{all the logarithms in this article are base 2} $\alpha$, $\alpha\log (1/\varepsilon)$, and $\alpha/\varepsilon$ for meager, normal, and obese automata, respectively. 

In \cite{CIAA}, we solved the problem for (deterministic) meager automata and computed $\alpha$ using a new barycentric abstraction to transform a timed automaton into a (finite-state) simply-timed one, for which $\alpha$ can be characterized in terms of matrix algebra. In \cite{obese}, we solved it for obese automata by reduction to the maximum reward-per-time in the abstracted timed graph.

In this paper, we solve the problem for the last remaining case of a normal automaton. At first glance, the bandwidth of such a TA is $\alpha\log(1/\varepsilon)$  with $\alpha$ the maximal number of free (non-punctual) transitions per time unit. However, as we show in this work, the freedom of a transition is a subtle notion, which depends on the context (other transitions in the automaton) and the strategic choices (in which directions to move and in which not to move).  We formalize these ideas in terms of a refined bi-weighted graph associated with the automaton. Paths in this graph represent only certain runs in the timed automata, but are sufficient to generate its full bandwidth. Finally, $\alpha$ is characterized as the maximal reward-to-cost ratio in the refined graph.

Finding  (and proving) matching lower and upper bounds for the bandwidth was a challenging task. To establish the former, we considered only ``regular'' runs  of the automaton, which iterate many times one optimal cycle, and take every now and then another, resetting cycle to avoid a sort of saturation. The proof of the upper bound is much more involved, since it has to deal with all the possible behaviours of the automaton. It makes use of the algebraic version of Ramsey theory \cite{simon}, advanced reachability theory for timed automata \cite{puri}, and our previous findings on the bandwidth \cite{3classes}.  

The paper is structured as follows. In \cref{sec:back} we recall some useful notions. In \cref{sec:band} we define the bandwidth and  state the problem of its  computation. In \cref{sec:main} we  solve the problem for normal timed languages. We put together the results from \cite{3classes,CIAA,obese} and this article and discuss the perspectives in \cref{sec:concl}. In the main part of the article,  we present the algorithm, some intuitive explanations, and key lemmas; proof details are given in the  appendix.

%% file: figures/example1.tex
\begin{figure}[t]\small
\begin{subfigure}[t]{0.3\textwidth}
\centering
\begin{tikzpicture}
\node[state](q)[initial]   {$q$};
\node[state, right= 2cm of q](p) {$p$};
\draw [post] (q) edge [bend left, above] node {$a,b,x<1/\{x\}$} (p);
\draw [post]  (p) edge [ below] node {$b,c,y>1/\{y\}$} (q);
\end{tikzpicture}
\caption{meager: $2$}
\end{subfigure}
~
\begin{subfigure}[t]{0.3\textwidth}
\centering
\begin{tikzpicture}
\node[state](q)[initial] {$q$};
\draw  (q) edge [loop above] node {$a,2<x<3/\{x\}$} (q);
\end{tikzpicture}
\caption{normal: $\frac 1 2 \log\frac 1 \varepsilon$}
\end{subfigure}
~
\begin{subfigure}[t]{0.3\textwidth}
\centering
\begin{tikzpicture}
\node[state](q)[initial] {$q$};
\draw  (q) edge [loop above] node {$a,x<3/\{x\}$} (q);
\end{tikzpicture}
\caption{obese: $\frac 1 \varepsilon$}
\end{subfigure}
\caption{Timed automata and their approximate bandwidths \label{fig:3kinds}
    \vspace*{-10pt}}
\end{figure}

%% file: back.tex
\subsection{Timed words, languages and automata}
Timed automata have been introduced in \cite{AD} for modeling and verification of real-time systems.  We will use the main definitions in the following form:
\begin{definition}
    Given $\Sigma$, a finite alphabet of discrete events, a \emph{timed word} over $\Sigma$ is an element from $\left(\Sigma\times\real_+\right)^*$ of the form $w = (a_1,t_1)\dots (a_n,t_n)$, 
with $0 \leq t_1 \leq t_2 \cdots \leq t_n$. We denote $\length(w) = t_n$.
A \emph{timed language} over $\Sigma$ is a set of timed words over the same alphabet.
\end{definition}
For a set of variables $X$, let $G_X$ be the set of finite conjunctions of constraints of the form $x \sim b$ and $x \sim y+b$ with $x,y \in X$, $\sim \in \{<,\leq, >, \geq\}$ and $b \in \nat$.
\begin{definition}\label{def:TA}
 A  \emph{timed automaton} is a tuple 
$(Q, X, \Sigma, \Delta, S, I, F)$, where
\begin{itemize}
    \item $Q$ is the finite set of discrete locations;
    \item $X$ is the finite set of clocks;
    \item $\Sigma$ is a finite alphabet;
    \item $S,I,F: Q \rightarrow G_{X}$  define respectively the starting\footnote{this feature was not present in the original definition \cite{AD}}, initial, and final clock values for each location;
    \item 
    $\Delta \subseteq Q \times Q \times \Sigma \times G_{X} \times 2^X$ is the transition relation, whose elements are called edges. Given an edge $(q,q', a,\guard, \reset)$, we call $q$ its origin, $q'$ target, $a$ label, $\guard$ guard and $\reset$ reset. 
\end{itemize}

\end{definition}

A TA  is \emph{bounded} if there is a constant $M$ such that all the starting conditions $S(q)$ 
and the guards of all the edges imply that  clock values are smaller than $M$.

For $\x\in \real^n$ and  $d\in \real$ we write $\x+d$ as a shortcut for $(x_1+d,\dots, x_n+d)$.

The semantics of a timed automaton is given by a \emph{timed transition system}  whose states are 
tuples $(q,\x)$ composed of a location $q\in Q$ and a clock valuation (vector) $\x \in [0,\infty)^X$.
Each edge $\delta  = (q,q', a,\guard, \reset) \in \Delta$ generates many timed transitions, which are tuples $(q,\x) \trans[d]{\delta} (q', \x') $ where 
$ \x \models S(q)$, $\x+d \models \guard$ and $x'_c=0$ whenever $c\in\reset$ and $x'_c=x_c+d$ otherwise, provided that $\x' \models S(q')$. We also denote $\x[\reset]$ the operation of resetting 
the clocks in $\reset$, hence $\x' = (\x+d)[\reset]$. 

\emph{Paths} are sequences of edges that agree on intermediary locations. 
At the semantic level, they generate \emph{runs} which are sequences of timed transitions that agree on intermediary configurations.
An \emph{accepting run} is a run $\rho = (q_0,\x_0) \trans[d_1]{\delta_1} (q_1,\x_1)\cdots \trans[d_n]{\delta_n}(q_n,\x_n)$,
in which the first state satisifes $\x_0 \models I(q_0)$ and the last state satisfies $\x_n \models F(q_n)$.

For a given path $\pi$, its set of runs is $\Runs_\pi(\x,\y)$.
The language of $\pi$: $L_{\pi}(\x,\y)$ is the set of timed words associated with runs in  $\Runs_\pi(\x,\y)$, and the reachability relation $\mathop{Reach}_\pi$ is the set of pairs of $(\x,\y)\in\real_+^X\times\real_+^X$ such that $\Runs_\pi(\x,\y)\neq\emptyset$. Also, for $T\in\real_+$, $L_{\pi,T}(\x,\y)$ denotes the time bounded language $\left\{w\mid w\in  L_{\pi}(\x,\y)\wedge \length(w)\leq T\right\}$.
The sets ${\Runs_\pi}$, ${L_\pi}$ and ${L_{\pi, T}}$ denote, respectively, the unions of ${\Runs_\pi}(\x,\y)$, ${L_\pi}(\x,\y)$ and ${L_{\pi, T}}(\x,\y)$ for all possible values of $\x$ and $\y$.
The language of a timed automaton is the set of timed words associated with some accepting run and is denoted $L(\aut)$.
We will also need \emph{time-restricted languages} generated by some timed automaton: for any nonnegative real $T \in [0,\infty)$,
$L_T(\aut) = \{ w \in L(\aut) \mid \length(w)  \leq T\}$.

Furthermore, if $\delta_i = (q_i,q_{i+1},a_i,\guard_i,\reset_i)$, then 
the timed word associated with $\rho$ and initial time $t_0$ is defined as $\Word(\rho,t_0)\triangleq 
(a_1,t_0+d_1)(a_2,t_0+d_1+d_2)\ldots  (a_n,t_0+\sum_{i\leq n}d_i)$. 
We do not mention $t_0$ when it equals $0$.  
The \emph{length} of $\rho$ is $\length(\rho) = \length(\Word(\rho))$.

\paragraph{Regions and orbit graphs.}
Many algorithms related to timed automata utilize the finite abstraction of the timed transition system called the \emph{region construction} \cite{AD} that we briefly recall here. First, for $\xi \in \real$, we denote $\lfloor \xi \rfloor$ its integral part, and 
$\{\xi\}$  its fractional part. 
Then two clock vectors $\x$ and $\y$ are \emph{region-equivalent} iff
(1) $\lfloor x_c \rfloor = \lfloor y_c \rfloor$ for any clock $ c$;
(2) for any clock  $c$ we have that $\{x_{c}\}=0$ iff $\{y_{c}\} = 0$;
(3) for any two clocks  $c_1, c_2$ we have that $\{x_{c_1}\}\leq \{x_{c_2}\}$ iff $\{y_{c_1}\} \leq \{y_{c_2}\}$.
\emph{Regions} are equivalence classes of the region equivalence. 



We will use a special form of TA, similar to the  classical region automaton \cite{AD}, but typed again as a TA. We recall the definition \cite{entroJourn,3classes} in the form of \cite{obese}.
\begin{definition}\label{def:rs:bounded}
A \emph{region-split TA} (or \RTA) is a TA\ $(Q,X,\Sigma,\Delta,S,I,F)$, which is bounded and such that, for any location $q\in Q$: 
(1) $S(q)$ defines a non-empty region, called the \emph{starting region} of $q$;
(2) all states in $\{q\}\times S(q)$ are reachable from an initial\footnote{a state $(q,x)$ is initial whenever $x\models I(q)$ and final whenever $x\models F(q)$  } and co-reachable to a final one;
(3) for any edge $(q,q',a,\guard,\reset)\in\Delta$,  
$\left(\{S(q)+d\mid d\in \real_+\}\cap \guard\right)[\reset]=S(q')$, where we utilize the $(\cdot) [\reset]$ operator lifted to sets of clock valuations.
\end{definition}

\begin{lemma}\label{lem:2rta}
Any TA with an upper bound on some clock at every transition 
can be turned into a bounded region-split TA accepting the same language.  This transformation uses the same clocks 
and an exponentially larger set of locations.
\end{lemma}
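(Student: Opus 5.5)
The plan is to obtain the region-split TA in three steps: bound the clocks, build a region product, then trim.

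\textbf{Step 1: bounding the clocks.} Let $M$ exceed every constant in the guards and in $S,I,F$. The hypothesis says that every transition imposes an upper bound on some clock. Hence no edge can be taken with all clocks above $M$. We would then like to replace each clock value exceeding $M$ by a single ``large'' abstract value. This does not change the language, since constraints with constants below $M$ cannot distinguish values above $M$.

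Doing this syntactically needs care, because the construction must keep the same clocks and the same timed behaviour. Instead of capping values, we only restrict which starting regions are kept.
\begin{itemize}
\item A clock that has grown beyond $M$ since its last reset can no longer satisfy any upper-bound guard $x<b$ or $x\le b$.
\item It therefore matters only through lower bounds, which are already satisfied, and through diagonal constraints.
\end{itemize}
The upper bound in each guard then lets us intersect the guard with $x_c\le M$ for all clocks $c$. The argument is that some clock is bounded by the hypothesis, and the others are either bounded as well or have no further effect. If this fails for diagonal constraints, we would first eliminate diagonals by the standard (exponential) construction.

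\textbf{Step 2: the region product.} Take as locations the pairs $(q,R)$, where $q\in Q$ and $R$ ranges over the regions of clock values bounded by $M$ (including the unbounded-above classes collapsed as in Step 1). Set $S((q,R))=R$, $I((q,R))=I(q)\wedge R$ and $F((q,R))=F(q)\wedge R$.

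For each original edge $(q,q',a,\guard,\reset)$ and each region $R$, consider the time successors $R'$ of $R$ with $R'\subseteq\guard$. For each such $R'$, add an edge from $(q,R)$ to $(q',R'[\reset])$ with guard $\guard\wedge R'$, written as a conjunction of region constraints. Since regions are closed under resets, $R'[\reset]$ is again a region. Condition (3) of \cref{def:rs:bounded} follows because $\{R+d\}\cap R' = R'$ for a time-successor region $R'$ of $R$, and then $R'[\reset]$ is exactly the new starting region.

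The language is preserved. Every run of the original automaton lifts uniquely, by tracking the region of each valuation, and every run of the product projects to a run of the original automaton. The number of regions is exponential in $|X|$ and in the bit size of $M$, which gives the claimed blow-up in locations with the same clocks.

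\textbf{Step 3: trimming.} Remove every location $(q,R)$ whose starting region contains a state that is not reachable from an initial state and co-reachable to a final one. By the region-equivalence theorem of \cite{AD}, reachability and co-reachability are uniform on a region, so each region is either entirely useful or entirely useless. Trimming therefore gives condition (2) without losing accepted words. Condition (3) survives as well: if the target of an edge is useless, so is everything that passes through that edge. Boundedness holds by Step 1.

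The main obstacle is Step 1: making the bound on all clocks rigorous when the hypothesis only bounds \emph{some} clock at each transition. The first thing to try is to show that unbounded clocks can be assumed irrelevant, so that the guard and starting regions can be truncated to $M$ without changing the language.
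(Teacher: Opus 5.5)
Your Steps 2 and 3 (region product, then trimming) have the right shape. Step 1, however, is not merely hard to make rigorous: as stated it is false, and it is exactly where the hypothesis must be used.

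Intersecting every guard with $x_c\le M$ for all clocks (equivalently, keeping only bounded starting regions) removes runs. The transformation keeps the same clocks, and a clock that is not reset simply keeps growing. Take clocks $x,y$ starting at $0$ and one location with a self-loop guarded by $y=1$ and resetting $\{y\}$. Every transition bounds $y$, yet your truncated guard $y=1\wedge x\le M$ rejects every word with more than $M$ events. A clock that ``has no further effect'' is precisely one that must be left unconstrained.

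Keeping the collapsed unbounded classes in Step 2 does not rescue this. They violate boundedness in \cref{def:rs:bounded}. They also break your argument for condition (3): for $R=\{x>M,\ y=0\}$ and its time successor $R'=\{x>M,\ 0<y<1\}$, the point $(M+0.1,\,0.5)\in R'$ is not reachable from $R$ by letting time elapse.

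The missing idea is to \emph{add resets}, recording in the location which clocks have become large.
\begin{enumerate}
\item First extract what the hypothesis gives. At every transition some clock $c$ satisfies $x_c+d\le M$ with $x_c\ge 0$, so every delay is at most $M$.
\item Split each edge according to which clocks exceed $M$ when it is taken. Additionally reset those clocks, and set a flag for each of them in the location; this costs a factor $2^{|X|}$ in locations.
\item While $c$ is flagged, replace every atomic constraint $c\sim b$ in guards and in $S,I,F$ by its truth value for $c>M$. This value is constant because $b\le M$. Clear the flag when the original edge resets $c$, and flag clocks that are already large in the initial valuation from the start.
\item An edge whose upper-bounded clock is flagged becomes disabled, exactly as in the original automaton. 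So delays remain at most $M$ in the new automaton.
\end{enumerate}
This replacement is exact only for diagonal-free constraints, since $x-y$ stays observable when both clocks are large. So diagonal elimination must indeed come first, as you anticipated.

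After each transition every clock is at most $M$, and since delays are at most $M$, no clock ever exceeds $2M$. Only now can guards be intersected with $x_c\le 2M$ without changing the language. Your Steps 2 and 3, run over the bounded regions for constant $2M$, then finish the proof with the same clocks and exponentially many locations.
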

The construction is sketched in \cite[full, B3]{obese}. 

\paragraph{Closed semantics.} We define the \emph{closed semantics} of an \RTA, in which an edge $\delta=(q,q',a,\guard,\reset)$ generates a transition
from $(q,\x)$ to $(q',\x')$ with duration $d>0$, when $\x\models \bar{S}(q)$,  $\x'\models\bar{S}(q')$,  $\x+d\models\bar\guard$ and $\x' = (\x+d)[\reset]$,
where $\bar S$ and $\bar \guard$ are obtained from $S$  and $\guard$ by replacing all ``$<$'' by ``$\leq$'' and all ``$>$'' by ``$\geq$''.

Definitions of $\overline{\Runs_\pi}(\x,\y)$, and $\overline{L_{\pi}}(\x,\y)$ and variants thereof are obtained similarly to the previous semantics, but based on closed semantics transitions, rather than usual transitions.
Additionally, we define the closed reachability relation
$\overline{\mathop{Reach}_\pi} = \{ (\x,\y)\in\real_+^X\times\real_+^X \mid \overline{\Runs_\pi}(\x,\y)\neq\emptyset\}$.



\subsection{Orbit graphs and Simon's theorem}
An important tool for characterizing reachability in timed automata is the orbit graph from \cite{puri}, that we define using monoid terminology.
\begin{restatable}[Orbit graph \cite{puri} and morphism $\gamma$ \cite{entroJourn}]{definition}{defOrbitMorphism}
\label{def:orbit:long}
Given a non-empty path $\pi$ from location $p$ to location $q$ in a \RTA, its \emph{orbit graph} $\gamma(\pi)$ is the triple $\langle p, G, q \rangle$, where $G$ is the graph whose vertices are the vertices of $S(p)$ and of $S(q)$ and such that there is an edge $(v,v')$ whenever $(v,v')\in \overline{\mathop{Reach}(\pi)}$.

For an empty path we put $\gamma(\epsilon)=1$, for a disconnected sequence of edges $\pi$ by definition $\gamma(\pi)=0$.
\end{restatable}
The image of $\gamma$ (containing orbit graphs, that can be composed with each other by interpreting them as relations over vertices, as well as special elements 0 and 1), denoted $O$, has a natural monoid structure; and $\gamma:\Delta^*\to O$ is a monoid morphism. 

We will use a powerful Simon's theorem to deal with such a morphism (see the original article \cite{simon} and a modern presentation in \cite{bojanczyk}). 

\begin{thm}[Simon]\label{thm:simon}
Let $\Gamma$ be a finite alphabet, $M$ a finite monoid, and $\beta:\Gamma^*\to M$ a monoid morphism. 
Then the sequence of sets 
\begin{equation}\label{eq:simon}
X_0=\Gamma\cup\{\epsilon\};\qquad 
X_{h+1}=X_h\cdot X_h\cup\bigcup_{\substack{e\in M\\ee=e}}(X_h\cap \beta^{-1}(e))^* 
\end{equation}
stabilizes at some finite level $H$ (in fact $H\leq 3|M|$) with $X_H=\Gamma^*$.
\end{thm}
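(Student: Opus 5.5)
The statement just before this point is Simon's factorization forest theorem (\cref{thm:simon}). I will prove it by induction on a Green-type complexity parameter of the monoid. The cleanest route is the ``Ramseyan split'' proof from \cite{bojanczyk}. The core is to show that every word $w\in\Gamma^*$ has a factorization tree of height bounded by a function of $|M|$. The tree uses three kinds of nodes: letter leaves, binary nodes, and idempotent nodes, meaning unranked nodes all of whose children have the same idempotent image $e$. It is then immediate that $X_h$ contains every word with such a tree of height at most $h$. Indeed, $X_0$ holds the leaves and $\epsilon$, binary nodes correspond to $X_h\cdot X_h$, and idempotent nodes correspond to $(X_h\cap\beta^{-1}(e))^*$. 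The sequence $X_h$ is increasing, since $\epsilon\in X_0$ gives $X_h\subseteq X_h\cdot X_h$. So once every word has a tree of height $\le H$, we get $X_H=\Gamma^*$ and stabilization.

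For the height bound, I would proceed by induction on the $\mathcal J$-class structure, following Colcombet's or Kufleitner's proof. First, the case of a group or a single $\mathcal{J}$-class. Consider words all of whose nonempty infixes are mapped into a fixed regular $\mathcal J$-class $J$, or, more coarsely, restrict attention to the $\mathcal{J}$-class of the whole word. Every prefix value of such a word lies in $J\cup$ above. Cut the word at positions according to the $\mathcal{R}$-class and $\mathcal{L}$-class of prefixes and suffixes. Inside a regular $\mathcal J$-class, a product $xy$ staying in $J$ with $x\mathcal{L}e$ and $y\mathcal{R}e$ behaves like a Rees matrix semigroup. Using Green's lemma, one groups consecutive blocks whose values land in the same $\mathcal H$-class of an idempotent, and so obtains idempotent nodes. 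What remains is a bounded number of binary combinations. For a general word, I would do a second induction on the length of the longest $<_{\mathcal J}$ chain. Split the word greedily into maximal blocks whose value stays strictly $\mathcal J$-above a given class; these blocks are handled by the induction hypothesis. The blocks are then glued by the single-$\mathcal{J}$-class argument, which costs an additive constant in height per level.

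Alternatively, the shorter route uses the Ramseyan split of \cite{bojanczyk}. Label each position $i$ with the value $\beta(w_1\cdots w_i)$. Define split levels via the $\mathcal J$-order, so that between two positions of equal level with no higher level in between, the infix value is idempotent-compatible. Then build the tree level by level, with at most $|M|$ levels, each costing a constant. This yields the bound $H\le 3|M|$.

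The main obstacle is the single-$\mathcal J$-class step: proving that consecutive segments whose products remain in $J$ can be arranged under one idempotent node. Concretely, one must show that if $u,v$ both lie in $J$ and $uv\in J$, then the $\mathcal{H}$-class of $uv$ is determined by the $\mathcal{R}$-class of $u$ and the $\mathcal{L}$-class of $v$. Here I would use the standard facts $uv\in J\Rightarrow uv\,\mathcal R\,u$ and $uv\,\mathcal L\,v$, valid in finite monoids. Combined with the fact that an $\mathcal H$-class containing an idempotent is a group, this lets group elements be absorbed by taking powers, using $x^{|M|!}$ idempotent. Since the paper only needs the statement as a black box (it is cited from \cite{simon}), I would in fact refer to \cite{simon,bojanczyk} for this step rather than reprove it.
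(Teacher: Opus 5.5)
Your proposal matches the paper, which gives no proof of its own and simply invokes \cite{simon,bojanczyk}. Your translation from bounded-height factorization trees to the levels $X_h$ is correct and is exactly the bridge that citation requires: monotonicity comes from $\epsilon\in X_h$, binary nodes land in $X_h\cdot X_h$, and idempotent nodes land in $(X_h\cap\beta^{-1}(e))^*$.

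Be aware that your optional self-contained sketch would not stand on its own, so the citation is genuinely carrying the height bound. Blocks whose values merely lie in the same group $\mathcal H$-class cannot be placed under a single idempotent node, since all children there must have the same idempotent value. ``Taking powers'' is also unavailable, because the word is fixed. The group case is instead handled by cutting at positions with equal prefix values, so that the intervening blocks evaluate to the identity, and then inducting on the number of distinct prefix values.
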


Thus $\Gamma^*$ is stratified into $H$ levels, on level 0 there are only letters; and every word of level $i+1$ can be split into two arbitrary words of level $i$, or into many words of level $i$ all associated with a same idempotent element of $M$.

\subsection{$\varepsilon$-capacity and $\varepsilon$-entropy }
These two closely-related notions from  \cite{kolmoEpsilon} characterize the quantity of information needed to describe  (with precision $\varepsilon$) any element of a given space.

We first recall a couple of definitions:
\begin{definition}\label{def:sep:net}
Let $(X,d)$ be a metric space, subspace of some $Y$. A subset $M\subseteq X$ is called $\varepsilon$-\emph{separated} whenever $\forall x\neq y\in M : d(x,y)>\varepsilon$. 
A set $N\subseteq Y$ is an $\varepsilon$-\emph{net} for $X$ whenever $\forall x\in X \exists y\in N: d(x,y)\leq \varepsilon$.
\end{definition}
For a compact space $X$ and any $\varepsilon >0$, a finite $\varepsilon$-net must exist within $X$ itself, and cardinalities of $\varepsilon$-separated sets are bounded, which justifies the following definition.
\begin{definition}[Kolmogorov \& Tikhomirov]\label{def:ent:cap}
Given a compact metric space $X\subseteq Y$, let $\mathcal{M}_\varepsilon(X)$ be the maximal size of an $\varepsilon$-separated subset of $X$. Then the \emph{$\varepsilon$-capacity} of $X$ is  defined as 
$
\capa_\varepsilon(X)= \log \mathcal{M}_\varepsilon(X).
$
Let now $\mathcal{N}_\varepsilon(X)$ be the minimal size of an $\varepsilon$-net for $X$ (in $Y$). Then the \emph{$\varepsilon$-entropy} of $X$ (in $Y$) is
$
\ent^Y_\varepsilon(X)= \log \mathcal{N}_\varepsilon(X).
$
\end{definition}
As shown in \cite{kolmoEpsilon},  $\varepsilon$-entropy and $\varepsilon$-capacity are related by inequalities
\begin{equation}\label{prop:inequality}
\ent^Y_\varepsilon(X)\leq \capa_\varepsilon(X)\leq \ent^Y_{\varepsilon/2}(X).    
\end{equation}

\subsection{Optimal reward-to-cost ratio in finite graphs}\label{sec:alive}
Consider now a finite directed graph where to each edge $e$ is associated its \emph{reward} $r(e)$ and \emph{cost} $c(e)$. For a path $\pi$ its reward $r(\pi)$ is the sum of rewards of its edges, similarly for costs.  A classical optimization problem is maximizing the average reward-to-cost ratio $r(\pi)/c(\pi)$ over long or infinite paths. This ratio can be attained by iterating some optimal simple cycle. We summarize useful facts in a lemma, and  refer the reader to \cite{ratio} for a survey of exact and approximated algorithms for finding the optimal cycle and/or computing the best ratio.

\begin{lemma}\label{lem:rewcost}
   Given a finite directed graph with non-negative rewards and costs associated with edges, 
   let $\alpha$ be the maximal reward-to-cost ratio over all simple cycles. Then every cycle $\pi$ satisfies  
$
r(\pi)\leq\alpha c(\pi). 
$
Given a graph with $0/1$ rewards and costs,  $\alpha$ (as a rational number) is \FNL-computable\footnote{computable by a nondeterministic Turing machine in logarithmic space}.
\end{lemma}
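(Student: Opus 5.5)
Two claims need proof: every cycle satisfies $r(\pi)\le\alpha c(\pi)$, and $\alpha$ is \FNL-computable for $0/1$ weights. For the first, I would decompose the cycle into simple cycles. Any closed walk $\pi$ in a finite graph can be written, as a multiset of edges, as the union of simple cycles $\sigma_1,\dots,\sigma_k$. The proof is by induction on length: if $\pi$ is not simple, some vertex repeats, so we cut out the inner closed sub-walk between two occurrences. This leaves two shorter closed walks whose edge multisets partition that of $\pi$. Since rewards and costs are additive over edges, $r(\pi)=\sum_i r(\sigma_i)$ and $c(\pi)=\sum_i c(\sigma_i)$. By definition of $\alpha$, $r(\sigma_i)\le\alpha c(\sigma_i)$ for every simple cycle. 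If $c(\sigma_i)=0$ and $r(\sigma_i)>0$, the ratio is $+\infty$, and I would adopt the convention $\alpha=\infty$ there, which makes the inequality trivial; if $r(\sigma_i)=c(\sigma_i)=0$ the inequality holds trivially. Summing over $i$ gives $r(\pi)\le\alpha c(\pi)$.

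For computability with $0/1$ weights, the candidate values are few. For a simple cycle on a graph with $n$ vertices, both reward and cost are integers in $[0,n]$. So $\alpha$ lies in the finite set $\{a/b : 0\le a,b\le n\}\cup\{\infty\}$, which has polynomially many elements, each of logarithmic size. For a fixed rational $a/b$, deciding whether some cycle has $r/c\ge a/b$ is equivalent to deciding whether some cycle has $b\,r-a\,c\ge 0$ with $(r,c)\neq(0,0)$. By the same decomposition argument, it suffices to find a simple cycle with $r=a'$ and $c=b'$ for given values $a',b'\le n$. This is done by a nondeterministic walk of at most $n$ steps that keeps on its tape the start vertex, the current vertex, a step counter, and counters for $r$ and $c$, using $O(\log n)$ space. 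The walk need not be simple: any closed walk with ratio at least $a/b$ contains, by the first part, a simple cycle of ratio at least $a/b$. So NL decides the existence of a closed walk of length at most $n$ with prescribed counter values.

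The function is then computed as follows: enumerate the pairs $(a',b')$ in decreasing order of ratio, and output the first one realized by some closed walk. Testing non-realizability of the larger ratios requires co-NL, which equals NL by Immerman–Szelepcsényi, so the overall computation is in \FNL. The main obstacle I expect is this last step: making the functional NL computation rigorous, including the handling of the degenerate $c=0$ cases and the use of NL $=$ co-NL to certify maximality. The decomposition inequality itself is routine.
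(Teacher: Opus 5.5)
Your proof is correct, and it takes a different route from the paper in the sense that the paper gives no proof at all. The paper states the lemma as a summary of known facts, refers to the survey \cite{ratio} for algorithms computing the optimal cycle ratio, and gives no argument for either the inequality or the space bound.

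Your route is self-contained and elementary:
\begin{itemize}
\item For the inequality, you decompose a closed walk into simple cycles and use additivity of rewards and costs.
\item For the complexity claim, you note that with $0/1$ weights $\alpha$ lies in a polynomial-size set of candidates $a'/b'$ with $a',b'\le n$.
\item Each candidate is tested by a nondeterministic walk with $O(\log n)$-bit counters.
\item Maximality is certified via NL $=$ co-NL (Immerman--Szelepcs\'enyi).
\end{itemize}
The citation points to algorithms for general cycle-ratio optimization. Your small-candidate-set argument is what actually supplies the logarithmic-space bound for $0/1$ weights.

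One point to tidy up. You first say it suffices to find a simple cycle with $r=a'$ and $c=b'$, but your NL test actually decides something weaker: whether some closed walk of length at most $n$ has counters exactly $(a',b')$. That is harmless. Correctness of ``output the first realized pair in decreasing ratio order'' needs only two implications:
\begin{itemize}
\item a simple cycle with values $(a',b')$ is such a walk;
\item such a walk contains a simple cycle of ratio at least $a'/b'$, by your decomposition.
\end{itemize}
Together these pin the output ratio to exactly $\alpha$. State it that way rather than as a search for simple cycles.

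You should also fix conventions for the degenerate cases, which the lemma itself glosses over:
\begin{itemize}
\item $\infty\cdot 0$, when $\alpha=\infty$ and $c(\pi)=0$;
\item graphs with no cycles, or whose only cycles have $r=c=0$, where your enumeration finds no realized pair.
\end{itemize}
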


%% file: band.tex
\subsection{Pseudo-distance on timed words}\label{sec:pseudo}
We are using the proximity measure on timed words introduced in \cite{distance}.
\begin{definition}\label{def:d}
The \emph{pseudo-distance} $d$ between timed words is defined as follows: given 
$w = (a_1,t_1)\dots(a_n,t_n) \text{ and } v = (b_1,s_1)\dots(b_m,s_m)$, 
\[
 \dr(w,v)\triangleq \max_{i\in 1..n}\min_{j\in 1..m} \{ |t_i-s_j|:a_i=b_j \}; 
 d(w,v) \triangleq  \max( \dr(w,v) , \dr(v,w) )
\]
 (with the standard convention $\min \emptyset = \infty$).
\end{definition}

This pseudo-distance allows a meaningful comparison of timed words with a different number of events. It is illustrated on \cref{fig:dist}.
Intuitively, two words are close to each other when they cannot be distinguished by an observer that reads the discrete letters of the word exactly 
(they can determine whether or not a letter has occurred), but with some imprecision w.r.t.~time. 
So, when two letters are very close to one another, the observer cannot determine which one came before the other, or not even how many times a letter was repeated within a short interval. 

\input{figures/distance.tex}

It is possible that $d$ fails to distinguish timed words, that is $d(w_1,w_2)=0$  but $w_1\neq w_2$; 
this is the case for $w_1=(a,1)(b,1)$ and $w_2=(b,1)(b,1)(a,1)$, that is why $d$ is  only a pseudo-distance.

\subsection{Bandwidth definition and problem statement}
\begin{definition}The $\varepsilon$-\emph{entropic bandwidth} and $\varepsilon$-\emph{capacitive bandwidth} of a timed language $L$ are defined respectively as $$\bandh_\varepsilon(L)=\limsup_{T\to\infty} \ent_\varepsilon(L_T)/T \text{ and }\bandc_\varepsilon(L)=\limsup_{T\to\infty} \capa_\varepsilon(L_T)/T.$$
\end{definition}
We remark that \cref{def:sep:net,def:ent:cap} were given for metric spaces, but they can be extended verbatim to timed words with our pseudo-distance $d$. Additionally, $\varepsilon$-nets (and thus  $\varepsilon$-entropy) are considered within the universal timed language $U$. 
We start by mentioning a few properties. First, it follows from \eqref{prop:inequality} that
\begin{equation}\label{eq:band:ineq}
\bandh_\varepsilon(L)\leq \bandc_\varepsilon(L)\leq \bandh_{\varepsilon/2}(L).    
\end{equation}
Second, the bandwidth is almost the same for $L_\pi$ and $\overline {L_\pi}$:
\begin{lemma}[\cite{3classes}]\label{lem:closed-semantics}
In an \RTA, for any  path $\pi$, clock vectors $x,y$, time bound $T>0$ and precision $\varepsilon>0$:
\begin{itemize}
 \item $\ent_\varepsilon(L_{\pi,T}(\x,\y)) = \ent_\varepsilon(\overline{L_{\pi,T}}(\x,\y))$;
 \item $\capa_\varepsilon(L_{\pi,T}(\x,\y)) \leq \capa_\varepsilon(\overline{L_{\pi,T}}(\x,\y))$;
 \item for any $0<\varepsilon'<\varepsilon$, $\capa_{\varepsilon}(\overline{L_{\pi,T}}(\x,\y))<\capa_{\varepsilon'}(L_{\pi,T}(\x,\y))$.
\end{itemize}
\end{lemma}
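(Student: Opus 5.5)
The plan is to show that the closed language $\overline{L_{\pi,T}}(\x,\y)$ is the topological closure of $L_{\pi,T}(\x,\y)$ with respect to the pseudo-distance $d$. All three items then follow from general facts about $\varepsilon$-nets and $\varepsilon$-separated sets under closure.

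\textbf{Step 1 (density).} In an \RTA, the set of delay vectors $(d_1,\dots,d_n)$ realizing runs of $\pi$ from $\x$ to $\y$ is a convex polyhedron $P$ defined by the constraints $S(q_i)$, the guards, and the resets. These are linear constraints on partial sums of delays. The closed semantics replaces strict inequalities by non-strict ones (and requires $d>0$, which in the closure only matters at the boundary), giving a polyhedron $\overline P$. Condition (3) of \cref{def:rs:bounded} ensures that every edge is non-empty from its whole starting region. I therefore expect $P$ to be non-empty whenever $\overline P$ is non-empty, and $\overline P$ to be the closure of $P$.

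For $\x,\y$ not themselves in the open regions, some care is needed. I would first try to restrict to pairs with $\Runs_\pi(\x,\y)\neq\emptyset$; otherwise the statement should be read with the union over $\x,\y$. In the union form one can perturb the endpoints inside the starting regions.

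The map from delay vectors to timed words is continuous: it sends $(d_i)$ to the timestamps $t_i=\sum_{j\le i}d_j$. Moreover $d(w,w')\le\max_i|t_i-t'_i|$ for words with the same untiming. The duration bound $\length\le T$ is closed and compatible with the approximation, since one can shrink delays slightly toward an interior point with smaller total. Hence every word of $\overline{L_{\pi,T}}(\x,\y)$ is a $d$-limit of words of $L_{\pi,T}(\x,\y)$.

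\textbf{Step 2 (entropy).} Since $L\subseteq\overline L$, any net for $\overline L$ is a net for $L$. Conversely, let $N$ be a finite $\varepsilon$-net of $L$ within $U$. For $\bar w\in\overline L$, take $w_k\to\bar w$ with $w_k\in L$. Each $w_k$ is $\varepsilon$-close to some element of the finite set $N$, so some fixed $v\in N$ is $\varepsilon$-close to infinitely many $w_k$. The function $d(v,\cdot)$ is continuous, being a max of mins of $|t_i-s_j|$, so $d(v,\bar w)\le\varepsilon$. The non-strict inequality in \cref{def:sep:net} is what makes this work, and it gives the equality of $\ent_\varepsilon$.

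\textbf{Step 3 (capacity).} The inequality $\capa_\varepsilon(L)\le\capa_\varepsilon(\overline L)$ is immediate from the inclusion. For the strict version, take a maximal $\varepsilon$-separated set $\{\bar w_1,\dots,\bar w_m\}\subseteq\overline L$. Its pairwise distances are $>\varepsilon$, and by density we can pick $w_i\in L$ with $d(w_i,\bar w_i)<(\varepsilon-\varepsilon')/2$. By the triangle inequality for $d$, the $w_i$ are pairwise more than $\varepsilon'$ apart, so $\mathcal M_{\varepsilon'}(L)\ge\mathcal M_\varepsilon(\overline L)$. This gives $\le$; the stated strict inequality presumably needs an extra argument, for instance adding one more point using the slack. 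I would accept $\le$ if strictness turns out not to be essential later.

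The main obstacle I expect is Step 1: checking carefully, from the region-split properties and at fixed endpoints $\x,\y$, that the closed polyhedron is exactly the closure of the open one. This includes handling zero delays allowed in the closure but excluded by $d>0$.
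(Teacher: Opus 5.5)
\textbf{There is a genuine gap in Step~1, and it is a real failure rather than a missing verification.} The overall plan is the natural one: show $d$-density, then use the non-strict $\leq\varepsilon$ in the definition of nets for item~1 and perturb separated sets for item~3. Steps~2 and~3 are correct once density is available, and the triangle inequality you use does hold, because for each letter $d$ is the Hausdorff distance between the sets of its timestamps. The problem is Step~1. The expectation ``$\overline P\neq\emptyset\Rightarrow P\neq\emptyset$'' is false. Condition~(3) of the region-split definition only controls edges taken from a whole region with free target; it says nothing once $\y$ and $T$ are fixed.

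\emph{First counterexample (duration bound).} Take one clock $c$, $S(p)=\{0<c<1\}$, and one edge with guard $1<c<2$ and reset $\{c\}$ to $q$ with $S(q)=\{c=0\}$. For $\x=1/2$, $\y=0$, $T=1/2$, the open delays lie in $(1/2,3/2)$, so $L_{\pi,T}(\x,\y)=\emptyset$. Meanwhile $\overline{L_{\pi,T}}(\x,\y)=\{(a,1/2)\}$. So $\ent_\varepsilon$ is $\log 0$ on one side and $0$ on the other, and item~3 fails as well. This also shows that restricting to $\Runs_\pi(\x,\y)\neq\emptyset$ is not enough. It shows too that ``shrinking toward an interior point with smaller total'' may have no such point available.

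\emph{Second counterexample (fixed endpoints, no time bound involved).} Take the cycle $a\,c\,b$ of $\aut_2$ in region-split form. The open semantics forces $x_0+d_1+d_2<1$, so the value of clock $x$ on re-entering $q$ is $1-d_1-d_2>x_0$. Hence $\Runs_\pi(\x,\x)=\emptyset$ for every $\x$ in the open starting region. The closed semantics allows $x_0+d_1+d_2=1$ and yields a one-parameter family of runs from $\x$ to $\x$.

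In such cases the stated equality is simply false, so no argument can close the gap.

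\textbf{The repair is a hypothesis, not a cleverer argument.} Assume $L_{\pi,T}(\x,\y)\neq\emptyset$; this is harmless, since the lemma is only used through bandwidth estimates. Then density takes two lines. For fixed $\x$, the admissible delay vectors are the solutions of a finite system of three kinds of constraints:
\begin{itemize}
\item linear equalities (the final valuation equals $\y$);
\item non-strict inequalities (including $d_i\geq 0$ and $\sum_i d_i\leq T$);
\item strict inequalities (from regions and guards).
\end{itemize}
The closed semantics is exactly the relaxation of this system. If $d^0$ solves the mixed system and $\bar d$ solves the relaxed one, then $(1-s)\bar d+s d^0$ solves the mixed system for every $s\in(0,1]$.

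Also, the inclusion $L\subseteq\overline L$ that you use in Steps~2 and~3 requires the closed semantics to allow $d\geq 0$. The paper's ``$d>0$'' must be read this way; otherwise a single edge with guard $c=0$ from $S(p)=\{c=0\}$ already breaks item~2.

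\textbf{Do not look for an extra argument giving strictness in item~3: the strict inequality is false.} Take a single edge with guard $0<c<1$ and reset $\{c\}$ between two locations with starting region $\{c=0\}$. Then $L_{\pi,1}(0,0)=\{(a,t)\mid 0<t<1\}$ and its closure uses $[0,1]$. With $\varepsilon=0.3$ and $\varepsilon'=0.299$, both $\mathcal M_\varepsilon$ of the closure and $\mathcal M_{\varepsilon'}$ of $L$ equal $4$. For a purely punctual path, both capacities are $0$. The non-strict inequality your perturbation argument proves is the correct statement, and it is all the bandwidth sandwich needs.
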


Third,  an upper bound on the bandwidth of timed languages holds, proved in different terms in \cite{distance}:
 \begin{proposition}[Maximal possible bandwidth]\label{prop:max}
 The bandwidth of the universal timed language $U_\Sigma$ satisfies $\bandc_\varepsilon(U_\Sigma)=\Theta(\frac{1}{\varepsilon})$.  Hence, for any timed language $L$, its bandwidth cannot be larger: $\bandc_\varepsilon(L)= O(\frac{1}{\varepsilon})$.
\end{proposition}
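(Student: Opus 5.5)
We need to show that $\bandc_\varepsilon(U_\Sigma)$ is bounded above and below by constants times $1/\varepsilon$; the statement about arbitrary $L$ then follows at once. Since $L_T\subseteq (U_\Sigma)_T$, any $\varepsilon$-separated subset of $L_T$ is also $\varepsilon$-separated in $(U_\Sigma)_T$, so $\capa_\varepsilon(L_T)\le\capa_\varepsilon((U_\Sigma)_T)$ for every $T$.

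\textbf{Upper bound.} I would build an explicit $\varepsilon/2$-net of $(U_\Sigma)_T$ in $U_\Sigma$ and then use \eqref{prop:inequality}, which gives $\capa_\varepsilon\le\ent_{\varepsilon/2}$. Cut $[0,T]$ into $N=\lceil T/\varepsilon\rceil$ intervals $I_1,\dots,I_N$ of length at most $\varepsilon$, and let $c_k$ be the midpoint of $I_k$.

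Take a word $w$ with $\length(w)\le T$. For each $k$, let $A_k\subseteq\Sigma$ be the set of letters occurring in $w$ at times in $I_k$. Let $v$ be the word that, for $k=1,\dots,N$ in order, emits every letter of $A_k$ at time $c_k$ (in some fixed order). Every event $(a,t)$ of $w$ has a matching event $(a,c_k)$ of $v$ with $|t-c_k|\le\varepsilon/2$, and the converse holds as well. Hence $d(w,v)\le\varepsilon/2$.

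The word $v$ is determined by the sequence $(A_k)_{k\le N}$, so there are at most $2^{|\Sigma|N}$ such words. This gives $\ent_{\varepsilon/2}((U_\Sigma)_T)\le |\Sigma|(T/\varepsilon+1)$. Dividing by $T$ and letting $T\to\infty$ yields $\bandc_\varepsilon(U_\Sigma)\le|\Sigma|/\varepsilon$.

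\textbf{Lower bound.} Fix a letter $a$ and place grid points $t_j=j\cdot 3\varepsilon$ in $[0,T]$; there are about $T/(3\varepsilon)$ of them. For each nonempty subset $J$ of grid points, let $w_J$ be the word with one event $a$ at each time $t_j$, $j\in J$.

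Take $J\neq J'$ and, say, $j\in J\setminus J'$. The event $(a,t_j)$ of $w_J$ has no $a$-event of $w_{J'}$ within distance $3\varepsilon$, because all points of $w_{J'}$ lie on the grid and differ from $t_j$ by at least $3\varepsilon$. Hence $d(w_J,w_{J'})\ge3\varepsilon>\varepsilon$, and these words form an $\varepsilon$-separated set of size about $2^{T/(3\varepsilon)}$. Dividing by $T$ gives $\bandc_\varepsilon(U_\Sigma)\ge 1/(3\varepsilon)-o(1)$, hence $\Theta(1/\varepsilon)$.

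\textbf{Main obstacle.} The only delicate step is checking that the pseudo-distance behaves well under collapsing many events into a single representative: repetitions and reorderings inside one interval $I_k$ must not increase $d$. This holds because $d$ only matches each letter occurrence to the nearest occurrence of the same letter, which is exactly why the projection onto the sets $A_k$ loses nothing beyond $\varepsilon/2$.
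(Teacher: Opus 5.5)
Your proof is correct. The paper does not prove this proposition itself; it only attributes it to \cite{distance}, where it is established ``in different terms''. What you give is therefore a self-contained elementary substitute rather than a reconstruction of an in-paper argument.

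\textbf{Upper bound.} You quantize $[0,T]$ into $\lceil T/\varepsilon\rceil$ cells and keep only the set of letters occurring in each cell. This gives at most $2^{|\Sigma|\lceil T/\varepsilon\rceil}$ net points.

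\textbf{Lower bound.} You take all subsets of a $3\varepsilon$-spaced grid of $a$-events. Any two distinct such words are at distance at least $3\varepsilon>\varepsilon$.

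Together these give the explicit constants $1/(3\varepsilon)\le\bandc_\varepsilon(U_\Sigma)\le|\Sigma|/\varepsilon$, which is all the $\Theta$ claim requires.

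Your upper-bound net is exactly the ``universal $\varepsilon$-net'' of log-cardinality $O((b_i-a_i)/\varepsilon)$ that the paper imports from \cite{distance} when bounding $\#\Net_{\varepsilon,T}$. So your construction also justifies that later step.

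One small simplification is available. You pass through $\capa_\varepsilon\le\ent_{\varepsilon/2}$, which tacitly uses the triangle inequality for $d$. That inequality does hold, since $d$ is the maximum over letters of Hausdorff distances between sets of occurrence times. You can avoid it entirely, though. Two words with the same cell signature $(A_k)_k$, with cells of length at most $\varepsilon$, are at distance at most $\varepsilon$, so they cannot both lie in an $\varepsilon$-separated set. This bounds $\mathcal{M}_\varepsilon((U_\Sigma)_T)$ by $2^{|\Sigma|\lceil T/\varepsilon\rceil}$ directly.
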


\begin{mainproblem}
Given a timed automaton $\aut$, compute the $\varepsilon$-bandwidths of its language asymptotically, as $\varepsilon\to 0$.
\end{mainproblem}
As stated in \cite{3classes} and discussed above, there are three classes of timed regular languages: meager with bandwidth $O(1)$, normal with bandwidth $\Theta(\log \frac1\varepsilon)$, and obese with the maximal possible bandwidth $\Theta(\frac1\varepsilon)$. In the rest of the paper, we compute the bandwidths of \textbf{normal} TA.  Precisely, given such a TA, we compute $\alpha$ such that both entropic and capacitive bandwidths are  $ \alpha\log\frac1\varepsilon
+o\left(\log\frac1\varepsilon\right)$. 

%% file: figures/distance.tex
\begin{figure}[t]
\begin{center}
{
\usetikzlibrary {arrows.meta,positioning} 
\begin{tikzpicture}
\draw (0,.8) -- (.7,.8) \aaa  --(1.8,.8) \bbb  -- (3,.8) \aaa-- (4,.8) \bbb-- (4.1,.8) \aaa --
(4.5,.8)node[anchor=west]{$u=(a,.7),(b,1.8), (a,3), (b,4), (a,4.1)$};

\draw (0,0) -- (.6,0) \aaa   -- (1,0) \aaa--(1.7,0) \bbb  -- (3,0) \aaa-- (4.1,0) \aaa-- (4.2,0)\bbb --
(4.5,0)node[anchor=west]{$v=(a,.6),(a,1), (b,1.7), (a,3), (a,4.1),(b,4.2)$};

\draw [dotted,thick, -{Stealth[length=2mm,width=1mm]}]
(1,0).. controls (1,.4)  ..(.7,.8);

\draw [dotted,thick, {Stealth[length=2mm,width=1mm]}-{Stealth[length=2mm,width=1mm]}]
(.6,0) edge (.7,.8)
(1.7,0) edge (1.8,.8)
(3,0) edge (3,.8)
(4.1,0) edge (4.1,.8)
(4,.8) .. controls (3.9,.5) and (4.3,.3) .. (4.2,0)
;
\end{tikzpicture}
}
    \vspace*{-5pt}
\end{center}
\caption{
Pseudo-distance between two timed words (dotted arrows represent matching letters).  
$\protect\dr(u,v)=0.2;\  \protect\dr(v,u)=0.3$, thus  %
$d(u,v)=0.3$.
    \vspace*{-10pt}
}\label{fig:dist}
\end{figure}

%% file: normal.tex
\subsection{Informal discussion \& three examples} Intuitively, the coefficient $\alpha$ in the bandwidth of a normal automaton corresponds to the number of ``degrees of freedom'' per time unit. A degree of freedom corresponds to a transition for which the duration can be freely chosen in some interval. A natural idea would be to assign the reward $1$ to each ``free'' transition and compute the maximal reward-per-time ratio as described in \cite{alive}.
For example, the automaton $\aut_1$ on \cref{fig:two} has one free transitions every $2$ time units, which yields the bandwidth  of $0.5\log (1/\varepsilon)$. 

Unfortunately, it is not always  easy  to  classify transitions into free and non-free ones as illustrate two other examples on \cref{fig:two}.

\input{figures/twonormal.tex}
 $\aut_2$ seems to have 2 free transitions along the cycle of duration $1$, but in fact, $c$ is very constrained (it occurs closer and closer to the preceding $b$), and only $a$ is free, we will show later that the bandwidth is only $\log (1/\varepsilon)$. However, in  $\aut_3$, thanks to the slow loop labeled by $e$, taken quite seldom, the two events $a,c$ become free and yield the bandwidth $2\log (1/\varepsilon)$.

To distinguish between free and constrained transitions, we refine the states of the automaton. In a refined state,
we commit to moving only in some fixed directions (parallel to  some region faces, referred to as \emph{active} ones). 
A transition is considered free when it can take a variable time  while respecting the commitment.  Additionally, for each set of active faces in a location, we require existence of a \emph{resetting cycle} which repositions arbitrarily the clocks in any direction parallel to {active faces}.

Technically speaking, we  will characterize the bandwidth coefficient $\alpha$ of a normal\footnote{Our analysis applies to any timed automaton. For a meager one it returns $0$, for an obese one $\infty$.}  
timed automaton $\aut$ as the maximal reward-to-cost ratio of a finite graph  $G$ obtained by refinement of locations of $\aut$.    Refined states keep information on ``active faces"  
of timed regions used in the run. Reward corresponds to freedom degrees, cost roughly corresponds to time.  Paths in $G$ represent conservative runs of $\aut$ staying at a fixed distance from the active faces, but we will prove that this is sufficient to get the full bandwidth, as non-conservative runs tend to quickly drop some ``potential energy" in an unrecoverable way. 


\subsection{Pre-processing the timed automaton} 
We start with a  simple bandwidth-preserving transformation of the TA considered. We refer the reader to \cite{obese} for correctness proof and intuitive justification

\begin{con}[Adding heartbeat, \cite{obese}]
Given a TA, we add a new clock $h$  and a new letter $b$; next,  we add to each transition the constraint $h\leq 1$, and  to each location $p$ a self-loop $p\trans{b,h=1,\{h\}} p$. 
\end{con}
 As stated in \cite[Lemma 9]{obese}, this transformation preserves the bandwidth (intuitively, beat events every time unit do not bring any information). 
 At the  next step, we transform our TA into the bounded \RTA\ form (the heartbeat clock serves as a bounded clock required in \cref{lem:2rta}). Following  \cite{obese}, we call this normal form \CTA\ (\emph{standard-form automata})\footnote{in \cite{obese} we made yet another bandwidth-preserving transformation, not used here}.

 

\subsection{Refining the automaton}
We recall that geometrically clock regions are simplices in $\real^X$.
In a simplex, any non-empty set of vertices $C$ represents a simplex face $\conv C$.
\begin{definition}\label{def:facelist}
    A \emph{recurrent face list} for location $q$ is a non-empty list $(\ell_1, \dots,  \ell_k)$ with  $\ell_i$ disjoint non-empty sets of vertices of region $S(q)$, satisfying the following property. There exists a cycle  $\sigma$ from $q$ to $q$, such that its orbit graph $\gamma(\sigma)$ is complete (a clique) on the vertices in each $\ell_i$. 
\end{definition}
By Puri's theory \cite{puri},  
each clock valuation on the face $\conv(\ell_i)$ is reachable from each other on the same face via a run along the path $\sigma$, which should be the same for all faces.   

\begin{con}\label{con:main}
The \emph{refinement} of  an \CTA\ $\aut$ is a bi-weighted finite graph  $G=(Q',\Delta',r,c)$ with
\begin{itemize}
    \item $Q'$ the set of couples $(q,\ell)$ with $q\in Q$, $\ell$ a recurrent face list for $q$;
    \item $\Delta'$ containing an edge  $\delta'=(p,\ell) \to (q,\ell')$ whenever $|\ell|=|\ell'|$, and $\Delta$ contains $\delta= p\trans{a,\guard/\reset}q$, with orbit graph $\gamma(\delta)$ containing for each $i\leq |\ell|$
\begin{itemize}
    \item an edge from each vertex of $\ell_i$ to some vertex of $\ell'_i$;
    \item an edge to each vertex of $\ell'_i$ from some vertex of $\ell_i$.
\end{itemize}
\item
The reward $r(\delta')$ is $1$ if for some $i\leq |\ell|$ exists $v\in \ell_i$ and some $l<u$ such that for any  delay $d\in (l,u)$ it holds that
$v+d\models \guard  \text{ and } (v+d)[\reset]\subseteq \conv(\ell'_i)$,
otherwise $r(\delta')=0$.
\item The cost $c(\delta')$ is $1$ whenever $\reset\ni h$ and $0$ otherwise.
\end{itemize}
\end{con}
Informally, the reward for a transition is $1$ whenever the transition is non-punctual at least on one recurrent face. The cost is $1$ whenever it is a heartbeat transition.


By \cref{lem:rewcost}, one can compute the optimal cycle in $G$ and its reward-to-cost ratio, a rational number. Now we can formulate the main result of this article.

\begin{thm}\label{thm1}
Given an \CTA\ $\aut$, let  $G$ be its refined graph, with $\alpha=\alpha(G)$ its maximal reward-to-cost ratio.
\begin{itemize}
    \item If $G$ does not contain cycles or if $\alpha=0$ then $\aut$ is meager;
    \item if $\alpha =\infty$, then $\aut$ is obese;
    \item otherwise it is normal with   
    $$\bandh_\varepsilon(L(\aut))= \alpha\log\frac1\varepsilon +o\left(\log\frac1\varepsilon\right) \text{ and } \bandc_\varepsilon(L(\aut))= \alpha\log\frac1\varepsilon +o\left(\log\frac1\varepsilon\right).$$
\end{itemize}
\end{thm}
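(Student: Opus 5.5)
The proof splits into a lower bound and an upper bound on the bandwidth. The meager and obese cases then follow as degenerate instances, combined with the trichotomy of \cite{3classes}. Throughout we work with the closed semantics: by \cref{lem:closed-semantics} and \eqref{eq:band:ineq}, entropic and capacitive bandwidths, open and closed, differ only by replacing $\varepsilon$ with $\varepsilon/2$ or some $\varepsilon'<\varepsilon$. Such a change only affects the $o(\log\frac1\varepsilon)$ term.

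\textbf{Lower bound.} Fix an optimal simple cycle $\theta$ of $G$ with ratio $\alpha=r(\theta)/c(\theta)$; its cost $c(\theta)$ equals the number of heartbeats, i.e.\ time units. Every refined state $(q,\ell)$ on $\theta$ has, by \cref{def:facelist}, a resetting cycle $\sigma$ whose orbit graph is a clique on each $\ell_i$. By Puri's theory, along $\sigma$ any point of $\conv(\ell_i)$ can be moved to any other point of the same face, simultaneously for all $i$.

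Build runs that iterate $\theta$ some $N$ times and then take $\sigma$ to reposition the clocks, and repeat. Choose $N$ fixed but large, so that $\sigma$ costs a negligible fraction of time. Within the iterations of $\theta$, keep clocks at distance at least $\eta>0$ from the boundaries of the active faces. This is possible by the edge conditions of \cref{con:main}: every vertex of $\ell_i$ maps into $\ell'_i$ and is hit from $\ell_i$, so the faces are preserved.

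Each reward-$1$ edge then admits a delay interval $(l,u)$ of length bounded below, up to an $\eta$-shrinking. Choosing its delay on an $\varepsilon$-grid of this interval gives $\Omega(1/\varepsilon)$ options per free transition. The drift caused by these choices is absorbed by the repositioning through $\sigma$, and the choices remain $\varepsilon$-separated in $d$, since the heartbeat letters anchor time. Over time $T$ this yields $\approx \alpha T\log\frac1\varepsilon$ bits, so $\bandc_\varepsilon\ge \alpha\log\frac1\varepsilon-o(\log\frac1\varepsilon)$. The same construction shows the automaton is obese when $\alpha=\infty$, i.e.\ when some cycle with reward has zero cost; this case is also consistent with \cite{3classes}.

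\textbf{Upper bound (main obstacle).} We must show $\ent_\varepsilon(L_T)\le(\alpha T+o(T))\log\frac1\varepsilon+O(T)$, covering all runs, not only conservative ones. The plan is to apply Simon's theorem (\cref{thm:simon}) to the morphism $\gamma$ into the finite orbit monoid, which factorizes every path into a tree of bounded height $H$. We then prove, by induction on the level, that any path $\pi$ satisfies
\[
\ent_\varepsilon(\overline{L_{\pi,T}})\le \bigl(\alpha\, c(\pi)+K\bigr)\log\tfrac1\varepsilon + O(|\pi|).
\]
Here $K$ depends only on the level. The statement is trivial for letters and composes additively for products.

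The key case is an idempotent iteration $\pi=\pi_1\cdots\pi_m$ with all $\gamma(\pi_j)=e$ and $ee=e$. Consider the cliques of $e$ restricted to recurrent components. These define a recurrent face list $\ell$, and each $\pi_j$ lifts to a path in $G$ between copies of $(q,\ell)$. The freedom of $\pi_j$ counted by rewards is then bounded by $\alpha c(\pi_j)$ using \cref{lem:rewcost}.

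The difficulty is the freedom spent on moving transversally to the active faces, i.e.\ toward lower-dimensional faces. We plan to bound it with a potential argument: each such move reduces the distance to a face, and there are only boundedly many scales of $\varepsilon$. Hence such non-conservative choices contribute only $O(\log\frac1\varepsilon)$ per level rather than per iteration, using the finiteness arguments of \cite{3classes}. Summing over the $H$ levels and over the $O(T)$ heartbeat blocks, the total is $\alpha T\log\frac1\varepsilon + o(T\log\frac1\varepsilon)$ for each fixed $\varepsilon$ as $T\to\infty$.

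The boundary cases follow. If $\alpha=0$ or $G$ has no cycles, every long path has reward $o(T)$, so the bandwidth is $o(\log\frac1\varepsilon)$. By the trichotomy of \cite{3classes}, the automaton is therefore meager.
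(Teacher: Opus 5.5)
\textbf{The lower bound is fine; the upper bound has a genuine gap.} Your lower bound is essentially the paper's: iterate an optimal cycle of $G$ a fixed large number of times, then use a resetting cycle $\sigma$ to return to a fixed point. One repair is needed there. $\varepsilon$-separation for $d$ must come from working inside a sub-polytope of timings with positive gap (\cref{lem:gap,lem:sep}), not from the heartbeats ``anchoring time''.

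\textbf{Your inductive hypothesis is false.} You claim $\ent_\varepsilon(\overline{L_{\pi,T}})\le(\alpha c(\pi)+K)\log\frac1\varepsilon+O(|\pi|)$ with $K$ depending only on the Simon level. This fails, and so does the claim that transversal freedom costs only $O(\log\frac1\varepsilon)$ per level. Take $\aut_2$, where $\alpha=1$, and $\pi=(acb)^n$.
\begin{itemize}
\item Let $y_k$ be the position of $c$ inside the $k$-th beat. The guard $x<1$ on $c$ amounts exactly to $y_k<y_{k-1}$.
\item Restrict $a$ to $(0,\frac14)$ and the $y_k$ to $(\frac12,\frac34)$. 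Grid choices with step $3\varepsilon$ are then pairwise more than $2\varepsilon$ apart in $d$.
\item Hence $\ent_\varepsilon(L_\pi)\ge\capa_{2\varepsilon}(L_\pi)\gtrsim n\log\frac1\varepsilon+\log\binom{\lfloor 1/(12\varepsilon)\rfloor}{n}$.
\item For $n\approx\varepsilon^{-1/2}$ the second term is about $\frac n2\log\frac1\varepsilon$, which exceeds $K\log\frac1\varepsilon+O(n)$.
\item For $n\gg1/\varepsilon$ (using non-increasing grid sequences) it is of order $\frac1\varepsilon\log(n\varepsilon)$. This is unbounded in $n$, though $o(n)$.
\end{itemize}
The underlying problem is twofold. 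The discretized lexicographic potential takes about $(1+1/\varepsilon)^{\#X+1}$ values, not boundedly many ``scales''. And the iterations at which it drops are themselves free.

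\textbf{What the paper does instead.} \Cref{lem:fac} discretizes $\vec\lambda$ at scale $\varepsilon$. Each idempotent iteration is cut into at most $b$ $\varepsilon$-conservative blocks, separated by at most $b$ lower-level factors, with $b=(1+1/\varepsilon)^{\#X+1}$. Recursing through the levels, every word has at most $f_\aut(\varepsilon)$ arbitrary chunks, of total duration $<f_\aut(\varepsilon)$. These chunks are covered by universal nets plus an enumeration of breakpoints, at cost about $f_\aut(\varepsilon)/\varepsilon+f_\aut(\varepsilon)\log(T/\varepsilon)$. That cost is astronomically large in $\varepsilon$ but $o(T)$, so it vanishes in $\limsup_T\frac1T$. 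The limit $T\to\infty$ must be taken before $\varepsilon\to0$. No per-path bound with an additive term of order $\log\frac1\varepsilon$ can hold.

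\textbf{Two further ingredients are missing.}
\begin{itemize}
\item Bounding the freedom of $\pi_j$ by the reward of its lift is valid only for runs that \emph{follow} the route. These are convex combinations of runs pinned to the active faces, and only for them is the delay of a reward-$0$ edge a Lipschitz function of the clock vector (\cref{lem:quantity}). Generic runs along $\pi_j$ are not of this form: in $\aut_2$ the $c$-edge has reward $0$ on the optimal route, yet it is non-punctual at every iteration. You therefore need two facts. First, $\varepsilon$-conservative runs are $K_0\varepsilon$-close to conservative ones (\cref{lem:eps:con}). Second, conservative runs follow routes (\cref{lem:realize}).
\item Zero-time cycles make paths of duration $\le T$ arbitrarily long. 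So $O(|\pi|)$ is not $O(T)$, and $L_T$ is an infinite union of path languages. Handling this requires removing $0$-loops by fluffing and counting proper routes (\cref{lem:pumping,lem:fluff,lem:count:routes}).
\end{itemize}

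Your meager case rests on the same upper bound and inherits the gap. Once $\bandh_\varepsilon\le\alpha\log\frac1\varepsilon+K_9$ is established properly, it gives $O(1)$ directly when $\alpha=0$.
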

A proof sketch is given in subsections 4.5 and 4.6, and full proofs can be found in the ArXiV version of the article. 
As for complexity, it can be estimated using \cref{lem:rewcost} and knowing that the \CTA-form and graph $G$ are exponential in the size of the original automaton.   
\begin{corollary}
Given a timed automaton, the coefficient $\alpha$ as in the theorem is a rational number computable in \FPSPACE. 
\end{corollary}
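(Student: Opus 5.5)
The statement is the corollary. The plan is to combine \cref{thm1} with the complexity of building the refined graph and with \cref{lem:rewcost}.

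\textbf{Step 1: reduce to a graph with $0/1$ weights.} First transform the input TA into \CTA\ form. Add the heartbeat clock, then apply \cref{lem:2rta}. The resulting automaton has the same clocks plus $h$, and exponentially many locations. Its regions, and hence its vertex sets $S(q)$, have polynomial size in the number of clocks. By \cref{thm1} the coefficient $\alpha$ is the maximal reward-to-cost ratio of the refined graph $G$. The rewards and costs of $G$ are $0/1$. Its simple cycles have length at most $|Q'|$, so $\alpha$ is a ratio of two integers bounded by $|Q'|$. Hence $\alpha$ is rational (or $\infty$, or $0$).

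\textbf{Step 2: show every piece of $G$ is decidable in polynomial space.} We never write $G$ down. A vertex $(q,\ell)$ of $G$ has a polynomial-size description: a location name of polynomial length, plus a list of disjoint subsets of the vertices of $S(q)$. We need the following checks to be in \PSPACE:
\begin{itemize}
\item whether $q$ is a location of the \CTA, and its starting region $S(q)$;
\item whether an edge $\delta$ exists and what its orbit graph $\gamma(\delta)$ is;
\item whether the reward condition holds for a pair of face lists;
\item whether the cost is $1$, i.e.\ whether $h\in\reset$.
\end{itemize}
All of these are local region computations and are polynomial once the location and edge are given.

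\textbf{Step 3 (main obstacle): decide recurrence of a face list.} We must decide whether $\ell$ is a recurrent face list for $q$, that is, whether some cycle $\sigma$ at $q$ has a $\gamma(\sigma)$ that is a clique on each $\ell_i$. The monoid $O$ of orbit graphs is exponential, but each of its elements has polynomial size. So we guess $\sigma$ edge by edge and maintain the current composed orbit graph $\gamma(\sigma)$ in polynomial space. The path length is bounded by $|O|$, which an exponential counter can track. Then Savitch's theorem gives \PSPACE.

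\textbf{Step 4: assemble.} Apply \cref{lem:rewcost}: $\alpha$ is \FNL-computable from an explicitly given graph. The same holds when $G$ is given by a \PSPACE\ oracle of exponential size, by the standard composition of logspace with succinct representation. This yields \FPSPACE\ overall. Concretely, we enumerate candidate ratios $r/c$ with $r,c\le|Q'|$. For each, we decide by nondeterministic path-guessing in polynomial space whether a cycle with $r(\pi)-\alpha' c(\pi)\ge0$ exists; a positive-weight cycle with cost $0$ corresponds to $\alpha=\infty$.
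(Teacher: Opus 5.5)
Your proposal is correct and follows the paper's own argument: the \CTA\ form and the refined graph $G$ are exponential in the input, so the nondeterministic logspace computation of \cref{lem:rewcost}, run on this succinctly given graph, uses polynomial space. You give more detail than the paper, notably the polynomial-space recurrence test that guesses $\sigma$ edge by edge while maintaining $\gamma(\sigma)$. Two harmless slips: deciding whether a candidate location belongs to the \CTA\ involves the (co-)reachability condition of \cref{def:rs:bounded}, so it is in \PSPACE{} rather than polynomial; and your ``concrete'' threshold test must ignore cycles with $c(\pi)=r(\pi)=0$ (which do occur, cf.\ $0$-loops), for instance by requiring $c(\pi)\geq 1$ once $\alpha=\infty$ has been excluded, since otherwise every candidate ratio passes.
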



\subsection{Examples of bandwidth computation}
We start by sketching the application of the algorithm above to three normal automata from \cref{fig:two}.

\input{figures/processing.tex}

For $\aut_1$, we first add a heartbeat (for simplicity we group together $b$ and $c$), the result is given on \cref{fig:processing},a. We have also explicitly drawn the starting regions of each location.

In this automaton, trivially, all starting regions are recurrent, which yields the cycle in the refined graph presented on \cref{fig:processing},b.  The reward is $1$ on the non-punctual $a$-transition, the cost is $1$ on both $b$-transitions, which yields the ratio $1/2$. Other cycles in the graph (not presented on the figure) correspond to the choice of one-point active face in $p$, which leads to $0$ reward everywhere. We conclude that $\alpha=1/2$, i.e.~$\bandc_\varepsilon(\aut_1)\approx 1/2\cdot \log(1/\varepsilon)$.  

In $\aut_2$, heartbeat is already present and we can build the refined graph directly. Unfortunately, the  starting regions are not recurrent and graph $G$ contains only active faces of smaller dimensions. An optimal cycle with $\alpha=1/1$ is shown in \cref{fig:processing},c. 

In $\aut_3$, the ``slow" loop $e$ makes the starting region of $q$ recurrent, hence the cycle $acb$ 
(which does not involve $e$!) provides the reward-to-cost ratio $\alpha=2/1$.

\subsection{Lower bound: coding with two cycles}
The proof idea for the lower bound is as follows. We consider the optimal cycle $\pi$ from $(q,\ell)$ to itself in the weighted graph $G$ with reward-to-cost ratio $\alpha$. We call region faces spawned by $\ell_i$ active faces. By construction, they are recurrent:  there exists a ``resetting'' cycle $\sigma$ as in \cref{def:facelist} which allows arbitrary movement of the clock vector within each face. We show that runs following $(\pi^\theta \sigma)^*$ for some big value of $\theta $ produce the bandwidth arbitrarily close to $\alpha$. To that aim, we iterate the cycle $\pi$ keeping the clock vector $\x$ on a cross-section of the region parallel to all active faces. We use all the free transitions (i.e., those with the reward $=1$) to encode $\approx\log(1/\varepsilon)$ bits of information by choosing an $\varepsilon$-discrete duration. At some moment, this could lead to a saturation, making further encoding impossible, like in $\aut_3$. At this moment, we use the cycle $\sigma$ to reset all clocks and restart. In the long run, we encode $\approx\alpha\log(1/\varepsilon)$ bits/time unit while evolving in $\pi$, and less when resetting clocks on $\sigma$. However, if $\varepsilon$ is small, $\theta $ can be chosen big, and the resulting bandwidth will be arbitrarily close to  $\alpha\log(1/\varepsilon)$.  
Technically speaking, the strategy described above is used to build a  large $\varepsilon$-separated set for words that are generated following  $(\pi^\theta \sigma)^*$. 




%% file: figures/twonormal.tex
\begin{figure}[t]
\begin{subfigure}[t]{0.2\textwidth}
    \centering
    \begin{tikzpicture}\small
    \node[state](q0)[initial] at (-1.5,0.5) {$q$};
    \node[state](p0)  at (0,0.5)  {$p$};
    \draw [post] (q0) edge [bend left, above] node {$a,0<x<1$} (p0);
    \draw [post] (p0) edge [bend left, below] node {$c,x=2/\{x\}$} (q0);
    \end{tikzpicture}
    \caption{$\aut_1$}
\end{subfigure}
 ~   
\begin{subfigure}[t]{0.3\textwidth}
    \centering
    \begin{tikzpicture}\small    
\node[state](q)[initial] at (1,0)  {$q$};
\node[state](p)  at (2.3,1)  {$p$};
\node[state](r)  at (4.3,0)  {$r$};

\draw [post] (q) edge [sloped,above] node {$a,x<1$} (p);
\draw [post] (p) edge [sloped, above] node {$c,x<1/\{x\}$} (r);
\draw [post]  (r) edge [sloped, below] node {$b,h=1/\{h\}$} (q);
    \end{tikzpicture}
    \caption{$\aut_2$}
\end{subfigure}
 ~   
 \begin{subfigure}[t]{0.3\textwidth}
    \centering
    \begin{tikzpicture}\small    
\node[state](q)[initial] at (1,0)  {$q$};
\node[state](p)  at (2.3,1)  {$p$};
\node[state](r)  at (4.3,0)  {$r$};

\draw [post] (q) edge [sloped,above] node {$a,x<1$} (p);
\draw [post] (p) edge [sloped, above] node {$c,x<1/\{x\}$} (r);
\draw [post]  (r) edge [sloped, below] node {$b,h=1/\{h\}$} (q);
\draw [post] (q) edge [loop above,min distance=1cm, in=100, out =150] node {$e,x>50/\{xh\}$} (q);
    \end{tikzpicture}
    \caption{$\aut_3$}
\end{subfigure}

    \caption{Three normal automata}
    \label{fig:two}
    \vspace*{-10pt}
\end{figure}

%% file: figures/processing.tex
    \begin{figure}[t]
\begin{subfigure}[t]{0.5\textwidth}
    \centering
     \begin{tikzpicture}
    \small
            \node[state](q)[initial] at (0,0) {$\bullet$};
    \node[state](p)  at (2.5,1)  {$ $};
    \node[state](s)  at (3,0)  {$\bullet$};
    \draw[very thick] (2.3,0.8) -- (2.7,1.2);
    \draw [post] (q) edge [above,sloped] node {$a,0<x<1$} (p);
    \draw [post] (p) edge [right] node {$b,h=1,\{h\}$} (s);
    \draw [post] (s) edge [below]  node{$h=1,x=2,\{hx\}$}  node[above,near start] {$bc$} (q);
        \end{tikzpicture}
        \caption{$\aut_1$ in \CTA\ form}
    \end{subfigure}
    ~
    \begin{subfigure}[t]{0.48\textwidth}
    \centering  
            \begin{tikzpicture}
    \small
            \node[state](q)[initial] at (0,0) {$\bullet$};
    \node[state](p)  at (2.5,1)  {$ $};
    \node[state](s)  at (3,0)  {$\bullet$};
    \draw[very thick] (2.3,0.8) -- (2.7,1.2);
    \draw [post] (q) edge [above,sloped] node {1/0} (p);
    \draw [post] (p) edge [right] node {0/1} (s);
    \draw [post] (s) edge [below] node {0/1} (q);
        \end{tikzpicture}
        \caption{$\aut_1$: optimal $1/2$ cycle in  refined graph}
    \end{subfigure}
    \\\\
    \begin{subfigure}[t]{0.5\textwidth}
    \centering
    \begin{tikzpicture}\small    
\node[state](q) at (1,0)  {$\bullet$};
\node[state](p)  at (2.3,1)  {$ $};
\draw[very thick] (2.1,0.8) -- (2.5,1.2);
\node[state](r)  at (4,0)  {$\bullet$};

\draw [post] (q) edge [sloped,above] node {1/0} (p);
\draw [post] (p) edge [sloped, above] node {0/0} (r);
\draw [post]  (r) edge [sloped, below] node {0/1} (q);
    \end{tikzpicture}
    \caption{$\aut_2$: optimal 1/1 cycle}
\end{subfigure}
 ~   
     \begin{subfigure}[t]{0.5\textwidth}
    \centering
    \begin{tikzpicture}\small 
    \node[state](q) at (1,0)  {$ $};
\draw[very thick] (0.8,0) -- (1.2,0);

\node[state](p)  at (2.3,1)  {$ $};
\draw[fill=lightgray] (2.1,0.8) -- (2.5,1.2) -- (2.5,0.8)-- cycle;
\node[state](r)  at (4,0)  {$ $};
\draw[very thick] (4,-0.2) -- (4,0.2);

\draw [post] (q) edge [sloped,above] node {1/0} (p);
\draw [post] (p) edge [sloped, above] node {1/0} (r);
\draw [post]  (r) edge [sloped, below] node {0/1} (q);

    \end{tikzpicture}
    \caption{$\aut_3$: optimal 2/1 cycle}
\end{subfigure}
        \caption{Computing the bandwidth of $\aut_1, \aut_2, \aut_3$ from \cref{fig:two}. Starting regions on (a) and active faces  on (b-d) are sketched within each location.}
        \label{fig:processing}
    \vspace*{-10pt}
    \end{figure}

%% file: sect-lower-bound.tex
Hereinafter, we consider an \CTA\ $\aut$ with a refinement graph $G$, and fix notations for their size parameters. So, let $N$ be the number of locations, $M$  a bound on its clock values, $K-1$ the number of clocks; $O$ the monoid of orbit graphs, and finally $H\leq 3 \#O$ the stabilization level in \cref{thm:simon} for this monoid. Later on we will introduce several other parameters $K_0,K_1,\dots $ depending only on $\aut$ and $G$ (and not on $\varepsilon$ and $T$).

\begin{restatable}{definition}{defRoute}
A \emph{route} $\pi,\pi'$ is a couple of paths $\pi=\delta_1\dots\delta_n$ in $\aut$ and $\pi'=\delta'_1\dots\delta'_n$ in $G$, where each $\delta'_i$ refines $\delta_i$ as described in \cref{con:main}. 
\end{restatable}

To prove the lower bound, we first relate the $\varepsilon$-capacity of a cycle $\pi$ in $\aut$ to the reward $r(\pi)$ and the cost $c(\pi)$ of its refinement in $G$. Roughly, along such a cycle in time $\approx c(\pi)$, one can do $r(\pi)$ choices of timing with $\approx 1/\varepsilon$ possibilities for each choice.  

The precise setting is as follows.
\begin{equation}\label{eq:set}
\text{\parbox{0.93\textwidth}{
For a given location $q$ and face list $\ell$ containing $k$ elements:
\vspace*{-7pt}
\begin{itemize}
    \item $\pi,\pi'$ is a cyclic route from $(q,\ell)$ to itself;
    \item  $\sigma$  is a cyclic path from $q$ to $q$ in $\aut$ such that all $\ell_i$ are cliques in $\gamma(\sigma)$. 
\end{itemize}
}}    \vspace*{-5pt}
\end{equation}

The following lemma formalizes the idea of iterating along the cyclic route $\pi,\pi'$ by producing an 
$\varepsilon$-separated set with the appropriate size:

\begin{restatable}{lemma}{lemSepOne}
\label{lem:sep:one}
In setting \eqref{eq:set},  there exist $\beta>0$ (depending on $\pi,\pi'$),  a polytope $R\subseteq S(q)$, and a point $\x^* \in R$,  such that for all $\varepsilon>0$ small enough there exists an $\varepsilon$-separated set  $\Sep\subset L^\aut_{\pi, t'} (\x^*,R)$ of cardinality $\geq \beta /\varepsilon^{r(\pi) }$. Here $t'=c(\pi)+1$. 
\end{restatable}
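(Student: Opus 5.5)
The plan is to build, for a single traversal of the route $\pi,\pi'$, a family of runs indexed by the grid points of a $r(\pi)$-dimensional box of delay choices with mesh $\approx\varepsilon$. The runs should start at a fixed $\x^*$ and end in a fixed polytope $R$, and distinct grid points should yield timed words at pseudo-distance $>\varepsilon$.

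\textbf{Choosing $\x^*$ and the nominal run.} Write $\pi=\delta_1\dots\delta_n$ and $\pi'=\delta'_1\dots\delta'_n$, and let $(q_j,\ell^{(j)})$ be the intermediate refined states. First I would fix a nominal run along $\pi$ in the closed semantics that stays in the active faces $\conv(\ell^{(j)}_i)$. The conditions of \cref{con:main} say that every vertex of $\ell^{(j-1)}_i$ has a $\gamma(\delta_j)$-edge to some vertex of $\ell^{(j)}_i$ and conversely. By the convexity of closed reachability relations (Puri), the relative interior of $\conv(\ell^{(j-1)}_i)$ is then mapped by $\delta_j$ onto points of $\conv(\ell^{(j)}_i)$. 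For each $j$ with $r(\delta'_j)=1$, the definition gives a face index $i_j$, a vertex $v\in\ell^{(j-1)}_{i_j}$ and an interval $(l_j,u_j)$ of delays landing in $\conv(\ell^{(j)}_{i_j})$. By continuity, the same holds for points of $\conv(\ell^{(j-1)}_{i_j})$ sufficiently close to $v$, with a slightly shrunken interval.

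I therefore take $\x^*$ to be a point of $S(q)$ close to the relevant vertices of the faces of $\ell$, chosen so that the nominal run passes near the witnessing vertices at each rewarded step. Since $\x^*\in S(q)$ lies in the open region, the run uses the real (not closed) semantics, and by \cref{lem:closed-semantics} it suffices to work with closed semantics if needed. This positioning is where I expect the \textbf{main obstacle}. The witnessing vertex $v$ can differ from one rewarded step to the next and from one face to another. One must show that a single starting point and small perturbations keep every rewarded transition free simultaneously. I would handle this by working inside a small neighbourhood of a point of the appropriate faces at each step, propagating neighbourhoods forward. The transfer maps restricted to the faces are affine and piecewise continuous, so a small enough ball (radius $\eta$, depending only on $\pi$) at each step is mapped inside the next admissible neighbourhood. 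The cost of this step is that only delay perturbations of size $\le\eta'$ are allowed, with $\eta'$ a constant depending on $\pi,\pi'$.

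\textbf{Counting the runs and bounding the duration.} At each rewarded step $j$ I would allow the delay $d_j=d_j^0+k_j\cdot 3\varepsilon$ with $0\le k_j<\eta'/(3\varepsilon)$, keeping non-rewarded delays determined (punctual, or chosen continuously so as to stay in the faces). This gives at least $(\eta'/(3\varepsilon)-1)^{r(\pi)}\ge\beta/\varepsilon^{r(\pi)}$ runs for $\varepsilon$ small. The endpoints lie in a compact set $R\subseteq S(q)$, obtained as the image of the small neighbourhoods, which is a polytope because the maps are piecewise affine. For the duration, the run passes through $c(\pi)$ heartbeat resets. Between them, and before the first one, at most one time unit elapses since $h\le1$ on all transitions. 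Hence the duration is at most $c(\pi)+1=t'$.

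\textbf{Separation.} Two distinct choices first differ at some rewarded step $j$, where the absolute event times differ by at least $3\varepsilon$. To conclude that $d(w,w')>\varepsilon$, I must prevent the event $(a_j,t_j)$ from being matched to another event with the same label within $\varepsilon$. To achieve this I would restrict to choices whose distinct same-label events stay $>3\varepsilon$ apart. Since only finitely many events ($n$) occur, the forbidden configurations cost only a constant factor, which is absorbed into $\beta$. A simpler alternative is to encode only in the first differing coordinate lexicographically while keeping the event times well separated.
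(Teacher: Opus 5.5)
Your proposal has two genuine gaps, and both sit exactly where the paper's proof does its real work.

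\textbf{Freedom at rewarded steps.} You flag the main obstacle yourself, but you do not resolve it, and the mechanism you propose cannot resolve it. When $k>1$ the clock vector of your run is not a point of the single face $\conv(\ell^{(j-1)}_{i_j})$; it is a mixture over the $k$ active faces. If $\x^*$ is taken in the open region, as you suggest, it also carries weight on vertices outside $\bigcup_i\ell_i$, so the run does not stay in the active faces at all. Continuity ``near the witnessing vertex $v$'' of one face therefore says nothing about the delay of your actual run. Moreover, requiring one nominal run to sit near a possibly different witness vertex at every rewarded step is precisely what ``propagating neighbourhoods'' presupposes rather than proves.

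The missing idea is convexity (\cref{lem:puri:conv,lem:puri:proj}): run each face separately and average. Any point in the relative interior of a face has positive barycentric weight on $v$. Mixing its run with the runs from $v$ with delays in $(l,u)$ therefore makes its delay range over an interval of positive length, with no proximity to $v$ needed.

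The paper does exactly this. It takes $\x^*_i$ the barycenter of $\ell_i$, $\x^*=\frac1k\sum_i\x^*_i$ and $R=\frac1k\sum_i\conv\ell_i$. It lets $P_i$ be the timings of closed runs along $\pi$ from $\x^*_i$ into $\conv\ell_i$, and sets $P=\frac1k\sum_iP_i$, a convex polytope of timings of runs from $\x^*$ into $R$. Around the average of runs through relative interiors of all active faces (these exist by Puri), it perturbs only the $i(j)$-th component at the $j$-th rewarded step. This yields $r(\pi)$ triangular, hence independent, directions, so $\dim P\geq r(\pi)$. The choice of $\x^*$ and $R$ also matters downstream: \cref{lem:return} needs $R$ inside the cross-section where $\sigma$ can reset. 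Your $R$, an image of neighbourhoods not even shown to contain $\x^*$, does not guarantee this.

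\textbf{Separation.} The separation step fails as stated, because $d$ matches each event with the nearest same-label event, not with the event at the same position. Take $w=(a,0)(a,0)(a,6\varepsilon)$ and $w'=(a,0)(a,6\varepsilon)(a,6\varepsilon)$. Both satisfy your restriction (distinct same-label timestamps are $6\varepsilon$ apart), and their timings differ by $6\varepsilon$, yet $d(w,w')=0$. Nothing in your construction excludes such pairs. For instance, let an $a$-event be followed by two rewarded, non-resetting $a$-transitions with nominal delay $0$, take $(k_j,k_{j+1})=(0,2)$ versus $(2,0)$, and make identical choices afterwards. If ``distinct'' is read instead as ``at different positions'', the restriction empties the family whenever the path forces two same-label events to coincide, which is not a constant-factor loss.

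What is needed is a uniform gap. Take the nominal timing in the relative interior of $P$ and pass to a sub-polytope in which every two consecutive events are either simultaneous throughout or at least $\mu>0$ apart (\cref{lem:gap}). For $\varepsilon<\mu/2$ the pseudo-distance between two words of the family then equals the $\infty$-distance of their timings (\cref{lem:compare}). Hence an $\infty$-separated grid of size $\beta/\varepsilon^{r(\pi)}$ in this $r(\pi)$-dimensional polytope is the required $\Sep$ (\cref{lem:sep}). Your duration bound $c(\pi)+1$ and the grid counting are fine once these two points are repaired.
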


The next lemma then formalizes the idea of ``resetting all clocks" on the cyclic path $\sigma$, 
which will allow the previous $\varepsilon$-separated set to be repeated again after resetting:  
\begin{restatable}{lemma}{lemReturn}
\label{lem:return}
In setting \eqref{eq:set}, for any $\y\in R$ there exists a run $\rho^y$  along $\sigma$ of duration $\leq c(\sigma)+1$ from $(q,\y)$ to $(q,\x^*)$.
\end{restatable}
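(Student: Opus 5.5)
The argument rests on Puri's reachability theory applied to the cycle $\sigma$, together with the way $R$ and $\x^*$ are chosen in \cref{lem:sep:one}. We may, and will, take $R$ and $\x^*$ inside the relative interior of a cross-section of $S(q)$ that is parallel to all active faces. Concretely, we fix barycentric weights. Every point of $R$ is then a convex combination $\sum_i \lambda_i \z_i + \sum_{v\notin\cup\ell_i}\mu_v v$, where $\z_i\in\conv(\ell_i)$ and the coefficients $\lambda_i,\mu_v$ are the same for all points of $R$; only the positions $\z_i$ vary within their faces. The point $\x^*$ has the same weights, with some chosen $\z_i^*\in\conv(\ell_i)$. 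We assume \cref{lem:sep:one} is proved with $R$ of exactly this form. If it is not, we first shrink $R$ to such a cross-section; this only reduces the separated set by a constant factor.

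The first step uses the fact that each $\ell_i$ is a clique in $\gamma(\sigma)$. By definition of the orbit graph, for any $u,v\in\ell_i$ we have $(u,v)\in\overline{\mathop{Reach}_\sigma}$. Closed reachability of a single path in an \RTA\ is a convex polyhedron, because it is defined by linear constraints on delays and clock values. Also, vertex-to-vertex reachability determines the relation on convex combinations, by Puri's theorem: $\overline{\mathop{Reach}_\sigma}$ restricted to $\bar S(q)\times\bar S(q)$ equals the convex hull of the orbit-graph edges, seen as pairs of vertices. So we choose, for each $i$, a doubly stochastic coupling between the barycentric coordinates of $\z_i$ and those of $\z^*_i$ within $\ell_i$. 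This is possible since $\ell_i$ is a clique: any distribution on $\ell_i$ can be transported to any other. For the inactive vertices $v$ we use a self-pair $(v,v')$ given by the orbit graph. Here we need that $\sigma$ maps each inactive vertex to some vertex; if it does not, we include them in the weight-preserving convex combination via any edges of $\gamma(\sigma)$ and choose $\x^*$ accordingly, that is, we define $\x^*$ as the image of the cross-section. Taking the matching convex combination of edge pairs gives $(\y,\x^*)\in\overline{\mathop{Reach}_\sigma}$.

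The second step passes from the closed semantics to a genuine run. Since $\y$ and $\x^*$ lie in the relative interior of the cross-section, the pair lies in the relative interior of the reachability polyhedron's corresponding face. Open and closed reachability agree there, by the standard relation between $\mathop{Reach}$ and $\overline{\mathop{Reach}}$ in an \RTA\ used in \cref{lem:closed-semantics}. This gives a run $\rho^y$ along $\sigma$ from $(q,\y)$ to $(q,\x^*)$.

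The third step bounds the duration. Because of the heartbeat clock $h\le1$, which is reset exactly on the $c(\sigma)$ heartbeat transitions of $\sigma$, the total duration of any run along $\sigma$ is at most $c(\sigma)$ plus the residual time before the first and after the last heartbeat. Both of these are bounded using $h\le1$, giving at most $c(\sigma)+1$. The main obstacle is the first step: ensuring that the target is the \emph{same} point $\x^*$ for every $\y\in R$. That is exactly why $R$ must be a constant-weight cross-section and $\x^*$ must be chosen in its interior. Boundary effects, where closed runs are not real runs, are handled by keeping $R$ at positive distance from the relative boundary.
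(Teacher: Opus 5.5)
Your Steps~1 and~3 coincide with the paper's proof once the inactive vertices carry weight $0$, which is the paper's choice $R=\frac1k\sum_i\conv\ell_i$, $\x^*=\frac1k\sum_i\x^*_i$. Since each $\ell_i$ is a clique of $\gamma(\sigma)$, Puri's theory gives a closed-semantics run along $\sigma$ from $\y_i$ to $\x^*_i$ inside each face. The average of these runs is a run from $\y$ to $\x^*$ (\cref{lem:puri:conv}), and \cref{lem:heart} bounds its duration by $c(\sigma)+1$. What you call a doubly stochastic coupling should just be a coupling, i.e.\ a row-stochastic matrix as in \cref{lem:puri:reach}. The genuine problem is what you add on top of this in order to get a run in the open semantics.

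\textbf{Step~2 is false.} \Cref{lem:closed-semantics} cannot support it: it compares $\varepsilon$-entropies and capacities of languages and says nothing about pointwise reachability.

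\emph{A counterexample.} Take $\aut_2$ of \cref{fig:two}. Its region $S(q)=\{h=0,\ 0<x<1\}$ has vertices $v_0=(x,h)=(0,0)$ and $v_1=(1,0)$, and both have self-loops in $\gamma(acb)$. So $(\{v_0\},\{v_1\})$ is a recurrent face list, and a cyclic route of $G$ passes through it. Here $R=\{\x^*\}$ with $\x^*=(1/2,0)$, which lies in the open region and in the relative interior of your cross-section. A closed run along $acb$ returns to $\x^*$ (fire $c$ exactly at $x=1$). However, every genuine run along $acb$ maps $x_0$ to $1-d>x_0$, where $d$ is the delay from $q$ to $c$ and $x_0+d<1$. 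So no genuine run ever returns to $\x^*$.

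Moreover, whenever the active faces miss a vertex of $S(q)$, $R$ lies on the boundary of $S(q)$, and there are no genuine runs from $\y$ at all.

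\textbf{Your remedy with positive inactive weights also fails.} An inactive vertex may be transient in $\gamma(\sigma)$ and leak its weight (cf.\ \cref{lem:diag}). Redefining $\x^*$ as ``the image'' then breaks the requirement that $\x^*$ be both the common source of the $\pi$-runs of \cref{lem:sep:one} and the common target here. Your claim that shrinking $R$ costs only a constant factor is also unproved.

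\textbf{Fix.} Drop Step~2 and the inactive weights. The lemma is meant in the closed semantics, like \cref{lem:sep:one}. The passage from $\overline{L}$ to $L(\aut)$ is made at the level of $\varepsilon$-capacities, via the third item of \cref{lem:closed-semantics}, not pointwise.
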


Combining the two previous lemmas, we obtain 
\begin{restatable}{corollary}{corSigmaPi}\label{cor:one}
In setting \eqref{eq:set}, there exist $\beta>0$, and a point $\x^* \in R$,  such that for all $\varepsilon>0$ small enough there exists an $\varepsilon$-separated set $\Sep'$ in $L^\aut_{\pi\sigma, t''} (\x^*,\x^*)$ of cardinality $\geq \beta /\varepsilon^{r(\pi)} $, where $t''=c(\pi)+c(\sigma)+2$.
\end{restatable}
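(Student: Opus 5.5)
We obtain the corollary by concatenating, for each word of the separated set of \cref{lem:sep:one}, the returning run provided by \cref{lem:return}. Take $\beta$, $R$ and $\x^*$ from \cref{lem:sep:one}, and for $\varepsilon$ small enough let $\Sep\subset L^\aut_{\pi,t'}(\x^*,R)$ be the $\varepsilon$-separated set it provides, with $\#\Sep\geq\beta/\varepsilon^{r(\pi)}$ and $t'=c(\pi)+1$. For every $w\in\Sep$ we fix one run $\rho_w$ along $\pi$ from $(q,\x^*)$ to some $(q,\y_w)$ with $\y_w\in R$, where $\Word(\rho_w)=w$ and $\length(\rho_w)\leq t'$. \cref{lem:return} then gives a run $\rho^{\y_w}$ along $\sigma$ from $(q,\y_w)$ to $(q,\x^*)$ of duration at most $c(\sigma)+1$. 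The two runs agree on the intermediate configuration $(q,\y_w)$, so their concatenation $\rho_w\rho^{\y_w}$ is a run along $\pi\sigma$ from $\x^*$ to $\x^*$. Its duration is at most $c(\pi)+c(\sigma)+2=t''$. We define the word $w'=\Word(\rho_w\rho^{\y_w})\in L^\aut_{\pi\sigma,t''}(\x^*,\x^*)$, and set $\Sep'=\{w'\mid w\in\Sep\}$.

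The only point that needs an argument is that $\Sep'$ is still $\varepsilon$-separated and has the same cardinality; this is the main obstacle. Appending suffixes could in principle glue two distinct words together under the pseudo-distance $d$: a letter of $w_1$ that is far from every same-labelled letter of $w_2$ might become close to a letter in the suffix appended to $w_2$. I would rule this out by a time-separation argument. Every letter of the suffix occurs at time at least $\length(w_2)$, and I would check in the proof of \cref{lem:sep:one} that the witnessing letters used for separation can be kept at least $\varepsilon$ before the end of the words. Alternatively, if the construction there makes all words of $\Sep$ end with the same fixed (e.g.\ heartbeat or punctual) structure, then the prefixes occupy a common time window and the suffix letters cannot come within $\varepsilon$ of the separating letters.

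If this is awkward, a more robust route is to argue by the separating letters. Consider the letter $(a,t)$ of $w_1$ realizing $\dr(w_1,w_2)>\varepsilon$, and make it one of the free, $\varepsilon$-discretized events that occur strictly inside the $\pi$-part, at distance more than $2\varepsilon$ from the end of the $\pi$-part. This is possible after shrinking $\beta$ and $\varepsilon$, for instance by losing a constant fraction of the choices, or by slightly lowering the last choice range. Suffix letters then start after the $\pi$-part ends, hence lie at distance more than $\varepsilon$ from $(a,t)$, so $d(w_1',w_2')>\varepsilon$. Distinct $w$ thus yield distinct, $\varepsilon$-separated $w'$, and $\#\Sep'=\#\Sep\geq\beta/\varepsilon^{r(\pi)}$, with possibly a smaller constant $\beta$. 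This proves the corollary.
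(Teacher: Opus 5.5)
\textbf{Gap in the separation step.} Your construction of $\Sep'$ is the paper's: append to each word of $\Sep$ the returning run $\rho^{\y}$ of \cref{lem:return} and add the duration bounds to get $t''$. The paper's proof asserts in one line that the resulting set is still $\varepsilon$-separated. You are right that this is not automatic. For instance, $(a,0)(a,0)$ and $(a,0)(a,2\varepsilon)$ are at distance $2\varepsilon$, yet after appending $(a,2\varepsilon)$ to both they are at distance $0$. However, your argument for this step, which you yourself call the main obstacle, does not go through.

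\begin{itemize}
\item You cannot ``make'' the letter realizing $\dr(w_1,w_2)>\varepsilon$ a free event: it is determined by the pair. Rebuilding \cref{lem:sep:one} around $\varepsilon$-discretized free events is a different, unproved construction. The non-free events' timings depend on the free choices, which is exactly what the gap-polytope argument of \cref{lem:sep} handles.
\item Your margin is measured against the wrong endpoint. The suffix of $w_2'$ starts at $\length(w_2)$, and durations vary across $\Sep$ because the final value of the heartbeat clock ranges over $R$. A letter lying $2\varepsilon$ before $\length(w_1)$ may lie after $\length(w_2)$. Concretely, if $w_1,w_2$ differ only in the last timestamp, with $t^1_n>t^2_n+\varepsilon$, the only witness of $\dr(w_1,w_2)>\varepsilon$ is the last letter of $w_1$. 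It sits beyond $\length(w_2)$ and can be matched by a suffix letter of $w_2'$. Separation survives only through the letter $(a_n,t^2_n)$, which lies more than $\varepsilon$ before the end of the \emph{other} word.
\item Your other two suggestions are conditional (``I would check'', ``if the construction makes all words end with \dots'') and are not established.
\end{itemize}

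\textbf{How to close it.} Use what \cref{lem:sep} actually provides. The words of $\Sep$ have timings in a polytope $P'$ with $\gap(P')=\mu>2\varepsilon$, and they are pairwise $\varepsilon$-separated in $\|\cdot\|_\infty$. Let $s_1,s_2$ be suffixes whose letters occur at times $\geq t^1_n$, resp.\ $\geq t^2_n$, and suppose $d(w_1s_1,w_2s_2)\le\varepsilon$. Rerun the induction from the proof of \cref{lem:compare} to show $|t^1_i-t^2_i|\le\varepsilon$ for all $i\le n$.

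\begin{itemize}
\item Base case: whichever word has the earlier first letter, every candidate match for that letter in the other extended word lies at time at least the other word's first timestamp. Hence $|t^1_1-t^2_1|\le\varepsilon$.
\item Inductive step, say $t^1_{i+1}<t^2_{i+1}$: a matching letter of $w_2s_2$ from the suffix lies at time $\ge t^2_n\ge t^2_{i+1}$. It is therefore handled exactly like the case $j\ge i+2$ there: $(a_{i+1},t^2_{i+1})$ is at least as close.
\item A match with $j\le i$ is excluded by the gap, since $t^1_{i+1}-t^2_j\ge\mu-\varepsilon>\varepsilon$.
\end{itemize}

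This yields $\|\w_1-\w_2\|_\infty\le\varepsilon$, contradicting $\infty$-separation. Hence $\Sep'$ is $\varepsilon$-separated with $\#\Sep'=\#\Sep$, and $\beta$ need not be shrunk.
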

We apply this to the cycle $\pi^\theta$ (with an appropriately selected large $\theta$), and
iterating $\approx T/t''$ times the set $\Sep'$ we obtain a large $\varepsilon$-separated set in $L_T(\aut)$, which allows us to prove the following:
\begin{restatable}[Lower bound]{proposition}{propLowerBound}
\label{lem:lower}  If $\alpha <\infty$, then for every $\varkappa>0$, and for $\varepsilon>0$ small enough, 
$\bandc_\varepsilon(L(\aut))\geq (\alpha-\varkappa)\log\frac1\varepsilon$.
\end{restatable}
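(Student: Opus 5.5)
We start from an optimal simple cycle $\pi'$ of $G$, from $(q,\ell)$ to itself, with $r(\pi')/c(\pi')=\alpha$; such a cycle exists by \cref{lem:rewcost}. Let $\pi$ be the underlying cycle in $\aut$. If $\alpha=0$ the claim is trivial, so assume $\alpha>0$; then $c(\pi)\ge 1$, since $\alpha<\infty$. Because $\ell$ is a recurrent face list, there is a cycle $\sigma$ from $q$ to $q$ whose orbit graph is a clique on each $\ell_i$, so setting \eqref{eq:set} holds. For an integer $\theta\ge1$, the pair $(\pi^\theta,\pi'^\theta)$ is again a cyclic route from $(q,\ell)$, with $r(\pi^\theta)=\theta r(\pi)$ and $c(\pi^\theta)=\theta c(\pi)$. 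Given $\varkappa>0$, we fix $\theta$ large enough that
\[
\frac{\theta r(\pi)}{\theta c(\pi)+c(\sigma)+2}\ge \alpha-\varkappa/2 .
\]
\Cref{cor:one} applied to $\pi^\theta,\sigma$ then gives $\beta>0$, a point $\x^*$, and, for all $\varepsilon$ small enough, an $\varepsilon$-separated set $\Sep'\subseteq L_{\pi^\theta\sigma,t''}(\x^*,\x^*)$ with $|\Sep'|\ge\beta\varepsilon^{-\theta r(\pi)}$ and $t''=\theta c(\pi)+c(\sigma)+2$.

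Next we concatenate. Since $q$ is reachable, fix a run from an initial state to $(q,\x^*)$ of duration $t_0$, and a run from $(q,\x^*)$ to a final state of duration $t_1$. Adjusting $\x^*$ inside $S(q)$ if needed, we use property (2) of \RTA\ so that both runs exist. For $m=\lfloor (T-t_0-t_1)/t''\rfloor$, consider the words obtained as the prefix word, followed by $w_1,\dots,w_m\in\Sep'$, each shifted so that it starts where the previous one ended, followed by the suffix word. Each such word lies in $L_T(\aut)$. The segment $w_j$ may end at any time up to $t''$, so we pad: we place block $j$ at offset $t_0+(j-1)t''$ and let the automaton wait at the end of each block. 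This requires a waiting run from $(q,\x^*)$ back to $(q,\x^*)$ of arbitrary short duration, which need not exist. To avoid this, we instead restrict $\Sep'$ to words whose durations all fall into one $\varepsilon$-slot. By pigeonhole over $\lceil t''/\varepsilon\rceil$ slots this costs a factor $O(1/\varepsilon)$ per block, which is too expensive. The better fix is to take the durations exactly equal. In the construction behind \cref{lem:sep:one,lem:return}, the return run along $\sigma$ can absorb the timing slack, since $\sigma$ repositions clocks freely on the active faces and heartbeat transitions have adjustable timing. We would strengthen \cref{cor:one} so that all words of $\Sep'$ have the same duration $t''$, at the price of replacing $\beta$ by a smaller constant.

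The main obstacle is separation of the concatenations. Take two distinct tuples and let $j$ be the first index where $w_j\neq w'_j$. Then $d(w_j,w'_j)>\varepsilon$, and some event of one block, say $(a,t)$ in $w_j$, has no same-letter event of $w'_j$ within $\varepsilon$. A nearby event could still come from an adjacent block. We would handle this by using separation only for events that are at distance more than $\varepsilon$ from the block boundaries, i.e.\ we would require $\Sep'$ to be separated by "interior" witnesses. This can be arranged in \cref{lem:sep:one}, because the separated choices are made on free transitions strictly inside $\pi^\theta$, whose events lie at times bounded away from $0$ and $t''$ by a constant independent of $\varepsilon$. Alternatively, one can separate $\Sep'$ at scale $3\varepsilon$ and lose only a constant factor.

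With this in place, the product set has cardinality at least $(\beta\varepsilon^{-\theta r(\pi)})^m$. Hence
\[
\capa_\varepsilon(L_T)\ge m\bigl(\theta r(\pi)\log\tfrac1\varepsilon+\log\beta\bigr).
\]
Dividing by $T$ and letting $T\to\infty$ gives
\[
\bandc_\varepsilon\ge\frac{\theta r(\pi)\log\frac1\varepsilon+\log\beta}{t''}\ge(\alpha-\varkappa/2)\log\tfrac1\varepsilon-\frac{|\log\beta|}{t''},
\]
and this is at least $(\alpha-\varkappa)\log\frac1\varepsilon$ once $\varepsilon$ is small enough.
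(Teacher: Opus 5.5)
Your argument is essentially the paper's: an optimal cycle $\pi$ plus a resetting cycle $\sigma$, \cref{cor:one} applied to $\pi^\theta$, about $T/t''$ copies of $\Sep'$ between a fixed prefix and suffix, and the same limit computation. The paper simply asserts that the concatenation is $\varepsilon$-separated, so your extra care at that step goes beyond it, but two of those extra remarks are off. First, the duration detour is unnecessary: every word of $\Sep'$ traces a run from $(q,\x^*)$ to $(q,\x^*)$ with exactly $\theta c(\pi)+c(\sigma)$ heartbeats, so by \cref{lem:heart} all of them have exactly that duration (heartbeat timing is forced by $h$, so nothing needs absorbing). Second, separating $\Sep'$ at scale $3\varepsilon$ does not help, because an event within $\varepsilon$ of a block boundary can be matched by an event of the neighbouring block however far apart the two blocks are in $d$; only your interior-witness repair works, and it requires shrinking the timing polytope (as in \cref{lem:gap}) so that all non-constant timestamps stay a fixed $\mu$ away from both block ends, which free transitions do not guarantee by themselves.
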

The following proposition complements the results from \cite{3classes}:
\begin{restatable}{proposition}{propAlphaInfty}
\label{lem:obese}
If $\alpha =\infty$, then $\aut$ is obese.
\end{restatable}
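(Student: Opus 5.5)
The plan is to get a super-logarithmic lower bound on the bandwidth and then apply the trichotomy of \cite{3classes}. By that classification every timed regular language is meager ($O(1)$), normal ($\Theta(\log\frac1\varepsilon)$) or obese ($\Theta(\frac1\varepsilon)$). So it suffices to show that $\bandc_\varepsilon(L(\aut))/\log\frac1\varepsilon$ is unbounded as $\varepsilon\to0$; this excludes the meager and normal cases. If needed, \eqref{eq:band:ineq} transfers the statement between the capacitive and entropic bandwidths.

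First I unfold $\alpha=\infty$. By \cref{lem:rewcost}, $\alpha$ is the maximum of $r/c$ over simple cycles of $G$. So $\alpha=\infty$ means there is a simple cycle $\pi'$ in $G$ from some $(q,\ell)$ to itself with $c(\pi')=0$ and $r(\pi')\geq 1$. Let $\pi$ be the underlying path of $\aut$, so that $\pi,\pi'$ is a cyclic route. Since $\ell$ is a recurrent face list for $q$, \cref{def:facelist} provides a cycle $\sigma$ at $q$ on which every $\ell_i$ is a clique. Then for every $n\geq 1$ the pair $\pi^n,\pi'^n$ together with $\sigma$ is an instance of setting \eqref{eq:set}. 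Its cost is $c(\pi^n)=0$ and its reward is $r(\pi^n)=n\,r(\pi)\geq n$.

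Next I fix $n$ and apply \cref{cor:one} to $\pi^n$ and $\sigma$. This gives $\beta_n>0$ and $\x^*_n$ such that, for all small enough $\varepsilon$, there is an $\varepsilon$-separated set $\Sep'_n\subseteq L_{\pi^n\sigma,t''}(\x^*_n,\x^*_n)$ of size $\geq\beta_n\varepsilon^{-n}$. Here $t''=c(\sigma)+2$ does not depend on $n$, because the cost of $\pi^n$ is zero. I then concatenate $\lfloor T/t''\rfloor$ words from $\Sep'_n$, each starting and ending at $(q,\x^*_n)$, and prefix and suffix a fixed run from an initial state and to a final state. These fixed runs exist by property (2) of \RTA. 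As in the proof of \cref{lem:lower}, the concatenation yields an $\varepsilon$-separated subset of $L_T(\aut)$ (up to an additive constant in $T$), of size at least $(\beta_n\varepsilon^{-n})^{\lfloor T/t''\rfloor}$. Dividing the logarithm by $T$ and letting $T\to\infty$ gives
\[
\bandc_\varepsilon(L(\aut))\geq \frac{n\log\frac1\varepsilon+\log\beta_n}{c(\sigma)+2}
\]
for all $\varepsilon<\varepsilon_n$. Since $n$ is arbitrary and $c(\sigma)$ is fixed, the ratio $\bandc_\varepsilon/\log\frac1\varepsilon$ is unbounded, so $\aut$ is neither meager nor normal, hence obese.

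The main obstacle is checking that the concatenation argument of \cref{lem:lower} really carries over. The separation of concatenated words must survive the pseudo-distance $d$, which matches letters anywhere in time. The lower-bound proof uses the heartbeat to localize words in time windows. Here the $\pi^n$ part has zero cost, so all of its events are squeezed into at most one heartbeat period. The separation within a block, however, is already provided by \cref{cor:one} inside a window of length $t''$, and distinct blocks occupy disjoint heartbeat windows. So I expect the argument of \cref{lem:lower} to apply verbatim with $\theta$ replaced by $n$. If it does not, I would check the only subtle point, namely that the dependence of $\beta_n$ and $\varepsilon_n$ on $n$ is harmless. It is, because $n$ is fixed before letting $\varepsilon\to0$.
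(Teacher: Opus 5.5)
Your proposal is correct and follows the paper's proof essentially verbatim. The paper likewise picks a zero-cost, positive-reward cycle $\pi$ of $G$ with a resetting cycle $\sigma$, applies \cref{cor:one} to $\pi^\theta$ with $\theta=A(c(\sigma)+2)$ (your $n$) and block length $t^*=c(\sigma)+2$, iterates the separated set to get $\bandc_\varepsilon(L(\aut))\geq (A-1)\log\frac1\varepsilon$, and concludes via the classification of \cite{3classes}; the paper also takes the concatenation step you flag over from the proof of \cref{lem:lower} without further justification.
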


%% file: sect-upper-bound.tex
\subsection{Upper bound: factorizing and approximating }
For the rest of the paper, given a path $\sigma$, we denote $\overline\sigma$ its \emph{closed semantics},
obtained by transforming each strict inequality into a non-strict one. 
Results from \cite{3classes} show that this transformation of a path does not modify its bandwidth. 

The proof of the upper bound proceeds in several steps. 
The first step uses \cref{thm:simon} (of Simon)
to factorize every run into conservative cycles (evolving in parallel to some active faces), and arbitrary intermediate lossy chunks (the number and total duration of the latter depend only on $\varepsilon$). The idea is that there is a Lyapunov function, a potential, which is preserved by conservative runs, but irremediably and quickly decreased by the lossy chunks.


By \cref{thm:simon}, each path in $\aut$ belongs to the level  $X_H$ of the sequence
\eqref{eq:simon}. 
Similarly, runs of $\aut$ belong to level $R_H$ of the sequence
$R_0=R_\Delta\cup\{\epsilon\}$, 
$R_{h+1}=R_h\cdot R_h\cup\bigcup_{\substack{e\in O\\ee=e}}(R_h\cap \Runs(\gamma^{-1}(e)))^*.$
Note that $H$ is polynomial in the size of the \CTA\ $\aut$. 

For a given idempotent orbit graph $e\in O$, let $C_i$ be the strongly-connected components (cliques or transient singletons) within its vertices; we enumerate them in a topological order w.r.t. the graph $e$. Next, we introduce a variant of barycentric coordinates \cite{bary}: 
any point  $\x\in S(q)$ can be represented as $\x=\sum_i\lambda_i \x_i$  with  $\x_i\in \conv (C_i)$, and $ \lambda_i\geq 0$ such that $\sum_i\lambda_i=1$;  the coordinates   $\vec\lambda(\x)=(\lambda_i)$ characterize the position of $\x$ with respect to active faces $\conv{C_i}$.
According to \cite{puri,entroJourn}, the coordinates   $\vec\lambda (\x)$ can only decrease (w.r.t. the lexicographic order $\succeq$) when traversing a run in $\Runs(\gamma^{-1}(e))$.

Then, the next ingredient of our reasoning are runs where the inequality $\vec\lambda (\x^1)\succeq \vec\lambda (\x^2)$ turns into equality (or is $\varepsilon$-close to equality). The next definition formalizes these:

\begin{restatable}{definition}{defConservativeAndEps}
Let  $\sigma\in \gamma^{-1}(e)$  be a cycle from $q$ to itself. A run 
from $(q,\x^1)$ to $(q,\x^2)$ along $\overline\sigma$ is 
a \emph{conservative} $e$-run 
whenever $\vec{\lambda}(\x^1)= \vec{\lambda}(\x^2)$. 
It is  $\varepsilon$-\emph{conservative} whenever $||\vec{\lambda}(\x^1) -\vec {\lambda}(\x^2)||<\varepsilon $.
\end{restatable}

Denote also $f_\aut$ the real-valued function with $f_\aut(\varepsilon) = (1+1/\varepsilon)^{KH}$.
The following result plays a central role in the construction of the desired $\varepsilon$-net:
\begin{restatable}{proposition}{simonDecompConserv}
\label{lem:fac}
For each $\varepsilon>0$, 
any timed word $w\in L(\aut)$  can be  factorized as $u_0v_1u_1\dots v_n u_n$ with 
$n\leq f_\aut(\varepsilon)$ and $\sum_{i=0}^n\length(u_i) < f_\aut(\varepsilon)$ 
and all  $v_i$ are traces of $\varepsilon$-conservative runs.
\end{restatable}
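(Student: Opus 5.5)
We prove \cref{lem:fac} by induction on the Simon level $h$ of the run: runs of level $h$ admit such a factorization with at most $g_h(\varepsilon)$ pieces $v_i$ and total duration of the lossy chunks $u_i$ below $g_h(\varepsilon)$. The target bound is $g_H\le f_\aut(\varepsilon)=(1+1/\varepsilon)^{KH}$. By \cref{thm:simon} every accepting run lies in $R_H$, and closed semantics only enlarges the set of runs (cf.~\cref{lem:closed-semantics}). It therefore suffices to work with runs along $\overline\sigma$, and the statement for words follows by taking traces.

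\textbf{Base and product steps.} At level $0$ a run is a single transition or empty. Since the \CTA\ has the heartbeat constraint $h\le 1$, every transition has duration at most $1$, so we may take it as a lossy chunk $u_0$ with $n=0$. For $R_{h+1}\ni\rho=\rho_1\rho_2$ with $\rho_j\in R_h$, we concatenate the two factorizations, merging adjacent lossy chunks. Both the number of conservative pieces and the lossy duration at most double. This factor $2$ per level is harmless, because it is absorbed by the $(1+1/\varepsilon)^K$ factor allotted to each level.

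\textbf{Idempotent step (main obstacle).} Let $\rho=\rho_1\cdots\rho_m$ with every $\rho_j\in R_h$ a run along a path of $\gamma^{-1}(e)$, $e$ idempotent. Each $\rho_j$, and each infix $\rho_j\cdots\rho_{j'}$, is then a cycle in $\gamma^{-1}(e)$ (the region is fixed because $e$ is idempotent). Write $\x^0,\dots,\x^m$ for the clock vectors at the junctions. By the result of \cite{puri,entroJourn} recalled above, $\vec\lambda(\x^0)\succeq\vec\lambda(\x^1)\succeq\dots\succeq\vec\lambda(\x^m)$ lexicographically. The barycentric vector has at most $K$ coordinates in $[0,1]$. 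We partition the junction indices into maximal blocks on which $\vec\lambda$ varies by less than $\varepsilon$: starting at index $j$, take the largest $j'$ with $\|\vec\lambda(\x^j)-\vec\lambda(\x^{j'})\|<\varepsilon$. The infix $\rho_{j+1}\cdots\rho_{j'}$ is then a single $\varepsilon$-conservative $e$-run and becomes one $v$-piece. The transition to the next block, namely the single factor $\rho_{j'+1}$, is handled by the induction hypothesis: it contributes at most $g_h(\varepsilon)$ pieces and lossy duration below $g_h(\varepsilon)$.

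The key claim is that the number of blocks is at most about $(1+1/\varepsilon)^K$. Monotonicity is only lexicographic, so this cannot be read off from the total drop in a single coordinate. Instead we argue coordinate by coordinate. Between two consecutive block boundaries some coordinate changes by at least about $\varepsilon/K$. Lexicographic decrease means that while the first $i-1$ coordinates stay fixed (up to $\varepsilon$), coordinate $i$ decreases monotonically. So the first coordinate can make at most $1/\varepsilon$ significant jumps. Between two such jumps the second coordinate can make at most $1/\varepsilon$ jumps, and so on, which gives a product bound $(1+1/\varepsilon)^K$.

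The delicate point is making this rigorous. Equality in the first coordinates holds only approximately, and a strict decrease of an early coordinate may be accompanied by an increase of later ones. I would discretize $\vec\lambda$ to an $\varepsilon$-grid and count lexicographically decreasing sequences of grid points up to ``$\varepsilon$-equivalence''; this gives at most $(1+1/\varepsilon)^K$ distinct cells visited in monotone order. I would verify directly from the Puri-type invariance that the cell sequence is nonincreasing.

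\textbf{Closing the recursion.} Combining the steps gives $g_{h+1}\le (1+1/\varepsilon)^K\,(g_h+1)+2g_h$. After adjusting constants, with $K$ counting the clocks plus the heartbeat and a small margin, this yields $g_H\le(1+1/\varepsilon)^{KH}$. The resulting bounds on $n$ and on $\sum_i\length(u_i)$ are independent of $\length(w)$, as required.
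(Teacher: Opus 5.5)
Your architecture matches the paper's: induction on the Simon level, doubling in the product case, and in the idempotent case cutting $\rho_1\cdots\rho_m$ into $\varepsilon$-conservative blocks joined by single level-$h$ factors handled by the induction hypothesis.

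The genuine gap is the step you flag as the crux. The bound on the number of blocks is never established, and the route you propose for it fails. The coordinate-by-coordinate count breaks down for the reason you give. The fallback, showing that the cell sequence $\lfloor\vec\lambda(\x^j)/\varepsilon\rfloor$ is lexicographically nonincreasing, is false. Take $\varepsilon=1/10$ and a step from $\vec\lambda=(0.59,0.19,0.22)$ to $(0.51,0.21,0.28)$. This step is lexicographically decreasing. It even satisfies the stronger Puri invariant that every partial sum $\lambda_1+\dots+\lambda_i$ is nonincreasing. Yet the cells go from $(5,1,2)$ to $(5,2,2)$, which is lexicographically \emph{up}. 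The paper's proof asserts this same monotonicity of $\lfloor\vec\lambda/\varepsilon\rfloor$ with an ``and thus'', so it cannot be imported from there either.

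No monotonicity is needed. Your own greedy rule already yields the count, provided ``largest $j'$'' ranges over \emph{all} later junctions, not only a contiguous stretch. Let $j_1<j_2<\cdots$ be the block starts, and let $j'_a$ be the largest index with $\|\vec\lambda(\x^{j_a})-\vec\lambda(\x^{j'_a})\|_\infty<\varepsilon$. Every later start $j_b>j'_a$ then satisfies $\|\vec\lambda(\x^{j_b})-\vec\lambda(\x^{j_a})\|_\infty\geq\varepsilon$. So the starts are pairwise $\varepsilon$-separated in $[0,1]^k$. Two points in one half-open cell $\prod_i[m_i\varepsilon,(m_i+1)\varepsilon)$ are at distance $<\varepsilon$, so there are at most $(\lfloor 1/\varepsilon\rfloor+1)^k\leq(1+1/\varepsilon)^K$ blocks.

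This closes the idempotent step, for two reasons. First, $\varepsilon$-conservativity constrains only the two endpoints, so intermediate junctions may wander. Second, every infix is again a cycle in $\gamma^{-1}(e)$ by idempotence. The Lyapunov monotonicity is not used at all. The paper's cell-based chunking is repaired the same way: from each junction, jump to the \emph{last} junction in the same cell, so each cell is used at most once.

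Separately, your closing recursion does not literally give $(1+1/\varepsilon)^{KH}$. Take the maximum over the two inductive cases rather than the sum, and track $n$ and the lossy duration separately. This gives $n_{h+1}\leq\max(2n_h,\,c+(c-1)n_h)$ with $c$ the number of cells, hence $n_h\leq c^h$. This part is cosmetic.
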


\begin{proof}[idea]
Denote  $b=(1+1/\varepsilon)^K$. 
We may prove, by induction on $h$, that each timed word which is a trace of $R_h$ can be factorized as 
$u_0v_1u_1\dots v_n u_n$
with $n\leq b^h $, $\sum\length(u_i)\leq b^h$, 
and all $v_i$ are traces of $\varepsilon$-conservative runs.


We give here the proof idea for the inductive case, the subcase of 
$w$ trace of a run $\rho=\rho_0\dots \rho_m$ with all paths $\rho_j\in (R_h\cap \Runs(\gamma^{-1}(e)))$, for some integer $m$.
In this situation, the hypothesis of $e$ being an idempotent, 
combined with results on Lyapunov functions on the orbit graph \cite{entroJourn}.
imply that, if $(q,\x_i)\Trans{\rho_i}(q,\x_{i+1})$, then 
$\lfloor\vec{\lambda}(\x_i)/\varepsilon\rfloor\succeq\lfloor\vec{\lambda}(\x_{i+1})/\varepsilon\rfloor$.

On the other hand, for each fixed $\varepsilon$, 
the quantities $\lfloor \vec{\lambda}(\x_i)/\varepsilon \rfloor$ may have at most $b=(1+1/\varepsilon)^{K}$ different values. 
But then $\rho$ can be re-decomposed as
\[
(q,\y_0)\Trans{\rho'_0}(q,\y'_0)\Trans{\rho''_0}(q,\y_1)\Trans{\rho'_1}(q,\y'_1)\cdots(q,\y_j)\Trans{\rho'_j}(q,\y'_j),
\]
such that $\y_0=\x_0$ and $\y'_j=\x_j$, $j\leq b$, 
$\lfloor\vec{\lambda}(\y_i)/\varepsilon\rfloor=\lfloor\vec{\lambda}(\y'_i)/\varepsilon\rfloor$ 
i.e., $\rho'_i$ is $\varepsilon$-conservative, and 
each of $\rho''_i$ belongs to  $R_h$. 
The induction hypothesis applied to $\rho''_i$ then allows us to conclude. \qed
\end{proof}
Thanks to the following proposition, instead of $\varepsilon$-conservative runs we can consider only conservative, with better properties:
\begin{restatable}{proposition}{epsilonCon}\label{lem:eps:con}There exists $K_0>0$ depending only on $\aut$, such that any  $\varepsilon$-\emph{conservative} $e$-run is $K_0\varepsilon$-close to a conservative $e$-run.
\end{restatable}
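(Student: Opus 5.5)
The plan is to keep the given $\varepsilon$-conservative run fixed and only modify its endpoints and delays by a small amount. The target is a conservative run whose endpoint coordinates $\vec\lambda$ coincide exactly.

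First I fix the geometry. The run $\rho$ goes from $(q,\x^1)$ to $(q,\x^2)$ along $\overline\sigma$, with $\sigma\in\gamma^{-1}(e)$ and $e$ idempotent. By Puri's theory \cite{puri}, the closed reachability relation $\overline{\mathop{Reach}_\sigma}$ is a convex polytope in $S(q)\times S(q)$. It is the convex hull of the pairs of vertices $(v,v')$ that form edges of $e=\gamma(\sigma)$, and it is described by finitely many linear constraints with integer coefficients bounded in terms of $\aut$. The map $\x\mapsto\vec\lambda(\x)$ is linear (barycentric coordinates grouped by the components $C_i$). The constraint $\vec\lambda(\x^1)=\vec\lambda(\x^2)$ therefore carves out a sub-polytope $P_0=\overline{\mathop{Reach}_\sigma}\cap\{\vec\lambda(\x)=\vec\lambda(\y)\}$. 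The goal is to show that $(\x^1,\x^2)$ lies within distance $K_0\varepsilon$ of $P_0$, and that the corresponding runs are close as well.

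Second comes the key step, a Hoffman-type error bound. Both $\overline{\mathop{Reach}_\sigma}$ and the hyperplanes $\vec\lambda(\x)=\vec\lambda(\y)$ are defined by linear systems. By Hoffman's lemma, the distance from a point of $\overline{\mathop{Reach}_\sigma}$ to $P_0$ is at most a constant times the violation $\|\vec\lambda(\x^1)-\vec\lambda(\x^2)\|<\varepsilon$, provided $P_0\neq\emptyset$. Nonemptiness holds because, for idempotent $e$, each clique $C_i$ has internal edges, so the diagonal of $\conv(C_i)$ is reachable. A cleaner alternative uses the decomposition of $(\x^1,\x^2)$ as a convex combination $\sum\mu_{vv'}(v,v')$ over edges of $e$. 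Mass flowing between distinct components $C_i\to C_j$ (with $i<j$ in topological order, by the lexicographic monotonicity from \cite{entroJourn}) is exactly what causes the $\lambda$-drop, and this mass is bounded by $O(\varepsilon)$. I would redistribute that mass onto intra-component edges $(v,v'')$ with $v,v''\in C_i$, which exist since $C_i$ is a clique or a singleton with a self-loop in an idempotent $e$. This changes the endpoints by $O(\varepsilon)\cdot\mathrm{diam}(S(q))$. The constant depends only on the finitely many orbit graphs and regions, so I take the maximum over all of them to get $K_0$.

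Third, I lift the endpoint closeness to runs. Puri's result gives that $\overline{\Runs_\sigma}$ depends linearly on endpoints, in the following sense. A run from a convex combination of endpoint pairs can be obtained as the same convex combination of runs, taking delays componentwise, since all constraints are convex and the path is fixed. So I write the original run $\rho$ as a convex combination of vertex-to-vertex runs. Replacing the $O(\varepsilon)$ mass of cross-component runs by intra-component runs yields a conservative run $\rho'$, and its delays differ from those of $\rho$ by $O(\varepsilon)$ times a bound on the total duration. The difficulty is that $|\sigma|$ is not bounded, so closeness of timed words does not follow naively. Here I would use the heartbeat: an event $b$ occurs every time unit in both runs, so any timestamp error is controlled locally within each unit window by the clock bound $M$. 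This gives a pointwise timestamp difference of $O(\varepsilon)$, hence $d\le K_0\varepsilon$.

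I expect the main obstacle to be this last step, namely getting a closeness constant independent of the length of $\sigma$. It requires showing that convex combinations of runs preserve timestamps up to additive error bounded in terms of the clock values rather than the number of steps. I would first try expressing each timestamp as an affine function of clock values at that step plus accumulated heartbeats.
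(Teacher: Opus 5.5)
Your ``cleaner alternative'' in the second step, together with the third step, is essentially the paper's proof. Write $\vec\rho=\sum_{mn}\mu_{mn}\vec\rho_{mn}$ with vertex-to-vertex runs $(q,v_m)\to(q,v_n)$ dominated by $e$. Show that all weight not on same-clique pairs is $O(\varepsilon)$, remove it, and transfer the closeness to traces through the heartbeat. The Hoffman bound is superfluous: it only controls the endpoints, while the proposition also needs the intermediate states to be close, and the decomposition gives both at once. As written, however, several steps are unjustified or incorrect.

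First, convexity of $\overline{\Runs_\sigma}$ (convex combinations of runs are runs) does not show that the given $\rho$ is a convex combination of vertex-to-vertex runs. That is the converse direction. The paper proves it separately (\cref{lem:puri:proj}) by induction on $|\sigma|$, using Puri's stochastic-matrix characterization of one-step reachability.

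Second, the cross-component weight is not ``exactly the $\lambda$-drop''. The quantity $\lambda_i(\x^1)-\lambda_i(\x^2)$ is outflow minus inflow, so a component can pass a large weight through with zero net change. The bound comes from the downward-closed unions $D=C_1\cup\dots\cup C_i$. Nothing enters $D$, so the weight leaving $D$ equals $\sum_{i'\le i}(\lambda_{i'}(\x^1)-\lambda_{i'}(\x^2))<i\varepsilon$ (\cref{lem:diag}).

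Third, transient singletons of an idempotent $e$ have no self-loop; a singleton with a self-loop is a $1$-clique. So weight leaving a transient vertex cannot be moved onto an intra-component edge, and your redistribution fails there. The paper instead sets every coefficient outside same-clique pairs to $0$ and rescales the rest by a factor $C\in[1,1+O(\varepsilon)]$. This also moves the source point by $O(\varepsilon)$, which is allowed.

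Fourth, the obstacle you expect in the last step disappears if you compare runs by $\|\vec\rho'-\vec\rho\|_\infty$ over the whole sequence of clock valuations, rather than through delays. Every coordinate of every $\vec\rho_{mn}$ lies in $[0,M]$, so $\|\vec\rho'-\vec\rho\|_\infty\le M\sum_{mn}|\mu'_{mn}-\mu_{mn}|=O(\varepsilon)$, independently of $|\sigma|$. Both runs follow the same path, so the $i$-th timestamp equals $t_0+k_i-h_0+h_i$, where $k_i$ is the number of heartbeat edges among the first $i$ edges. Hence $d(w,w')\le 2\|\vec\rho-\vec\rho'\|_\infty$ (\cref{lem:heart}). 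This is exactly the affine expression you planned to try, and it closes the argument.
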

Next, we show that conservative runs necessarily follow cycles in the graph $G$ (``follow" is formalized in the next definition). 
Such runs satisfy the inequality $r(\pi)\leq \alpha c(\pi)$, i.e.~the number of free transitions in time $\Theta$ is bounded by $\alpha\Theta$. This fact is used to construct a small $\varepsilon$-net for traces of all conservative runs. 

\begin{restatable}{definition}{runFollowsRoute}
\label{def:follow}
Fix a route $\pi,\pi'$  with 
$\pi=\delta_1\cdots\delta_n$
and 
$\pi'=(q_0,\ell^0)\to(q_1,\ell^1)\to\cdots\to(q_n,\ell^n)$. 
A run $\rho= (q_0,\x^0) \trans[d_1]{\delta_1} (q_1,\x^1)\cdots \trans[d_1]{\delta_1}(q_n,\x^n)$ in $\overline{\Runs_\pi}$ \emph{follows} this route
whenever: 
\begin{itemize}
    \item all the lists $\ell^j$ have the same length $k$;
    \item there exists a decomposition $\vec\rho=\sum_{i=1}^k \lambda_i\vec\rho_i$ with all $\lambda_i>0$, $\sum\lambda_i=1$, all $\rho_i$ runs of $\aut$ in $\overline{\Runs_\pi}$;
    \item each $\rho_i$ hits the $i$-th  active face on all the intermediate states, that is, it has a form
    $\rho_i= (q_0,\x^0_i) \to (q_1,\x^1_i)\cdots \to (q_n,\x^n_i)$ with
    $\x^j_i\in \conv\ell^j_i$ for $j=0..n$. 
\end{itemize}
\vspace*{-10pt}
\end{restatable}

\begin{restatable}{proposition}{conservRunFollowsRoute}
\label{lem:realize}Every conservative run follows some route. 
\end{restatable}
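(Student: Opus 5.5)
We need to show that every conservative $e$-run $\rho$ along $\overline\sigma$, with $\sigma\in\gamma^{-1}(e)$ a cycle on $q$ and $e$ idempotent, follows some route. We will use Puri's theory on the closed semantics: reachability along a path is a convex polyhedral relation, and every closed run is a convex combination of runs between vertices of the regions. Let $C_1,\dots,C_m$ be the strongly connected components of $e$ restricted to $S(q)$, in topological order. Write $\x^0=\sum_i\lambda_i\x^0_i$ with $\x^0_i\in\conv C_i$.

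\textbf{Step 1 (recurrent face list).} We take $\ell$ to be the list of those components $C_i$ with $\lambda_i(\x^0)>0$ that are cliques, not transient singletons. First we show that $\lambda_i>0$ forces $C_i$ to be a clique, or else conservativity fails. The standard Lyapunov argument of \cite{puri,entroJourn} gives that mass in $\vec\lambda$ can only flow from $C_i$ to components later in the topological order. Weight on a transient singleton (a vertex without a self-loop in $e$) must leave it. Since $e$ is idempotent, such weight is moved strictly downward, which contradicts $\vec\lambda(\x^1)=\vec\lambda(\x^2)$. So $\ell=(C_{i_1},\dots,C_{i_k})$, and each element is a clique in $\gamma(\sigma)=e$. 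Hence $\ell$ is a recurrent face list for $q$, witnessed by $\sigma$ itself, and it is the starting refined state.

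\textbf{Step 2 (decomposition of the run).} Conservativity means the lexicographically decreasing $\vec\lambda$ is unchanged, so no weight leaves any clique $C_{i_j}$. Using Puri's decomposition of $\overline{\mathop{Reach}}_\sigma$ into the vertex relation $e$, we write $\rho=\sum_j\lambda_{i_j}\rho_j$, where $\rho_j$ is a closed run from a point of $\conv C_{i_j}$ to a point of $\conv C_{i_j}$. This uses convexity of the set of closed runs along a path: runs are vectors of delays plus the initial point, and all constraints are linear. To find $\rho_j$, we decompose the endpoints and pick runs realizing the corresponding vertex-to-vertex orbit edges. The edges must stay inside $C_{i_j}$, otherwise weight would escape.

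\textbf{Step 3 (intermediate faces).} We then define the faces at each intermediate step $t$. For a prefix $\sigma_{\le t}$ and suffix $\sigma_{>t}$ we have $\gamma(\sigma_{\le t})\gamma(\sigma_{>t})=e$. Let $\ell^t_j$ be the set of vertices $v$ of $S(q_t)$ that lie on a path $u\to v\to u'$ with $u,u'\in C_{i_j}$. These sets are pairwise disjoint, because otherwise idempotency would merge the cliques. The faces $\conv\ell^t_j$, together with $\sigma_{>t}\sigma_{\le t}$, show that the $\ell^t$ are recurrent face lists. The edge conditions of \cref{con:main} (every vertex of $\ell^{t-1}_j$ has a successor in $\ell^t_j$ and vice versa) hold by construction. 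Finally, the intermediate states of $\rho_j$ lie in $\conv\ell^t_j$, since any vertex weight outside it could not return to $C_{i_j}$.

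\textbf{Main obstacle.} The crux is Step 2 and its refinement in Step 3. We must make the decomposition $\vec\rho=\sum\lambda_j\vec\rho_j$ hold simultaneously for all intermediate states, with each $\rho_j$ confined to its faces. We would argue this via Puri's theorem applied edge by edge: closed reachability along a single edge maps convex combinations to convex combinations. We would then glue the pieces using a flow argument on the unrolled orbit graph. The "no leakage" part follows from equality of the Lyapunov coordinates.
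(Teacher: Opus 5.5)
Your proposal is correct in outline and follows the paper's proof. Positive-weight components must be cliques (the paper's \cref{lem:diag} with $\varepsilon=0$). The full run vector, and not just its endpoints as Step~2 first suggests, is decomposed via Puri's stochastic matrices composed edge by edge into vertex-to-vertex runs grouped by clique; this is exactly \cref{lem:puri:proj}, so your ``main obstacle'' is settled by the flow argument you describe. The intermediate faces are the two-sided reachable sets that the paper calls $C^j_i$.

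One small fix: the recurrence of $\ell^t$ should be witnessed by the cycle $\sigma_{>t}\sigma\sigma_{\le t}$. Its orbit graph $\gamma(\sigma_{>t})\,e\,\gamma(\sigma_{\le t})$ is complete on each $\ell^t_j$, because the middle factor $e$ links any two vertices of the clique $C_{i_j}$. Completeness of $\gamma(\sigma_{>t}\sigma_{\le t})$ itself on $\ell^t_j$ does not follow from your construction; only its square, which equals $\gamma(\sigma_{>t})\,e\,\gamma(\sigma_{\le t})$, is guaranteed complete.
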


Routes of cost and reward $0$ need to be handled properly when creating an $\varepsilon$-net of the appropriate size. 
They satisfy the following pumping lemma (with an appropriate definition of $0$-loop): 
\begin{restatable}{lemma}{constBforZeroLoop}
\label{lem:pumping}
For any \CTA\ $\aut$ there exists a constant B, such that any route $\pi,\pi'$ with $|\pi|>B$ and $c(\pi)=r(\pi)=0$ contains a $0$-loop.
\vspace*{-5pt}
\end{restatable}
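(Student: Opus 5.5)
This is a pigeonhole argument on the finite graph of refined edges. The notion of a $0$-loop is not fixed in the text above, so I take it to mean a factor $\pi_{[i+1..j]},\pi'_{[i+1..j]}$ of the route, with $i<j$, such that the refined states at its two ends coincide, $(q_i,\ell^i)=(q_j,\ell^j)$, and whose reward and cost are both $0$. When $c(\pi)=r(\pi)=0$, the second condition holds automatically for every factor, because rewards and costs are non-negative. So it suffices to find a repeated refined state along $\pi'$.

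The key steps are as follows.

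\begin{enumerate}
\item Bound the size of $Q'$. A refined state is a pair $(q,\ell)$, where $q\in Q$ and $\ell$ is a list of disjoint non-empty subsets of the vertex set of $S(q)$. Each region in the \CTA\ is a simplex with at most $K$ vertices ($K-1$ clocks). Hence the number of such lists is at most some $L_K$ depending only on $K$, for instance $L_K\le (K+1)^K\cdot K!$. Therefore $|Q'|\le N\cdot L_K$.
\item Set $B=|Q'|$. A route with $|\pi|=n>B$ visits $n+1>|Q'|$ refined states $(q_0,\ell^0),\dots,(q_n,\ell^n)$. By pigeonhole, two of them coincide, at indices $i<j$.
\item The segment from $i$ to $j$ is then a cycle of $G$. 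Its reward and cost are $0$, being sub-sums of $r(\pi)=c(\pi)=0$. So it is a $0$-loop.
\end{enumerate}

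The pumping lemma is presumably used later to remove or shorten such loops when building the $\varepsilon$-net. If the intended definition of $0$-loop also demands that the loop return to the same location of $\aut$ with a cycle whose orbit graph is idempotent, I would strengthen the pigeonhole as follows. Consider the pairs $((q_j,\ell^j),\gamma(\delta_1\cdots\delta_j))$. By Ramsey's theorem, or directly by Simon's factorization applied to the morphism $\gamma$, one obtains the bound $B=|Q'|\cdot |O|$, or a Ramsey number in $|Q'|$ and $|O|$, and with it a repeated refined state together with an idempotent factor. For a bare idempotent, the standard argument takes two positions with equal prefix images and uses that some power of the factor image is idempotent.

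The main obstacle is pinning down exactly which conditions a $0$-loop must satisfy, and hence how large $B$ must be. The pigeonhole bound on $|Q'|$ handles the plain version. Idempotency or clique conditions, if required, call for the Ramsey or Simon variant with $B$ depending on $|O|$. In every version $B$ depends only on $\aut$, as the statement demands.
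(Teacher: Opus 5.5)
\paragraph{Gap: your $0$-loop is weaker than the paper's.} Your pigeonhole is fine as far as it goes, but it proves the lemma for a weaker notion of $0$-loop. In the paper, a $0$-loop is a cyclic route $\sigma,\sigma'$ from $(q,\ell)$ to itself with $c(\sigma')=r(\sigma')=0$ \emph{and} $\gamma_0(\sigma,\sigma')=1_\ell$. Here $\gamma_0$ keeps only those orbit-graph edges $v^1\to v^2$ with $v^1\in\ell^1_i$ and $v^2\in\ell^2_i$ for the same $i$, and $1_\ell$ is the identity on $\bigcup_i\ell_i$. The identity condition is what makes a run following the loop return to the same state in zero time, so that it can be cut out in \cref{lem:fluff}.

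A repeated refined state only gives you a $0/0$ cycle of $G$. A priori, its $\gamma_0$ could be a nontrivial permutation of the vertices of the $\ell_i$, and you never rule this out. Your fallback does not fix this, for two reasons. First, it aims at idempotency of $\gamma$, which is not the required condition. Second, its steps are invalid as stated. Equal prefix images only give $ab=a$ in $O$, which does not make $b$ idempotent. And an idempotent power $b^k$ is the image of $(\pi_{i..j})^k$, which is not a factor of the route.

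\paragraph{The missing idea and two ways to close the gap.} What you need is the functional structure of $\gamma_0$ on reward-$0$ edges. Reward $0$ forces each vertex of $\ell^1_i$ to have a unique successor in $\ell^2_i$. Hence every nonzero element of the finite semigroup $O_0$ of $0$-orbit graphs is a family of surjections $\ell^1_i\to\ell^2_i$. This yields the cancellation law $ab=a\neq0\Rightarrow b=1_\ell$ (\cref{lem:0:monoid}).

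The paper takes $B=\#O_0$ and applies pigeonhole to the prefix images. From $e_{0..i}=e_{0..j}=e_{0..i}e_{i..j}$ it gets $e_{i..j}=1_\ell$ directly.

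Your bound $B=|Q'|$ can also be rescued, using the cost instead:
\begin{enumerate}
\item On a $0/0$ cycle, $\gamma_0$ is a permutation $p$ of each $\ell_i$, since a surjective self-map of a finite set is a permutation.
\item No transition on the cycle resets $h$. So the closed run along $\sigma$ witnessing $v\to p(v)$ raises $h$ by exactly its duration.
\item Summing over a $p$-orbit, these nonnegative increments add up to $0$, so every such duration is $0$.
\item A zero-time run can only reset clocks, so $p(v)\le v$ coordinatewise. Going around the orbit then forces $p(v)=v$.
\end{enumerate}
Either argument is needed to turn your repeated refined state into a genuine $0$-loop.
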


%
\begin{restatable}{definition}{defProperRoute}
A route is \emph{proper} if it does not contain $0$-loops.
\vspace*{-5pt}
\end{restatable}


\begin{restatable}{lemma}{countNumberCyclicRoutes}
\label{lem:count:routes}
Given a normal \CTA, there exist constants $K_i$ such that the number of proper cyclic routes  of  a cost $\leq T$ is at most $K_1 2^{K_2T}$. The number of transitions in any such route is at most $K_{10}T+K_{11}$. 
\end{restatable}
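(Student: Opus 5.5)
We need to bound the number of proper cyclic routes of cost at most $T$ in $G$, and their length. The plan is to first bound the length of every such route linearly in $T$, and then count routes of a given length by a crude out-degree bound.

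\textbf{Step 1: splitting the route.} Fix a proper cyclic route $\pi,\pi'$ with $c(\pi)\le T$. In the sequence of its edges, mark the \emph{weighted} edges, i.e.\ those with $c=1$ or $r=1$. They cut $\pi$ into at most (number of weighted edges $+1$) maximal segments. Each segment consists only of edges with $c=r=0$, so it is a route of cost and reward $0$. A proper route contains no $0$-loop, so no sub-route does either. By \cref{lem:pumping}, every such segment therefore has length at most $B$.

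\textbf{Step 2: bounding the weighted edges.} The number of cost-$1$ edges is $c(\pi)\le T$. For reward-$1$ edges, $\pi'$ is a cycle in $G$, so \cref{lem:rewcost} gives $r(\pi)\le \alpha c(\pi)\le \alpha T$. Here $\alpha<\infty$ is what normality buys us: if $\alpha=\infty$, \cref{lem:obese} makes $\aut$ obese. Hence the number of weighted edges is at most $(1+\alpha)T$. Combining with Step 1,
\[
|\pi|\le (B+1)\bigl((1+\alpha)T+1\bigr)+B,
\]
which gives constants $K_{10}=(B+1)(1+\alpha)$ and $K_{11}=2B+1$.

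\textbf{Step 3: counting.} Let $D$ be the maximal out-degree of $G$ and $|Q'|$ its number of vertices. A cyclic route is determined by its starting vertex in $G$ and the successive edge choices. Note that $\pi$ is determined by $\pi'$, since each edge of $G$ refines a unique edge of $\aut$. So the number of routes of length at most $L$ is at most $|Q'|\,(L+1)\,D^{L}$. Substituting $L=K_{10}T+K_{11}$ and absorbing the polynomial factor $L+1$ into the exponential gives a bound $K_1 2^{K_2T}$, with for instance $K_2=K_{10}\log D+1$.

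\textbf{Main obstacle.} The delicate point is Step 1. We must check that the notion of $0$-loop in \cref{lem:pumping} is inherited by sub-routes, so that a proper route has only proper $0$-weight segments. We must also check that the lemma applies to non-cyclic segments, which its statement (arbitrary routes) allows. A further worry is the edge case where $G$ has reward-$1$ cycles of cost $0$. That would force $\alpha=\infty$, and it is excluded by normality via \cref{lem:obese}, which is exactly where the hypothesis is used.
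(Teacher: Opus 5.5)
Your proof is correct and follows essentially the same argument as the paper. You bound the cost-and-reward-$0$ factors by $B$ via \cref{lem:pumping} (the paper's \cref{cor:cost}), bound the weighted edges by $(\alpha+1)T$ using \cref{lem:rewcost} with $\alpha<\infty$ from \cref{lem:obese}, and count with a constant branching factor raised to the length bound. Your length estimate $(B+1)(1+\alpha)T+O(B)$ is slightly more careful than the paper's $((\alpha+1)T+1)B$, and the count survives even if $\pi$ is not literally determined by $\pi'$ (just multiply the out-degree by $\#\Delta$).
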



It is possible to transform any route into a proper one by removing all 0-loops. In the following lemma, 
$\phi$ is an auxiliary multi-valued function on timed words (it inserts some instantaneous events in $w$).

\begin{restatable}{lemma}{fluffFollowingRoute}
\label{lem:fluff}
For every route  $\pi,\pi'$ exists a proper route $\hat \pi,\hat\pi'$ such that for any trace $w$ of a run following the former exists  $\hat{w}$ trace of a run following the latter 
such that $d(w,\phi(\hat{w}))=0$.
\end{restatable}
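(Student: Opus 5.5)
The plan is to build $\hat\pi,\hat\pi'$ by repeatedly cutting out $0$-loops from $\pi,\pi'$, and to track, at each cut, how a run following the current route is converted into a run following the shorter one. Each removal strictly shortens the route, so after finitely many steps we reach a route with no $0$-loop, which is proper by definition. By induction on the number of removals, it suffices to treat one removal. Write $\pi=\pi_1\tau\pi_2$, $\pi'=\pi'_1\tau'\pi'_2$, where $\tau,\tau'$ is a $0$-loop from $(p,\ell)$ to itself. I take a $0$-loop to mean a cyclic subroute of reward $0$ and cost $0$. The cost condition says no heartbeat reset occurs, so the total elapsed time along $\tau$ is bounded by the bound of $h$. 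The reward condition says that on every active face every transition of $\tau$ is punctual.

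Fix a run $\rho$ following $\pi,\pi'$, with the decomposition $\vec\rho=\sum_i\lambda_i\vec\rho_i$ where $\rho_i$ stays on $\conv\ell^j_i$ throughout. Consider the segment of each $\rho_i$ along $\tau$, going from $\x_i\in\conv\ell_i$ back to $\y_i\in\conv\ell_i$. Since all transitions of $\tau'$ have reward $0$, the delays along the $i$-th face are forced, hence determined by the entry point. The key claim is that punctuality on every face, together with the cycle returning to the same face list, forces $\y_i=\x_i$ and forces total duration $0$ along $\tau$.

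The argument for this claim goes through the heartbeat clock. It has a fixed fractional relationship on each face and is never reset in $\tau$. Hence a positive elapsed time would move the point off the face's affine coordinates unless the faces are invariant under time translation, and that invariance would produce an interval of admissible delays, i.e.\ reward $1$. Granting the claim, I define $\hat\rho$ by deleting the $\tau$-segment, which is possible because entry and exit states coincide. Each $\hat\rho_i$ is then still a run following $\pi_1\pi_2$, and the convex combination still decomposes $\hat\rho$. The trace $w$ of $\rho$ differs from the trace $\hat w$ of $\hat\rho$ only by the instantaneous letters of $\tau$ inserted at one time point. The multi-valued map $\phi$ is designed precisely to reinsert such bursts of instantaneous events, so $w\in\phi(\hat w)$ up to pseudo-distance $0$ (repetitions at the same instant are invisible to $d$).

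The main obstacle I expect is the key claim that a cost-$0$, reward-$0$ cycle has zero duration and returns each face point to itself. If it fails, I would try a fallback: show that the map $\x_i\mapsto\y_i$ induced by $\tau$ on each face is a contraction or an idempotent affine map with $\gamma$-orbit-graph constraints. Then, possibly replacing $\tau$ by an iterate, I would argue that the run can be rerouted to skip $\tau$ while changing only timestamps within a bounded instantaneous burst that $\phi$ absorbs. Failing that, I would adjust the definition of $0$-loop, which the statement explicitly leaves to an appropriate definition, so that the loop must additionally be time-instantaneous on active faces. With that definition the pumping property of \cref{lem:pumping} is still obtained from a pigeonhole argument on the finitely many pairs (state of $G$, punctual-zero-delay pattern).
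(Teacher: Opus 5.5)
Your overall plan is exactly the paper's: cut out $0$-loops, observe that the cut segment takes zero time and returns to its starting state, delete it, and let $\phi$ reinsert the instantaneous burst at pseudo-distance $0$. The gap is the step you yourself flag. With your definition of a $0$-loop (cyclic, cost $0$, reward $0$), you never prove that a run following it has duration $0$ and returns to the same state, and the heartbeat sketch you give does not work.

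Elapsed time along $\tau$ does not ``move the point off the face''. Other clocks may be reset inside $\tau$, and a face $\conv\ell_i$ may contain vertices whose $h$-coordinates differ by $1$. So nothing local forbids an active vertex $v$ with $v_h=n$ from being carried to a different vertex $w\in\ell_i$ with $w_h=n+1$. Nor is ``invariance of the face under time translation'' a condition that ever holds or that produces reward $1$. What rules this out is global:
\begin{itemize}
\item by \cref{lem:0:monoid}, a cost-$0$, reward-$0$ cyclic route induces a \emph{permutation} $F$ of the vertices of each $\ell_i$, with forced delays along each vertex chain;
\item since $h$ is never reset, $F(v)_h-v_h\geq 0$ is the total delay from $v$, and these increments sum to $0$ around each orbit of $F$, so every delay from an active vertex vanishes;
\item hence $F(v)=v[R]$ with $R$ the union of resets along $\tau$, so $F\circ F=F$, and bijectivity gives $F=\mathrm{id}$.
\end{itemize}
Only after this does the convex-combination step give $\y_i=\x_i$ and zero duration. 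That step is: decompose each $\rho_i$ into vertex runs, which by extremality of faces stay on the active faces and so follow the forced, now trivial, vertex chains.

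The paper sidesteps this argument by building $\gamma_0(\tau,\tau')=1_\ell$ into the definition of a $0$-loop. Then each active vertex returns to itself, $h$ is unchanged so the duration is $0$, and a following run, being a convex combination of such vertex runs, cycles to the same state. The price is paid in \cref{lem:pumping}: it then needs the finite semigroup $O_0$ and the cancellation property $ab=a\neq 0\Rightarrow b=1_\ell$, not plain pigeonhole on states of $G$.

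Your fallback, requiring the loop to be time-instantaneous on active faces, is not enough on its own. Zero delays still allow the non-trivial reset map $\x\mapsto\x[R]$, and you need the bijectivity from \cref{lem:0:monoid} to conclude it is the identity. Your pigeonhole on ``(state of $G$, punctual-zero-delay pattern)'' also does not explain why the factor between two repetitions is instantaneous.

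Once the permutation argument is supplied, your definition coincides with the paper's, and plain pigeonhole on states of $G$ suffices for pumping. The rest of your proof then goes through. Your iterated removal even has a small advantage: it guarantees the final route is proper, provided you note that $d$-distance $0$ only compares sets of (letter, time) pairs, so successive fluffings compose.
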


We are now in a position to build the desired $\varepsilon$-net. We do this by defining
$\varepsilon$-nets for the conservative factors of runs, which are then 
interleaved with $\varepsilon$-nets for intermediate chunks.

\begin{restatable}[$\varepsilon$-net for  words following one route]{con}{epsNetWordsOneRoute}
\label{con:con:1} Given  a 
proper route $\pi,\pi'$  
from $(q,\ell)$ to itself, and an initial time interval $[\tau,\tau+1)$, we build a set 
$\Net^{\mathbf{fol}}_\varepsilon(\pi,\pi',\tau)$
 containing all words $v=(a_1,s_1)\dots(a_n,s_n)$, with $\varepsilon$-discrete timestamps $s_i\in \varepsilon\nat$, that satisfy the following properties:
\begin{enumerate}
    \item if $s_{i}-\tau>2$ and $a_i= b$, then $t_i\in s_{\prevb_i} + [1-\varepsilon,1+\varepsilon] $; 
    \item $s_i\in [\tau+\countb_i-1-\varepsilon,\tau+\countb_i+2+\varepsilon]$;
    \item if $s_{i-1}-\tau>M$ and $r(\delta')=0$, then  
    $s_i\in s_{i-1}+g_i(\x_{i-1}) +[-\nu\varepsilon-\varepsilon,\nu\varepsilon+\varepsilon]$, where
    $\x_i^k=s_i-s_{\lastreset_i^k}$. 
\end{enumerate}
\end{restatable}
Here $\countb_i$, $\prevb_i$ and $\lastreset_i^k$ depend on the path $\pi$ only, and $g_i$ is a uniformly $\nu$-Lipshitz continuous function.

\begin{restatable}{lemma}{lemmaEpsNetWordsOneRoute}
\label{lem:net:con:1} 
The set $\Net^{\mathbf{fol}}_\varepsilon(\pi,\pi',\tau)$ defined in \cref{con:con:1} is an $\varepsilon$-net for all  words traces of runs following the route $\pi,\pi'$ (starting within $[\tau,\tau+1)$). 
\end{restatable}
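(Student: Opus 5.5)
Given a trace $w=(a_1,t_1)\dots(a_n,t_n)$ of a run $\rho$ following the route $\pi,\pi'$, starting at some $t_0\in[\tau,\tau+1)$, we will show that rounding each timestamp to the $\varepsilon$-grid in a controlled way gives a word $v$ that satisfies the three conditions of \cref{con:con:1}. Since $\pi$ fixes the letter sequence, and $v$ and $w$ have the same letters at the same positions, $|t_i-s_i|\le\varepsilon$ for all $i$ then gives $d(w,v)\le\varepsilon$ at once: each letter is matched with its counterpart at the same index. So the whole proof is the construction of $s_i\in\varepsilon\nat$ with $|s_i-t_i|\le\varepsilon$ such that $v\in\Net^{\mathbf{fol}}_\varepsilon(\pi,\pi',\tau)$.

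We treat the conditions one by one, taking for $s_i$ a grid point within $\varepsilon$ of $t_i$ (for instance $\varepsilon\lfloor t_i/\varepsilon\rfloor$, adjusted as needed).

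\emph{Condition 1.} In an \CTA, every $b$-transition has guard $h=1$ and resets $h$, and $\prevb_i$ denotes the index of the previous heartbeat in $\pi$. Hence $t_i-t_{\prevb_i}=1$ exactly once the first heartbeat has occurred. That happens after time at most $1$ beyond the start, which is why the hypothesis $s_i-\tau>2$ suffices. Rounding both timestamps perturbs the difference by at most $\varepsilon$, so $s_i-s_{\prevb_i}\in[1-\varepsilon,1+\varepsilon]$.

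\emph{Condition 2.} $\countb_i$ counts the heartbeats preceding position $i$. Because $h\le1$ on every transition and $h$ is reset exactly at the $b$-events, $t_i-t_0\in[\countb_i-1,\countb_i+1]$. With $t_0\in[\tau,\tau+1)$ and the rounding error this gives the stated window.

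\emph{Condition 3.} This is the crux and the main obstacle. For a transition $\delta_i$ with reward $0$, \cref{con:main} says that on every active face $\conv\ell_j$ the admissible delay from a vertex landing in $\conv\ell'_j$ is punctual. Since $\rho$ follows the route (\cref{def:follow}), $\rho=\sum_j\lambda_j\rho_j$ with each $\rho_j$ staying on the face $\conv\ell^{\cdot}_j$, and the delay of $\rho$ is $d_i=\sum_j\lambda_j d_i^{(j)}$. On a face, a zero-reward transition forces the delay to be an affine function of the position, via a constraint $x_c+d=\text{const}$ or $x_c=x_{c'}$. So $d_i=g_i(\x_{i-1})$ for an affine, hence Lipschitz, function $g_i$ determined by the route. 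Since there are finitely many faces and edges, we can choose a uniform $\nu$.

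Once $s_{i-1}-\tau>M$, every clock has been reset at least once since the start of the run (all clocks are bounded by $M$ and the run is cyclic). Therefore the clock vector $\x_{i-1}$ is determined by the timestamps: $x^k_{i-1}=t_{i-1}-t_{\lastreset^k_{i-1}}$. Replacing the $t$'s by the $s$'s changes $\x_{i-1}$ by at most $2\varepsilon$ in each coordinate. Then $g_i$ changes by at most $\nu\varepsilon$ (absorbing constants into $\nu$), and rounding contributes a further $\varepsilon$, which yields $s_i\in s_{i-1}+g_i(\x_{i-1})+[-\nu\varepsilon-\varepsilon,\nu\varepsilon+\varepsilon]$.

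The delicate points are the following. First, the decomposition of a following run makes the delay of a zero-reward step \emph{exactly} an affine function of the current clocks, uniformly along the route; this needs the punctuality on each active face together with the vertex-to-face edge conditions of \cref{con:main}. Second, the error does not accumulate along the run, because each condition refers to the rounded $s$'s directly rather than to the true $t$'s. If needed, we will take $s_i$ to be the grid point nearest to $t_i$ and verify each inequality with the fixed slack above.
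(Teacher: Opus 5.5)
Your proposal is correct and follows the paper's proof. The paper also takes $s_i=\varepsilon\lfloor t_i/\varepsilon\rfloor$ and checks the constraints of \cref{con:con:1} using the properties in \cref{lem:quantity}: heartbeat spacing, the $\countb_i$ window, clocks as timestamp differences after time $M$, and $d_i=g_i(\x_{i-1})$ on reward-$0$ steps. You re-derive these properties inline, and you share one small slip with the paper: from $s_{i-1}-\tau>M$ and $t_0<\tau+1$ you only get $t_{i-1}-t_0>M-1$, so the threshold in condition 3 should be $M+1$ to guarantee every clock has been reset, which does not affect the counting in \cref{lem:eps:count}.
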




\begin{restatable}[$\varepsilon$-net for words following proper routes]{con}{epsFollowProperRoutes}
\label{con:con:T}
We define\\
$
\Net^{\mathbf{fol}}_{\varepsilon,\Theta}(\tau)=\bigcup_{\pi,\pi'}\Net^{\mathbf{fol}}_\varepsilon(\pi,\pi',\tau),
$
where $\pi,\pi'$ range over all proper routes  with cost $\leq\Theta$. 
\end{restatable}

\begin{restatable}{lemma}{indeedEpsFollowProperRoutes}
\label{lem:net:con}
The set $\Net^{\mathbf{fol}}_{\varepsilon,\Theta}(\tau)$ defined in \cref{con:con:T} is an $\varepsilon$-net for all words starting at $t_0\in[\tau,\tau+1)$ and following some proper  route of cost $\leq\Theta$. 
\end{restatable}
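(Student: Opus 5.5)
This statement is an immediate consequence of \cref{lem:net:con:1} once the union is unfolded. Fix a timed word $w$ whose first event occurs at time $t_0\in[\tau,\tau+1)$ and which is the trace of a run $\rho$ following some proper route $\pi,\pi'$ of cost $c(\pi)\leq\Theta$. By \cref{con:con:T}, the set $\Net^{\mathbf{fol}}_{\varepsilon,\Theta}(\tau)$ is the union of $\Net^{\mathbf{fol}}_\varepsilon(\hat\pi,\hat\pi',\tau)$ over all proper routes $\hat\pi,\hat\pi'$ of cost at most $\Theta$. Our route $\pi,\pi'$ is one of the indices of this union. So it suffices to produce a word $v\in\Net^{\mathbf{fol}}_\varepsilon(\pi,\pi',\tau)$ with $d(w,v)\leq\varepsilon$.

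That word is supplied directly by \cref{lem:net:con:1}. Applied to the proper route $\pi,\pi'$ and the starting interval $[\tau,\tau+1)$, it says that $\Net^{\mathbf{fol}}_\varepsilon(\pi,\pi',\tau)$ is an $\varepsilon$-net for all traces of runs following $\pi,\pi'$ that start in $[\tau,\tau+1)$. Since $w$ is such a trace, there is $v$ in that set with $d(w,v)\leq\varepsilon$. As $v$ also lies in the union, we are done.

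The only point needing care is to check that the hypotheses of \cref{lem:net:con:1} are met as stated.
\begin{itemize}
\item \cref{con:con:1} is formulated for proper routes \emph{from $(q,\ell)$ to itself}, i.e.\ cyclic routes, while the present lemma speaks of ``some proper route''. In the intended use (\cref{lem:realize} applied to conservative $e$-runs along cycles $\sigma$), the routes are cyclic. I would therefore read the union in \cref{con:con:T} as ranging over proper cyclic routes, consistently with \cref{lem:count:routes}, which bounds their number.
\item The convention for the starting time must match. In \cref{con:con:1} the word is shifted by the initial time $t_0=\tau+\theta$ with $\theta\in[0,1)$, and the timestamp windows in items~1--3 are centred at $\tau$ with slack of order $1$. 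So the same $\tau$ is used on both sides, and no further estimate is needed.
\end{itemize}

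The main (mild) obstacle is exactly this bookkeeping of quantifiers: the net must be independent of the particular $t_0$ and of $\rho$, and depend only on $\tau$, $\varepsilon$ and $\Theta$. This holds because \cref{con:con:1} is defined from $\tau$ and the route alone. Everything substantive, including the Lipschitz approximation of punctual delays and the $\varepsilon$-discretization of free delays, is already contained in \cref{lem:net:con:1}.
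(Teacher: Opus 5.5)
Your proposal is correct and coincides with the paper's proof. The paper's proof is the one-line observation that the claim follows from \cref{lem:net:con:1}, because each admissible proper route indexes one member of the union in \cref{con:con:T}. There is one minor slip that does not affect the argument: in the paper's convention (see \eqref{eq:all}), $t_0$ is the starting time of the run, with the first event at $t_1=t_0+d_1$, not the timestamp of the first event. Since both lemmas use the same convention, the application goes through unchanged.
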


\begin{restatable}[$\varepsilon$-net for conservative words]{con}{epsNetConservWords}
\label{con:con:con} Let now
\[
\Net^{\mathbf{cons}}_{\varepsilon,\Theta}(\tau)=\phi\left(\Net^{\mathbf{fol}}_{\varepsilon,\Theta+1}(\tau)\right).
\]
\end{restatable}

\begin{restatable}{lemma}{coroEpsNetCoservWords}
\label{cor:net:fluff}
The set $\Net^{\mathbf{cons}}_{\varepsilon,\Theta}(\tau)$ is an  $\varepsilon$-net for all conservative words of $L_\Theta(\aut)$ starting at $t_0\in[\tau,\tau+1)$.  It is also a $(K_0+1)\varepsilon$-net for $\varepsilon $-conservative words.  
\end{restatable}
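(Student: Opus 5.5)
The plan has two parts: the $\varepsilon$-net claim for conservative words, which is a chaining of \cref{lem:realize}, \cref{lem:fluff} and \cref{lem:net:con}, and then the $(K_0+1)\varepsilon$ claim, which follows from \cref{lem:eps:con} by the triangle inequality.

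\textbf{Conservative words.} Let $w$ be a conservative word of $L_\Theta(\aut)$ starting at $t_0\in[\tau,\tau+1)$. Then $w$ is the trace of a conservative $e$-run $\rho$ along some cycle $\overline\sigma$.

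First, by \cref{lem:realize}, $\rho$ follows some route $\pi,\pi'$. This route is cyclic from some $(q,\ell)$ to itself, since conservativity fixes $\vec\lambda$ and hence the active-face list at both endpoints.

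Next, by \cref{lem:fluff}, there is a proper route $\hat\pi,\hat\pi'$ and a trace $\hat w$ of a run following it with $d(w,\phi(\hat w))=0$. Removing $0$-loops removes only edges of cost $0$, so $c(\hat\pi)\le c(\pi)$.

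We need $c(\pi)\le\Theta+1$. Costs count heartbeat resets, and a heartbeat occurs exactly once per time unit. Hence a run of duration at most $\Theta$ contains at most $\lfloor\Theta\rfloor+1\le\Theta+1$ heartbeat transitions. This is why the construction uses $\Theta+1$ rather than $\Theta$.

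So $\hat\pi,\hat\pi'$ is a proper route of cost at most $\Theta+1$. The run producing $\hat w$ starts at the same time $t_0$, because $\phi$ only inserts instantaneous events and the fluff construction keeps the starting time. By \cref{lem:net:con}, there is $v\in\Net^{\mathbf{fol}}_{\varepsilon,\Theta+1}(\tau)$ with $d(\hat w,v)\le\varepsilon$.

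It remains to show $d(\phi(\hat w),\phi(v))\le\varepsilon$ for a suitable choice of the value of $\phi(v)$. Since $\phi$ inserts copies of events at the timestamps of existing events, every inserted letter in $\phi(\hat w)$ sits at the same time as some letter of $\hat w$ with the same label. That letter has an $\varepsilon$-close match in $v$, and the letter inserted correspondingly in $\phi(v)$ shares its time, so the match carries over. The symmetric direction is handled the same way. The pseudo-triangle inequality, with $d(w,\phi(\hat w))=0$, then gives $d(w,\phi(v))\le\varepsilon$ with $\phi(v)\in\Net^{\mathbf{cons}}_{\varepsilon,\Theta}(\tau)$.

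I expect this compatibility step to be the main obstacle: checking that $\phi$ is non-expansive for $d$ and that it acts consistently on $\hat w$ and on the net word $v$. If the insertions depend on the route, I would define $\phi$ on the net elements relative to the route $\hat\pi$ used in \cref{con:con:1}.

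\textbf{$\varepsilon$-conservative words.} Let $w$ be the trace of an $\varepsilon$-conservative run. By \cref{lem:eps:con}, $w$ is $K_0\varepsilon$-close to a conservative word $w'$.

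We must keep $w'$ in the hypothesis class: duration at most $\Theta$ and start in $[\tau,\tau+1)$. The closeness from \cref{lem:eps:con} should be a pointwise timestamp perturbation of size at most $K_0\varepsilon$, so the duration of $w'$ is at most $\Theta+K_0\varepsilon$. The slack of $1$ already built into the cost bound absorbs this for $\varepsilon$ small; alternatively I would enlarge the start interval slightly in the lemma.

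The first part then gives $v\in\Net^{\mathbf{cons}}_{\varepsilon,\Theta}(\tau)$ with $d(w',v)\le\varepsilon$. By the triangle inequality, $d(w,v)\le(K_0+1)\varepsilon$.
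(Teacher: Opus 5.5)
Your proposal is correct and follows the paper's own one-line proof: it chains \cref{lem:realize}, \cref{lem:fluff} and \cref{lem:net:con} with the bound cost at most $\Theta+1$ coming from duration at most $\Theta$, then applies \cref{lem:eps:con} and the triangle inequality. Your explicit treatment of $\phi$ versus $\varepsilon$-approximation (left implicit in the paper) is sound once you drop the claim that $\phi$ inserts copies of existing letters, since it inserts arbitrary letters: insert each set after the letter of $v$ that $\varepsilon$-matches its host letter, taking unions. Also, in the second part no slack argument is needed, because the run given by \cref{lem:eps:con} lies along the same path with the same $t_0$ and so inherits the heartbeat count and cost bound directly.
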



\begin{restatable}[$\varepsilon$-net for $L_T(\aut)$]{con}{epsNetAut}
\label{con:net}
Given $T$ we proceed as follows:
\begin{itemize}
    \item for each $n\leq f_\aut(\varepsilon)$, take all possible sequences $\zeta$ of breakpoints in $\varepsilon\nat$:
\[
0=a_0\leq b_0\leq a_1\leq b_1\cdots\leq a_n\leq b_n \leq T
\]
 with cumulative duration of odd intervals $\sum_i (b_i-a_i)\leq f_\aut(\varepsilon)$;
 \item for each such sequence $\zeta$ we construct an $\varepsilon$-net $\Net^\zeta_\varepsilon$ like that:
 \begin{itemize}
     \item on each odd interval $[a_i,b_i]$ put the universal $\varepsilon$-net for duration $b_i-a_i$;
     \item on each even interval $[b_i,a_{i+1}]$  put 
$
\Net^{\mathbf{cons}}_{\varepsilon/(K_0+1),a_{i+1}-b_i}(b_i-1/2);
$
 \end{itemize}
  \item let $\Net_{\varepsilon,T}=\bigcup_\zeta \Net^\zeta_\varepsilon$ with the union taken over all breakpoint sequences.
\end{itemize}
\end{restatable}


\begin{restatable}{lemma}{lemmaIndeedEpsNetAut}
\label{lem:it:is:a:net}
The set $\Net_{\varepsilon,T}$ described in \cref{con:net} is an $\varepsilon$-net for $L_T(\aut)$.
\end{restatable}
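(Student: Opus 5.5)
We must show that every $w\in L_T(\aut)$ lies within distance $\varepsilon$ of some word in $\Net_{\varepsilon,T}$. The plan is to compose the factorization of \cref{lem:fac} with the local nets of \cref{cor:net:fluff}, and to check that the pseudo-distance $d$ behaves well under concatenation of time-disjoint pieces.

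\textbf{Step 1: factorize.} Fix $w\in L_T(\aut)$ and apply \cref{lem:fac} with precision $\varepsilon'=\varepsilon/(K_0+1)$ (or with $\varepsilon$ itself, adjusting constants). This gives $w=u_0v_1u_1\dots v_nu_n$ with $n\leq f_\aut(\varepsilon)$, $\sum\length(u_i)<f_\aut(\varepsilon)$, and every $v_i$ the trace of an $\varepsilon'$-conservative run. Let $\alpha_i,\beta_i$ be the absolute times where the factors begin and end.

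\textbf{Step 2: round the breakpoints.} Round each $\alpha_i,\beta_i$ to nearby points of $\varepsilon\nat$ so that they remain monotone. We need a rounding direction under which every event of a lossy chunk $u_i$ lies in its odd interval $[a_i,b_i]$. Rounding outward enlarges the total odd duration by at most $2\varepsilon(n+1)$. Since $f_\aut$ was defined with some slack, this may force replacing $f_\aut(\varepsilon)$ by $f_\aut(\varepsilon)+1$ in the bound, or taking the factorization at a slightly smaller precision. This is the technical point to check against the exact constant in \cref{con:net}. The result is a valid breakpoint sequence $\zeta$.

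\textbf{Step 3: approximate each piece.} On each odd interval, the universal $\varepsilon$-net for duration $b_i-a_i$ contains a word $\varepsilon$-close to $u_i$, shifted to start at $a_i$. On each even interval, $v_{i+1}$ is the trace of an $\varepsilon'$-conservative run of duration at most $a_{i+1}-b_i$. Its start time lies in $[\tau,\tau+1)$ with $\tau=b_i-1/2$, since the rounding displaced it by less than $\varepsilon<1/2$. By \cref{cor:net:fluff} applied at precision $\varepsilon'$, the net $\Net^{\mathbf{cons}}_{\varepsilon',a_{i+1}-b_i}(b_i-1/2)$ contains a word within $(K_0+1)\varepsilon'=\varepsilon$ of $v_{i+1}$. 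Concatenating these approximants gives an element $\hat w\in\Net^\zeta_\varepsilon\subseteq\Net_{\varepsilon,T}$.

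\textbf{Step 4: compose the distances.} For concatenations along the same breakpoint decomposition, $d(u_0v_1\cdots,\hat u_0\hat v_1\cdots)\leq\max_i\max\bigl(d(u_i,\hat u_i),d(v_i,\hat v_i)\bigr)$. This holds because $\dr$ is a max over letters of a min over matches, so a match found inside the corresponding piece suffices. Hence $d(w,\hat w)\leq\varepsilon$.

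\textbf{Main obstacle.} The delicate part is the boundary: reconciling the exact factor endpoints with the $\varepsilon$-grid breakpoints, and checking that the time windows required by \cref{con:con:1} (start in $[\tau,\tau+1)$) and the duration bound $\Theta$ are met after rounding. If needed, I would absorb boundary events into the adjacent lossy chunk; this costs only $O(n\varepsilon)$ extra lossy duration, which stays within the allowed bound.
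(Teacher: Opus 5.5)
Your outline is the paper's own argument: the paper just declares the lemma ``immediate from the construction, \cref{lem:fac,lem:net:con}''. Your Step~4 is correct, since $\dr$ is a maximum of minima and matching each letter inside its own piece bounds $d$ of the concatenations by the largest piecewise distance. Two steps of your bookkeeping, however, do not go through as written.

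\textbf{Precision in Step 1.} You cannot have both the bounds $n\le f_\aut(\varepsilon)$, $\sum\length(u_i)<f_\aut(\varepsilon)$ and $\varepsilon'$-conservative factors.
\begin{itemize}
\item Factorizing at $\varepsilon'=\varepsilon/(K_0+1)$ only gives bounds $f_\aut(\varepsilon')>f_\aut(\varepsilon)$. Your breakpoint sequence may then fall outside the family enumerated in \cref{con:net}.
\item Factorizing at $\varepsilon$ only gives $\varepsilon$-conservative factors. \Cref{lem:eps:con,cor:net:fluff} then place these only within about $(K_0+\tfrac{1}{K_0+1})\varepsilon>\varepsilon$ of $\Net^{\mathbf{cons}}_{\varepsilon/(K_0+1),\Theta}$. ``Adjusting constants'' does not rescue this choice.
\end{itemize}
So the construction itself has to change: factorize at $\varepsilon'$ and use $f_\aut(\varepsilon')$ as the bound in \cref{con:net}. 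The paper's construction has the same mismatch. It is harmless, because in the proof of \eqref{eq:count:all} $f_\aut$ only enters terms that vanish after division by $T$.

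\textbf{Boundaries.} Your fallback of moving boundary events of a conservative factor into the neighbouring lossy chunk fails. What remains of $v_{i+1}$ is a proper prefix or suffix of a run along an idempotent cycle, not the trace of an $\varepsilon'$-conservative run. Hence \cref{cor:net:fluff} no longer covers it. In addition, \cref{lem:fac} gives no quantified slack for the extra $O(n\varepsilon)$ lossy time.

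Keep conservative factors whole and round in the opposite direction: move the start of each lossy chunk up and its end down to $\varepsilon\nat$. Then:
\begin{itemize}
\item each $v_{i+1}$ starts in $[b_i,b_i+\varepsilon)$ and lasts at most $a_{i+1}-b_i$, exactly as \cref{cor:net:fluff} requires (your outward rounding instead lets $v_{i+1}$ outlast its interval);
\item the events of $u_i$ lie less than $\varepsilon$ outside $[a_i,b_i]$, and a grid universal net (any subset of $\Sigma$ at each point of $\varepsilon\nat\cap[a_i,b_i]$) still covers them within $\varepsilon$;
\item the total lossy time can only decrease.
\end{itemize}
One case remains: a lossy chunk containing no grid point. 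Collapse it to the single breakpoint $a_i=b_i=\varepsilon\lfloor\cdot/\varepsilon\rfloor$ of its end. The preceding conservative factor then overshoots its interval by less than $\varepsilon$. Absorb this by giving that conservative net one extra unit of $\Theta$. Again, only $T$-independent terms in \cref{lem:eps:count} change.
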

Finally we estimate the cardinality of those nets.
\begin{restatable}{proposition}{propEpsCount}
\label{lem:eps:count} The cardinalities of the sets defined in the previous constructions  can be bounded as follows (with $K_i$ depending only on the automaton $\aut$):
\begin{align}
 \log\#\Net^{\mathbf{fol}}_\varepsilon(\pi,\pi',\tau) &\leq (\alpha c(\pi')+K_3)\log\frac1\varepsilon+ K_5c(\pi')+K_{12}; \label{eq:count:fol}
 \\
 \log\#\Net^{\mathbf{cons}}_{\varepsilon,\Theta}(\tau)&\leq \alpha \Theta \log\frac1\varepsilon + K_3\log\frac1\varepsilon + K_6 \Theta +K_7; \label{eq:count:cons}
 \\
\limsup_{T\to\infty}
\frac{\log\#\Net_{\varepsilon,T}}{T} &\leq  \alpha \log\frac1\varepsilon +K_9. \label{eq:count:all}
\end{align}
\end{restatable}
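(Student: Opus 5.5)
We prove the three bounds in order, each building on the previous one.

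\emph{Bound \eqref{eq:count:fol}.} We count the words of \cref{con:con:1} by choosing timestamps one transition at a time along $\pi$, and bound the multiplicative factor each transition contributes. Write $n=|\pi|\le K_{10}c(\pi')+K_{11}$, using \cref{lem:count:routes}. The count splits into three kinds of transitions.
\begin{itemize}
\item \emph{Initial transitions}, those with $s_i-\tau\le\max(2,M)+1$. By condition~2 and the bounded frequency inside a fixed window, there are only boundedly many of them. Each has at most $O(1/\varepsilon)$ choices, so together they contribute at most $K_3\log\frac1\varepsilon+K_{12}$.
\item \emph{Heartbeat transitions} after the initial window. Condition~1 pins each one to an interval of length $2\varepsilon$, giving at most $3$ choices each.
\item \emph{Reward-$0$ transitions} after the initial window. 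Condition~3 pins each one to an interval of length $2(\nu+1)\varepsilon$, once the previous timestamps (hence $\x_{i-1}$) are fixed. That gives at most $2\nu+3$ choices each.
\item \emph{Reward-$1$ transitions.} Condition~2 confines each to a window of length $3+2\varepsilon$, giving $O(1/\varepsilon)$ choices each. There are $r(\pi')$ of them.
\end{itemize}
Since $\pi'$ is a cycle in $G$, \cref{lem:rewcost} gives $r(\pi')\le\alpha c(\pi')$. The constant-factor choices contribute $n\log(2\nu+3)\le K_5c(\pi')+\mathrm{const}$. Multiplying the factors yields \eqref{eq:count:fol}.

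The main obstacle is a subtlety in this count. Conditions 1 and 3 constrain $s_i$ relative to earlier $s_j$, which only makes sense if the choices are made sequentially. We must check that the number of options for $s_i$, given its predecessors, is uniformly bounded, independently of those predecessors. This holds because the allowed set is always an interval of width $O(\varepsilon)$ intersected with $\varepsilon\nat$. A reward-$1$ transition falling in the initial window must not be counted twice; this only affects constants.

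\emph{Bound \eqref{eq:count:cons}.} The map $\phi$ inserts instantaneous events. We either take it to be single-valued on representatives, or bound its multiplicity by a constant per route, since the inserted $0$-loops are chosen from finitely many with bounded length by \cref{lem:pumping}. Either way,
\[\#\Net^{\mathbf{cons}}_{\varepsilon,\Theta}\le K\cdot\sum_{\pi,\pi'}\#\Net^{\mathbf{fol}}_\varepsilon(\pi,\pi',\tau),\]
where the sum ranges over proper routes of cost $\le\Theta+1$. By \cref{lem:count:routes} there are at most $K_12^{K_2(\Theta+1)}$ such routes. Taking logarithms and using \eqref{eq:count:fol} with $c(\pi')\le\Theta+1$ gives
\[\alpha\Theta\log\tfrac1\varepsilon+(\alpha+K_3)\log\tfrac1\varepsilon+(K_5+K_2)\Theta+\mathrm{const}.\]
The extra $\alpha\log\frac1\varepsilon$ is absorbed into $K_3$, since $\alpha$ depends only on $\aut$.

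\emph{Bound \eqref{eq:count:all}.} Here $\varepsilon$ is fixed and $T\to\infty$.
\begin{itemize}
\item \emph{Breakpoint sequences $\zeta$.} There are at most $\sum_{n\le f_\aut(\varepsilon)}(T/\varepsilon+1)^{2n+2}$ of them, which is polynomial in $T$ for fixed $\varepsilon$. Its logarithm divided by $T$ tends to $0$.
\item \emph{Odd intervals.} By \cref{prop:max}, the universal net on these has $\log$-size $O((b_i-a_i)/\varepsilon)+O(1)$. Summed over the intervals this is $O(f_\aut(\varepsilon)/\varepsilon)+O(f_\aut(\varepsilon))$, a constant in $T$.
\item \emph{Even intervals.} Apply \eqref{eq:count:cons} with precision $\varepsilon/(K_0+1)$. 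Since $\log\frac{K_0+1}{\varepsilon}=\log\frac1\varepsilon+O(1)$, the terms sum to at most
\[\alpha T\log\tfrac1\varepsilon+O(T)+(n+1)\bigl(K_3\log\tfrac{K_0+1}\varepsilon+K_7\bigr),\]
and the last term is bounded independently of $T$ because $n\le f_\aut(\varepsilon)$.
\end{itemize}
Dividing by $T$ and letting $T\to\infty$ gives $\alpha\log\frac1\varepsilon+K_9$.
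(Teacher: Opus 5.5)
Your arguments for \eqref{eq:count:fol} and \eqref{eq:count:all} follow the paper's proof essentially step by step. This covers the per-transition choice counts by kind of transition, the bound $r(\pi')\le\alpha c(\pi')$, the at most $K_12^{K_2(\Theta+1)}$ proper routes, and the polynomially many breakpoint sequences for fixed $\varepsilon$. You bound the reward-$0$ factors through $n\le K_{10}c(\pi')+K_{11}$ rather than through blocks of length $\le B$; that variation is inessential.

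\textbf{The flawed step is your handling of $\phi$ in \eqref{eq:count:cons}.} The fluffing $\phi(w)$ does not insert $0$-loops. It is the set of all words obtained by inserting, after \emph{every} letter of $w$, an arbitrary alphabetically sorted set of distinct letters at the same timestamp. So $\#\phi(w)$ grows exponentially in $|w|$: it reaches $2^{|w|\#\Sigma}$ when the timestamps of $w$ are distinct. It is not bounded by a constant per route.

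\Cref{lem:pumping} does not rescue this. It bounds the length of factors of cost and reward $0$ in proper routes. It says nothing about the number of places where $0$-loops are excised in \cref{lem:fluff}, and that number can be proportional to the length of the route.

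Nor can you make $\phi$ single-valued. $\Net^{\mathbf{cons}}_{\varepsilon,\Theta}(\tau)$ is defined with the multi-valued $\phi$. Moreover, the net property of \cref{cor:net:fluff} needs $\phi(v)$ to contain the specific insertion pattern $\mathop{letters}(\zeta_i)$ determined by the given conservative word, and this pattern varies from word to word. So your inequality $\#\Net^{\mathbf{cons}}_{\varepsilon,\Theta}(\tau)\le K\sum_{\pi,\pi'}\#\Net^{\mathbf{fol}}_\varepsilon(\pi,\pi',\tau)$ with a constant $K$ is unjustified. The multiplicity you must account for is $2^{O(\Theta)}$, not $O(1)$.

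The repair is the one the paper uses, and it is cheap. By \cref{lem:count:routes}, every word of $\Net^{\mathbf{fol}}_{\varepsilon,\Theta+1}(\tau)$ has at most $K_{10}(\Theta+1)+K_{11}$ letters; you already used this fact in the first part. Hence
\[\log\#\phi\bigl(\Net^{\mathbf{fol}}_{\varepsilon,\Theta+1}(\tau)\bigr)\le\log\#\Net^{\mathbf{fol}}_{\varepsilon,\Theta+1}(\tau)+\#\Sigma\,\bigl(K_{10}(\Theta+1)+K_{11}\bigr).\]
The extra term is linear in $\Theta$ and independent of $\varepsilon$, so it is absorbed into $K_6\Theta+K_7$. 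With this correction \eqref{eq:count:cons} holds, and your derivation of \eqref{eq:count:all} from it goes through unchanged.
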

We then obtain the upper bound for $\bandh_\varepsilon$:
\begin{restatable}[Upper bound]{proposition}{propUpperBound}
\label{lem:upper}If $\alpha>0$, then exists a constant $K_9$ such that for all $\varepsilon>0$  it holds that
$$\bandh_\varepsilon(L(\aut))\leq \alpha\log\frac1\varepsilon +K_9.$$
\end{restatable}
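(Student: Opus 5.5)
The upper bound for $\bandh_\varepsilon$ follows from \cref{lem:it:is:a:net} and \cref{lem:eps:count}; what remains is to check that the constant in \eqref{eq:count:all} does not depend on $\varepsilon$. By \cref{lem:it:is:a:net}, $\Net_{\varepsilon,T}$ is an $\varepsilon$-net for $L_T(\aut)$ inside the universal language $U$. Hence, by the definition of $\varepsilon$-entropy, $\ent_\varepsilon(L_T(\aut))\leq\log\#\Net_{\varepsilon,T}$ for every $T$. Dividing by $T$, taking $\limsup_{T\to\infty}$, and applying \eqref{eq:count:all} gives
\[
\bandh_\varepsilon(L(\aut))=\limsup_{T\to\infty}\frac{\ent_\varepsilon(L_T(\aut))}{T}\leq\alpha\log\frac1\varepsilon+K_9 .
\]
The hypothesis $\alpha>0$ is only used to ensure that we are in the normal/nondegenerate setting where the preceding lemmas (in particular \cref{lem:count:routes}) apply, and to avoid $0\cdot\infty$-type issues. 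For $\alpha=\infty$ the bound is vacuous.

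The real work therefore sits in \eqref{eq:count:all}, and I would verify it as follows. Each $\Net^\zeta_\varepsilon$ is a concatenation of odd pieces and even pieces. The odd pieces are universal $\varepsilon$-nets of total duration $\leq f_\aut(\varepsilon)$. By \cref{prop:max}, each of them has log-size $O(\text{duration}/\varepsilon)+O(1)$ per piece. Their total contribution is therefore at most $O(f_\aut(\varepsilon)/\varepsilon+f_\aut(\varepsilon))$, which is independent of $T$.

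The even pieces are handled by \eqref{eq:count:cons} with precision $\varepsilon/(K_0+1)$. Summing over the at most $f_\aut(\varepsilon)+1$ pieces, whose durations $\Theta_i$ add up to at most $T$, their total contribution is
\[
\alpha T\log\frac{K_0+1}{\varepsilon}+K_6T+(f_\aut(\varepsilon)+1)\Bigl(K_3\log\frac{K_0+1}{\varepsilon}+K_7\Bigr).
\]
Finally, the number of breakpoint sequences $\zeta$ is at most $(T/\varepsilon+1)^{2f_\aut(\varepsilon)+2}$, whose logarithm is $O(f_\aut(\varepsilon)\log(T/\varepsilon))=o(T)$.

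Dividing everything by $T$ and letting $T\to\infty$, all the $f_\aut$-dependent terms vanish. What remains is $\alpha\log\frac1\varepsilon+\alpha\log(K_0+1)+K_6$, so we can take $K_9=\alpha\log(K_0+1)+K_6$.

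The main obstacle is making sure that $K_9$ really is uniform in $\varepsilon$. This requires that the constants in \eqref{eq:count:cons}, i.e. those coming from \cref{lem:count:routes} and from the per-route count \eqref{eq:count:fol}, grow only linearly in $\Theta$ with coefficients independent of $\varepsilon$. Every $\varepsilon$-dependent overhead must be confined to terms that are either proportional to $\log\frac1\varepsilon$ with the coefficient $\alpha$ per unit cost, or else independent of $T$, so that it disappears in the $\limsup$. I would check in particular that the $\alpha c(\pi')$ count uses $r(\pi')\leq\alpha c(\pi')$ from \cref{lem:rewcost}, applied to proper cyclic routes. The summation over routes of cost $\leq\Theta+1$ then adds only $K_2\Theta+\log K_1$, and the $K_3\log\frac1\varepsilon$ term appears once per conservative piece rather than per time unit.
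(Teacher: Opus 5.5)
Your proposal is correct and is essentially the paper's proof. The paper also derives the proposition immediately from \cref{lem:it:is:a:net} together with \eqref{eq:count:all}, and it proves \eqref{eq:count:all} the way you sketch: the odd universal-net intervals are bounded by a $T$-independent function of $\varepsilon$, the even intervals are handled via \eqref{eq:count:cons} at precision $\varepsilon/(K_0+1)$, and the $O(f_\aut(\varepsilon)\log(T/\varepsilon))=o(T)$ count of breakpoint sequences is absorbed in the $\limsup$, with $K_9$ absorbing $\alpha\log(K_0+1)+K_6$.

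One minor point: the hypothesis $\alpha>0$ is not actually needed, since the paper notes the nets and counts also work for $\alpha=0$. What \cref{lem:count:routes} requires is $\alpha<\infty$.
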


The final step is given by the following proposition which also complements the results from \cite{3classes}:
\begin{restatable}{proposition}{propAlphaZero}
\label{lem:meager}
If the graph $G$ is acyclic or $\alpha=0$, then $\aut$ is meager.
\end{restatable}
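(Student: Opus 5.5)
We prove \cref{lem:meager} by re-running the upper-bound machinery of \cref{lem:upper} with $\alpha=0$ and checking that every $\log\frac1\varepsilon$ term vanishes. Since \cref{prop:max} and the classification of \cite{3classes} say every TA is meager, normal or obese, it is enough to show $\bandh_\varepsilon(L(\aut))=O(1)$ as $\varepsilon\to0$, and then use \eqref{eq:band:ineq} to get the same bound for $\bandc_\varepsilon$. (Alternatively, one can show that $\aut$ contains none of the "freedom" patterns of \cite{3classes}, but the net-counting route reuses what we already have.)

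\textbf{Step 1: reduce to cycles of reward $0$.} If $G$ is acyclic, every route has bounded length. If $\alpha=0$, every cycle of $G$ has reward $0$, because costs are nonnegative and the ratio is $0$. In both cases, any path of $G$ has at most $|Q'|$ edges of reward $1$: a longer path would repeat a vertex between two rewarded edges and so contain a rewarded cycle. Hence every route $\pi,\pi'$ satisfies $r(\pi')\leq K$ with $K$ independent of $\varepsilon$ and $T$.

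\textbf{Step 2: rerun the counting with this bound.} \cref{lem:fac}, \cref{lem:eps:con}, \cref{lem:realize}, \cref{lem:fluff} and the nets of \cref{con:con:1,con:con:T,con:con:con,con:net} do not use $\alpha>0$. So we revisit the proof of \eqref{eq:count:fol}, replacing the bound $r\le \alpha c$ by $r\le K$. In $\Net^{\mathbf{fol}}_\varepsilon(\pi,\pi',\tau)$, only transitions with $r(\delta')=1$ and the first $O(M)$ time units contribute a factor $O(1/\varepsilon)$. Punctual transitions are fixed up to $O(\varepsilon)$ by item 3 of \cref{con:con:1}, and heartbeats are fixed by item 1.

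This gives $\log\#\Net^{\mathbf{fol}}\leq K'\log\frac1\varepsilon+K_5c(\pi')+K_{12}$, with the $\log\frac1\varepsilon$ coefficient no longer growing with $c(\pi')$. Summing over routes with \cref{lem:count:routes}, we get $\log\#\Net^{\mathbf{cons}}_{\varepsilon,\Theta}\leq K''\log\frac1\varepsilon+K_6\Theta+K_7$.

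\textbf{Step 3: the main obstacle, the gluing in \cref{con:net}.} Each conservative segment now costs $K''\log\frac1\varepsilon$ bits regardless of its length. There are up to $n\le f_\aut(\varepsilon)$ segments, and the lossy chunks contribute through $f_\aut(\varepsilon)$. This is only harmless because $f_\aut(\varepsilon)$ and the number of breakpoint sequences do not depend on $T$. The total log-cardinality is therefore $K_6T+g(\varepsilon)$ for some function $g$ independent of $T$.

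We must check this carefully: the count of breakpoint sequences and the universal nets on odd intervals have total duration at most $f_\aut(\varepsilon)$, so they are $T$-independent. Dividing by $T$ and letting $T\to\infty$ then gives $\bandh_\varepsilon\le K_6=O(1)$, so $\aut$ is meager. If the acyclic case yields an empty or bounded-duration language, the bound holds trivially.
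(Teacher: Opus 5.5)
Your proposal is correct and is essentially the paper's argument. For $\alpha=0$ the paper simply invokes \cref{lem:it:is:a:net} and \eqref{eq:count:all} with $\alpha=0$, which is what your Steps 1--3 unpack. For acyclic $G$ it notes via \cref{lem:realize,lem:fac} that there are no conservative cycles, so all words have duration below $f_\aut(\varepsilon)$; your uniform bounded-reward treatment also covers this case. One small correction: the number of breakpoint sequences in \cref{con:net} is not $T$-independent, since it grows polynomially in $T$ (the paper bounds it by $(T/\varepsilon)^{f_\aut(\varepsilon)}$). Its logarithm $f_\aut(\varepsilon)\log(T/\varepsilon)$ is $o(T)$, so your conclusion $\bandh_\varepsilon\leq K_6$ still holds.
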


Now we may combine \cref{lem:upper} with \cref{lem:lower} and the inequalities  \eqref{eq:band:ineq} to get the identities in \cref{thm1}.


%% file: conclusion.tex
\subsection{Summarizing four papers}
Putting together the results from \cite{3classes,CIAA,obese} and this article, we obtain the following comprehensive result:
\begin{thm}\label{thm:all}
Given a nondeterministic timed automaton $\aut$ without $\epsilon$-transitions, one of the following holds for its language $L$ (for $\varepsilon\to 0$):
\begin{itemize} 
\item $L$ is \textbf{meager}, i.e. $\bandc_\varepsilon (L)=O(1)$ and 
$\bandh_\varepsilon (L)=O(1)$. If moreover $\aut$ is deterministic, a number $\alpha\geq 0$ can be computed (as a logarithm of an algebraic number), such that both $\bandc_\varepsilon (L)$ and $\bandh_\varepsilon (L)$  equal $\alpha+o(1)$. 
\item $L$ is \textbf{normal}: a rational $\alpha>0$ can be computed (in \FPSPACE), such that both $\bandc_\varepsilon (L)$ and $\bandh_\varepsilon (L)$ equal $\alpha\log\frac1 \varepsilon+o\left(\log\frac1\varepsilon\right)$. 
\item $L$ is \textbf{obese}: $\alpha>0$ can be computed (as a logarithm of an algebraic number), such that both $\bandc_\varepsilon (L)$ and $\bandh_{\varepsilon/2} (L)$ equal $\frac\alpha\varepsilon +o\left(\frac1\varepsilon\right)$.
\end{itemize}
Given $\aut$, the three decision problems ``is it meager?'', ``is it normal?'', ``is it obese?'' are \PSPACE-complete.
\end{thm}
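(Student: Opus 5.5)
This theorem is a summary: I will prove it by assembling the results of \cite{3classes,CIAA,obese} with \cref{thm1}, plus a bandwidth-preserving reduction from an arbitrary nondeterministic TA to an \CTA. The first step is to bring $\aut$ into \CTA\ form without changing the bandwidth. Adding the heartbeat clock $h$ preserves the bandwidth by \cite[Lemma 9]{obese}. The resulting automaton has $h\leq 1$ on every transition, so \cref{lem:2rta} applies and yields an equivalent bounded \RTA, which is exponentially larger but uses the same clocks. Since the language is unchanged at that step, both $\bandc_\varepsilon$ and $\bandh_\varepsilon$ are unchanged. The absence of $\epsilon$-transitions is needed here, because the heartbeat and bandwidth arguments of \cite{3classes,obese} assume every transition emits a visible letter.

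The second step establishes the trichotomy. By the exhaustive classification of \cite{3classes}, $L$ is meager, normal or obese. Independently, \cref{thm1} applied to the \CTA\ $\aut'$ gives a finer, computable version of the same split. If $G$ is acyclic or $\alpha=0$, then $L$ is meager. If $\alpha=\infty$, then $L$ is obese. Otherwise both bandwidths equal $\alpha\log\frac1\varepsilon+o(\log\frac1\varepsilon)$ with $\alpha$ rational. The complexity claim for normal languages follows from the Corollary to \cref{thm1}: $G$ is exponential in $|\aut|$, and \cref{lem:rewcost} gives an \FNL\ computation on this exponential graph, hence \FPSPACE.

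For the meager case with $\aut$ deterministic, I invoke \cite{CIAA}. The barycentric abstraction turns $\aut$ into a simply-timed automaton whose bandwidth is $\alpha+o(1)$, with $\alpha$ the logarithm of a spectral radius, hence of an algebraic number. For the obese case, I invoke \cite{obese}: $\bandc_\varepsilon$ and $\bandh_{\varepsilon/2}$ equal $\frac\alpha\varepsilon+o(\frac1\varepsilon)$, with $\alpha$ obtained as a logarithm of an algebraic number via maximal reward-per-time. The \PSPACE-completeness of the three decision problems is exactly the result of \cite{3classes}.

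The main obstacle is consistency of hypotheses across the four papers. Each result must apply to the same class, nondeterministic TA without $\epsilon$-transitions, except the meager computability statement, which needs determinism. The preprocessing must also not destroy determinism when \cite{CIAA} is used. I would handle this by applying \cite{CIAA} directly to the original deterministic automaton, and by checking that each transformation, heartbeat and region-splitting, preserves both the language class and the bandwidth notions as each cited paper defines them.
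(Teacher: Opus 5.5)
Your proposal is correct and follows the paper, which obtains this theorem simply by combining the results of \cite{3classes,CIAA,obese} with \cref{thm1}, applied after the bandwidth-preserving heartbeat and \CTA\ preprocessing, and with its \FPSPACE\ corollary. Applying \cite{CIAA} directly to the original deterministic automaton, so that determinism is kept in the meager case, is a reasonable detail the paper leaves implicit.
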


\subsection{Perspectives}
While \Cref{thm:all} is almost exhaustive, a couple of questions should still be explored:
\begin{itemize}
\item settling the exact complexity of computing $\alpha$  (exactly and approximately) in each of the three cases;
\item computing $\alpha$ for non-deterministic meager automaton;
\item and analyzing the bandwidth for timed automata with $\epsilon$-transitions.
\end{itemize}
More generally, we are also interested in: 
\begin{itemize}
    \item more precise computation of $\capa_\varepsilon(L_T),\ent_\varepsilon(L_T)$, in the spirit of \cite{generating}, where not only the asymptotic exponential growth rate, but also the exact generating function was computed (for a different ``metrics''); 
    \item interpretation of bandwidth  in terms of Kolmogorov's complexity and topological entropy;
    \item timed coding/decoding by transducers;
    \item applications to information transmission and compression of timed data.
\end{itemize}

Our results and techniques strongly depend on specific properties of timed automata. More general study of extended automata, especially hybrid ones, from the standpoint of information theory,  remains an attractive research subject. 

%% file: appendix.tex
\section{Preliminaries}

\subsection{On geometry of points, runs and words}\label{sec:geom}
\emph{A busy reader can skip this section except its first paragraph, \cref{def:poly:lang,lem:sep} that will be used in the proof of the lower bound. A motivated reader is invited to revise \cref{sec:pseudo} before continuing.}

We recall some simple geometric  notions and facts. For a vector $\z\in\real^n$, its \emph{$\infty$-norm} is $||\z||_\infty=\max_{i\in1..n}|z_i|$, and the associated \emph{$\infty$-distance} between $\y$ and $\z$ is $||\y-\z||_\infty$. A (convex) \emph{polytope}  $P\subset\real^n$ is the set of vectors satisfying a finite system of linear constraints; it has a \emph{dimension}  $\geq k$ whenever there exists a point  $\x\in P$ and $k$ linearly independent vectors  $\xi_j\in\real^n, j=1..k$, such that $\x+\xi_j\in P$ for all $j=1..k$. 
\begin{lemma}[folk]\label{lem:sep:infty}
For any polytope $P$ of dimension $\geq k$, there exists $\beta>0$, such that for  all $\varepsilon$ small enough $P$ contains an $\varepsilon$-separated set (with respect to $\infty$-distance) of cardinality $>\beta/\varepsilon^k$. 
\end{lemma}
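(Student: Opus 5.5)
The plan is a grid construction inside a small cube, followed by a volume-free counting argument. Let $P\subseteq\real^n$ be a polytope of dimension $\geq k$. By the definition recalled above, there exist $\x\in P$ and linearly independent $\xi_1,\dots,\xi_k$ with $\x+\xi_j\in P$ for all $j$.

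Since $P$ is convex, it contains the simplex $\Delta=\conv\{\x,\x+\xi_1,\dots,\x+\xi_k\}$. Consider the affine map $A:\real^k\to\real^n$, $A(\mu)=\x+\sum_j\mu_j\xi_j$. It is injective because the $\xi_j$ are independent, and it maps the standard simplex $\{\mu\geq0,\sum\mu_j\leq1\}$ onto $\Delta$. This standard simplex contains the cube $[0,1/k]^k$. Hence $P\supseteq A([0,1/k]^k)$, a $k$-dimensional parallelepiped.

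The key estimate is a lower Lipschitz bound for $A$. Because $A$ is injective and linear up to translation, there is $c>0$ with $||A(\mu)-A(\mu')||_\infty\geq c\,||\mu-\mu'||_\infty$ for all $\mu,\mu'$. This holds since all norms on $\real^k$ are equivalent and $\mu\mapsto||\sum_j\mu_j\xi_j||_\infty$ is a norm on $\real^k$; one may take $c$ to be its minimum on the unit $\infty$-sphere, which is positive by compactness.

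Now build the separated set. Set $h=2\varepsilon/c$ and take the grid $\{0,h,2h,\dots\}^k\cap[0,1/k]^k$. Distinct grid points are at $\infty$-distance $\geq h$, so their images under $A$ are at distance $\geq ch=2\varepsilon>\varepsilon$. The image set is therefore $\varepsilon$-separated and lies in $P$.

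Its cardinality is $(\lfloor 1/(kh)\rfloor+1)^k\geq (1/(kh))^k=(c/(2k\varepsilon))^k$. Choosing $\beta=\frac12(c/2k)^k$ makes the cardinality strictly greater than $\beta/\varepsilon^k$. The bound holds for all $\varepsilon$, and in particular for all small enough $\varepsilon$.

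The only step that needs care is the positivity of $c$, that is, the injectivity of $A$ together with the compactness argument; the rest is routine counting. If $k=0$ the statement is trivial, since any single point of $P$ gives a set of cardinality $1>\beta$ for $\beta<1$.
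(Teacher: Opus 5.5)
Your proof is correct. Convexity gives $\conv\{\x,\x+\xi_1,\dots,\x+\xi_k\}\subseteq P$, the cube $[0,1/k]^k$ lies in the standard simplex, and the constant $c=\min_{\|\nu\|_\infty=1}\|\sum_j\nu_j\xi_j\|_\infty>0$ turns an $h$-grid with $h=2\varepsilon/c$ into an $\varepsilon$-separated subset of $P$ of size at least $(c/(2k\varepsilon))^k$. The paper states this lemma as folklore and gives no proof, so there is nothing to compare against; your grid-in-a-parallelepiped argument is the standard one it leaves implicit. It even gives the bound for every $\varepsilon>0$, with the explicit constant $\beta=\frac{1}{2}(c/2k)^k$.
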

In other words, with respect to $\infty$-distance, $\capa_\varepsilon(P)\geq k\log(1/\varepsilon)+O(1)$.

%
%

We will apply geometric reasoning to runs and timed words; to this aim, we associate vectors (denoted with the same symbols in boldface) with them. 
We associate with each run 
 $$\rho = (q_0,\x_0) \trans{(\delta_1,d_1)} (q_1,\x_1)\cdots \trans{(\delta_n, d_n)}(q_n,\x_n),
 $$
 a vector $\vec{\rho}=(\x_0,\x_1,\dots,\x_n)\in \real^{(n+1)\#X}$.
We associate with each timed word $w=(a_1,t_1)\dots(a_n,t_n)$  its timing $\w=(t_1,\dots t_n) \in \Kone_n \subset\real^n$ and untiming $\untiming(w)=a_1\dots a_n \in \Sigma^*$. Here, the cone $\Kone_n$ of all possible timings is defined by inequalities $0\leq t_1\leq\cdots t_n$. 

Our next aim is to port \cref{lem:sep:infty} to timed words, which is not straightforward, since pseudo-distance $d$ is quite different from $\infty$-distance. 

We define the function $\gap:\Kone_n\to \real_+$ as follows : $\gap(t_1,\dots,t_n)=\min\{t_{i+1}-t_i | i\in 1..n-1, t_{i+1}>t_i\}$. Given a polytope $P\subseteq \Kone_n$ let $\gap(P)=\inf_{\z\in P}\gap(\z)$. In words, $\gap(P)=\mu$ means that adjacent events in any timing in $P$ are either simultaneous, or separated by at least $\mu$ time units.

\begin{definition}\label{def:poly:lang}
Given a polytope $P\subseteq \Kone_n$ with $\dim P = k$, we say a timed language $L$ is \emph{polytopic} (of dimension $k$) with respect to $P$ whenever all words of $L$ have the same untiming and the set of timings of $L$ is exactly $P$.
\end{definition}
For instance, we remark that the language of any single path of a TA is polytopic, and its dimension is the number of non-punctual transitions of the path.

\begin{lemma}\label{lem:gap}Every polytope $P\subseteq \Kone_n$ of dimension $k$ contains a polytope $P'\subseteq P$ of the same dimension and with $\gap(P')>0$. 
\end{lemma}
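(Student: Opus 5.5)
The plan is to build the smaller polytope $P'$ by grouping the coordinates of $P$ according to a fixed weak ordering of the timestamps. We choose a relative interior point of $P$ and shrink a small box around it, so that any two adjacent timestamps are either always equal or always far apart.

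First, pick a point $\z$ in the relative interior of $P$. Such a point exists and, since $\dim P=k$, it has a neighbourhood in the affine hull $A=\mathrm{aff}(P)$ contained in $P$. Consider the indices $i\in 1..n-1$. For each $i$, either the linear form $t_{i+1}-t_i$ vanishes identically on $A$, or it does not.
\begin{itemize}
\item If it is nonzero on $A$, then its zero set meets $A$ in a proper affine subspace.
\item A finite union of proper affine subspaces cannot cover an open subset of $A$.
\end{itemize}
Hence we may move $\z$ slightly within the relative interior to a point $\z'$ such that $z'_{i+1}-z'_i\neq 0$ for every $i$ whose form is not identically zero on $A$. Since $P\subseteq\Kone_n$, these nonzero differences are in fact positive. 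Let $\mu=\min\{z'_{i+1}-z'_i\mid z'_{i+1}>z'_i\}>0$.

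Next, take $P'=P\cap\{\y : \|\y-\z'\|_\infty\leq \mu/4\}$. This is a polytope, since the box adds only finitely many linear constraints. It has dimension $k$: because $\z'$ lies in the relative interior, a small ball of $A$ around $\z'$ is contained in $P'$, so the $k$ linearly independent directions witnessing $\dim P\ge k$ can be scaled down to remain inside $P'$. Now take any $\y\in P'$ and any index $i$. If $t_{i+1}-t_i\equiv 0$ on $A$, then $y_{i+1}=y_i$. Otherwise $y_{i+1}-y_i\geq \mu-\mu/2=\mu/2$. Therefore every positive gap is at least $\mu/2$, so $\gap(P')\geq\mu/2>0$.

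The only delicate step is the choice of $\z'$ avoiding the hyperplanes $t_{i+1}=t_i$ that do not contain $A$. That is a standard genericity (Baire or measure) argument within the affine hull, so the proof is routine.
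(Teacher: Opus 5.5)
Your proof is correct and follows essentially the paper's route: pick a relative-interior point, define $\mu$ from its strictly positive adjacent differences, and cut $P$ down so that those differences stay bounded below. The paper intersects $P$ with the half-spaces $y_{i+1}-y_i\geq\mu$ and keeps the dimension via the midpoints $\x+\xi_j/2$, whereas you intersect with a small $\infty$-box. Your genericity perturbation to $\z'$ is harmless but unnecessary: since $t_{i+1}-t_i\geq 0$ on $P$, if it vanishes at a relative-interior point it vanishes on all of $P$, so any relative-interior point already has equal adjacent coordinates exactly where they are equal throughout $P$.
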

\begin{proof}
We choose an interior point $\x=(x_1,\dots,x_n)\in P$ and $k$ linearly independent vectors  $\xi_j$, such that $\x+\xi_j\in P$ for all $j=1..k$. 
Let $I\subseteq 1..n-1$ contain all the indices $i$ for which $x_{i+1}>x_i$. 
We define $\mu=\min_{i\in I} (x_{i+1}-x_i)/2$ and $P'$ the subset (polytope) of $P$ containing the points $\y$ satisfying $\forall i\in I, y_{i+1}-y_i\geq\mu$. The polytope $P'$ obtained in this way has a gap $\mu$; also, $\x\in P'$ and $\x+\xi_j/2\in P'$, thus the dimension is $\geq k$. \qed
\end{proof}

The notion of gap helps relate the pseudo-distance $d$ with the $\infty$-distance. 

\begin{lemma}\label{lem:compare}
Let $w^1$ and $w^2$ be timed words with the same untiming. Then:
\begin{itemize}
\item $d(w^1,w^2)\leq ||\w^1-\w^2||_\infty$;
\item if moreover $w^1,w^2$ belong to  the same polytopic language with polytope $P$ of gap $\mu > 0$, and  $d(w^1,w^2)<\mu/2$, then $d(w^1,w^2)= ||\w^1-\w^2||_\infty$.  
\end{itemize}
\end{lemma}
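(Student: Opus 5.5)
The first item is a direct matching argument. Since $w^1$ and $w^2$ have the same untiming $a_1\dots a_n$, we can match each event $(a_i,t^1_i)$ of $w^1$ with the event $(a_i,t^2_i)$ of $w^2$, which carries the same letter. In the definition of $\dr(w^1,w^2)$, the inner minimum for index $i$ is therefore at most $|t^1_i-t^2_i|\le ||\w^1-\w^2||_\infty$. Taking the maximum over $i$ gives $\dr(w^1,w^2)\le||\w^1-\w^2||_\infty$. The argument is symmetric, so the same bound holds for $\dr(w^2,w^1)$, and hence for $d$.

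For the second item it remains to prove $||\w^1-\w^2||_\infty\le d(w^1,w^2)$. Put $\eta=d(w^1,w^2)<\mu/2$. Because the untimings coincide, the two words have the same multiset of letters. The plan is to show that the identity matching is forced up to groups of simultaneous events.

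First, group the events of $w^1$ into \emph{blocks}: maximal runs of consecutive indices with equal timestamps. Since $\w^1\in P$ and $\gap(P)=\mu$, distinct blocks of $w^1$ are at least $\mu$ apart in time, and the same holds for $w^2$. Every event of $w^1$ lies within $\eta$ of some event of $w^2$ with the same letter, and conversely.

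Second, I will prove by induction on the index $i$, scanning from left to right, that $|t^1_i-t^2_i|\le\eta$. I expect this step to be the main obstacle, because the pseudo-distance does not respect multiplicities or order.

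For the induction I would use convexity. The segment between $\w^1$ and $\w^2$ lies in $P$, so along it the order pattern of the timings (which adjacent indices are equal and which are strictly increasing) can change only where some gap closes. Since $\gap(P)=\mu>0$, any two points of $P$ at $\infty$-distance $<\mu/2$ share the same pattern. Indeed, a strict gap $t_{i+1}-t_i\ge\mu$ cannot drop to $0$ under a perturbation of less than $\mu/2$ in each coordinate. More carefully, the function $s\mapsto t_{i+1}(s)-t_i(s)$ is affine along the segment, so it is either identically zero or zero at most at one point. If it is zero at exactly one point, then near that point the value is positive but arbitrarily small, which contradicts the gap bound. Hence the pattern is constant along the segment, and $w^1,w^2$ have identical block structure.

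The blocks of $w^1$ are $\mu$-separated, and the same holds for $w^2$. I then argue by induction on blocks that the $j$-th block of $w^1$ is within $\eta<\mu/2$ of the $j$-th block of $w^2$. The nearest same-letter partner of an event in block $j$ must lie in the block of $w^2$ that is within $\eta$. Because the blocks of $w^2$ are $\mu$-separated and $\eta<\mu/2$, exactly one block of $w^2$ can be that close to a given block of $w^1$, and vice versa. Consequently the nearest-block correspondence is a bijection between blocks, and it is monotone in time. Identical block structures, together with equal untimings, then force this correspondence to be the index-wise one. It follows that $|t^1_i-t^2_i|\le\eta$ for every $i$, which is the required reverse inequality.
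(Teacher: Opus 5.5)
Your proof is correct, but the second item argues differently from the paper.

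\textbf{How the two routes differ.} The paper runs an index-by-index induction. It first shows $|t^1_i-t^2_i|<\mu/2$ for every $i$. To do so it takes the closest same-letter match $(a_{i+1},t^2_j)$ of $(a_{i+1},t^1_{i+1})$, with ties broken by smallest index. It then rules out $j\le i$, using the induction hypothesis plus the gap, and $j\ge i+2$, since $t^2_{i+1}$ would be a better match. Knowing $\|\w^1-\w^2\|_\infty<\mu/2$, it then checks that every letter's best match is its index-wise twin, so $\dr$ collapses to $\max_i|t^1_i-t^2_i|$.

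You proceed differently:
\begin{enumerate}
\item The affine-segment argument shows the equality pattern of adjacent timestamps is constant on the whole convex polytope $P$, so $w^1,w^2$ have the same block partition.
\item $\mu$-separation of blocks together with $\eta<\mu/2$ gives a unique nearest-block bijection.
\item This bijection is order preserving: if $t(B_2)-t(B_1)\ge\mu$, their partners differ by at least $\mu-2\eta>0$. You assert this without proof; add that one line.
\item Hence it is index-wise, which gives $|t^1_i-t^2_i|\le\eta$ directly, and item 1 supplies the reverse inequality.
\end{enumerate}

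\textbf{What each approach buys.} Your route avoids the tie-breaking and the three-case analysis. It also makes explicit the convexity fact the paper uses silently when it says that, ``by definition of the gap'', either $t_{i+1}=t_i$ on all of $P$ or $t_{i+1}-t_i\ge\mu$ on all of $P$. It further shows that the letters matter only through the existence of some partner within $\eta$.

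\textbf{Presentation fixes.} Your first justification of pattern constancy (``two points of $P$ at $\infty$-distance $<\mu/2$'') would be circular, since that distance bound is what you are proving. Drop it and keep only the affine argument, which needs no distance assumption. Likewise, the announced ``induction on $i$'' and ``induction on blocks'' are not actually carried out; the bijection argument does all the work, so say that directly.
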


%
%
\begin{proof} We denote $||\w_1-\w_2||_\infty $ by $D$.
For each letter $(a_i,t^1_i)$ in $w^1$, its distance from the corresponding letter $(a_i,t^2_i)$ in $w^2$ does not exceed $D$; thus $\overrightarrow d(w_1,w_2)\leq D$. Similarly, $\overrightarrow d(w_2,w_1)\leq D$ and the first statement is proved.

Suppose now that the hypotheses of the second statement are fulfilled.  Let us prove first that $||\w_1-\w_2||_\infty <\mu/2$, that is
\begin{equation}\label{eq:ind}
    \forall i,\, |t^1_i-t^2_i|<\mu/2 
\end{equation}
by induction over $i$. 

W.l.o.g., suppose that $w^1$ starts earlier than $w^2$ (i.e.~$t^1_1\leq t^2_1$). In this case, the closest matching letter for $(a_1,t^1_1)$ is $(a_1,t^2_1)$ and thus $|t^1_1-t^2_1|\leq \overrightarrow d(w_1, w_2) <\mu/2 $, and the basis of induction is established.

Suppose now, by inductive hypothesis, that \eqref{eq:ind} holds for some $i$; we want to prove it for $i+1$. There are two cases, by definition of the gap:
\begin{itemize}
    \item in polytope $P$, for all points $t_{i+1}=t_i$. In this case, $|t^1_{i+1}-t^2_{i+1}|=|t^1_{i}-t^2_{i}| <\mu/2$, as required.  
    \item in polytope $P$, for all points $t_{i+1}-t_i\geq\mu$. 
    Suppose also that $t^1_{i+1}< t^2_{i+1}$ (the symmetric case is similar). The closest matching letter for $(a_{i+1},t^1_{i+1})$  is some $(a_{i+1},t^2_{j})$ in $w^2$ (if there are several equivalent matches, we take the smallest index), and it should satisfy $|t^1_{i+1}-t^2_{j}|\leq d(w^1,w^2)<\mu/2$. 
    
    
    {There are three cases. 
    \begin{itemize}
        \item $j=i+1$ and we are done: $|t^1_{i+1}-t^2_{i+1}|<\mu/2$ and \eqref{eq:ind} holds for $i+1$;
        \item $j\leq i$, in this case
        $$
        \begin{cases}
        t^1_i>t^2_i-\mu/2 &\text{by inductive hypothesis};\\
        t^1_{i+1}<t^2_j+\mu/2<t^2_i+\mu/2 & \text{by lemma hypothesis, since } j\leq i;\\
        t^1_{i+1}\geq t^1_i+\mu & \text{because of the gap},
        \end{cases}
        $$
        which is contradictory.
        \item $j\geq i+2$. However, letter $(a_{i+1},t^2_{i+1})$ stands between $t^1_{i+1}$ and $t^2_j$ and is thus a better match for $(a_{i+1},t^1_{i+1})$, which contradicts the choice of $j$.
    \end{itemize}
 We conclude that in the only possible case, the inductive step is valid. } 
\end{itemize}
We have proved that $D=||\w_1-\w_2||_\infty <\mu/2$. 



This implies that for any letter $(a_{i},t^1_{i})$ in $w^1$, the best match in $w^2$ is  $(a_{i},t^2_{i})$. Indeed, it is at the distance $|t^1_{i}-t^2_i|\leq D<\mu/2$. On the other hand, for all $j\neq i$ (except the trivial case when in $P$ all points satisfy $t_j=t_i$), by gap condition $|t^1_j - t^1_i|\geq\mu$,  hence $|t^2_j - t^1_i|\geq \mu/2$. 


So under the hypothesis of the second bullet of the lemma, $\overrightarrow d(w_1, w_2)$ simplifies as $\max_i |t^2_i - t^1_i|$, which is exactly $||\w_1-\w_2||_\infty$. The same is true for $\overrightarrow d(w_2, w_2)$ and hence for $d(w_1, w_2)$. 
\qed
\end{proof}

\begin{lemma}\label{lem:sep}
Let $L$ be a polytopic timed language of dimension $k$. Then there exists $\beta>0$ and $\mu>0$ such that for any $0<\varepsilon<\mu/2$, $L$ contains an $\varepsilon$-separated set (w.r.t the pseudo-distance $d$) of cardinality $>\beta/{\varepsilon}^k$.
\end{lemma}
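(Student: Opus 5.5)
The argument chains the three preceding lemmas: shrink the polytope to get a positive gap, take a dense separated set in the $\infty$-distance, and then convert $\infty$-separation into separation for $d$. Let $P\subseteq\Kone_n$ be the polytope of timings of $L$, of dimension $k$, and let $u$ be the common untiming of the words of $L$.

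First, \cref{lem:gap} gives a sub-polytope $P'\subseteq P$ with $\dim P'\geq k$ and $\mu:=\gap(P')>0$. The sublanguage $L'\subseteq L$ of words with untiming $u$ and timing in $P'$ is again polytopic, now with respect to $P'$, and this $\mu$ is the one in the statement.

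Next, \cref{lem:sep:infty} applied to $P'$ gives $\beta>0$ and an $\varepsilon$-separated set $S\subseteq P'$ for the $\infty$-distance with $\#S>\beta/\varepsilon^k$, valid for all $\varepsilon$ below some threshold $\varepsilon_0$. The set $\Sep$ consists of the words with untiming $u$ and timings in $S$. Distinct timings give distinct words, so $\#\Sep=\#S$.

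It remains to show that $\Sep$ is $\varepsilon$-separated for $d$. Take two distinct $w^1,w^2\in\Sep$, so $||\w^1-\w^2||_\infty>\varepsilon$, and suppose for contradiction that $d(w^1,w^2)\leq\varepsilon$. Since $\varepsilon<\mu/2$, we have $d(w^1,w^2)<\mu/2$. Both words lie in the polytopic language $L'$ whose polytope has gap $\mu$. The second item of \cref{lem:compare} then gives $d(w^1,w^2)=||\w^1-\w^2||_\infty>\varepsilon$, a contradiction. Hence $d(w^1,w^2)>\varepsilon$.

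The only delicate point is the range of $\varepsilon$. \Cref{lem:sep:infty} holds only for $\varepsilon<\varepsilon_0$, while the statement asks for every $\varepsilon<\mu/2$. If $\varepsilon_0<\mu/2$, decrease $\beta$ to cover $\varepsilon\in[\varepsilon_0,\mu/2)$. There, a single word from $L'$ is trivially $\varepsilon$-separated, so it suffices that $\beta/\varepsilon^k<1$ on this range, for example $\beta<\varepsilon_0^k$. Alternatively, shrink $\mu$ to $\min(\mu,2\varepsilon_0)$; the statement only asserts that some $\mu>0$ exists, and the gap condition still holds for the smaller $\mu$.
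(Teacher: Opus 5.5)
Your proof is correct and follows the paper's argument: \cref{lem:gap} gives a positive-gap sub-polytope, \cref{lem:sep:infty} gives an $\infty$-separated set in it, and the second item of \cref{lem:compare} turns this into separation for $d$. You also fill in two points the paper's proof leaves implicit: you make the polytopic sublanguage with polytope $P'$ explicit so that \cref{lem:compare} applies, and you reconcile the ``$\varepsilon$ small enough'' threshold of \cref{lem:sep:infty} with the range $\varepsilon<\mu/2$.
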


\begin{proof}
We first apply \cref{lem:gap} to the polytope $P$ of $L$, and obtain $P'\subseteq P$ of dimension  $k$ with $\mu=\gap(P')>0$. Let now $\varepsilon<\mu/2$ and $\Sep\subset P'$ an  $\varepsilon$-separated set of cardinality $>\beta/{\varepsilon}^k$ in $P'$  w.r.t $||\cdot||_\infty$ (existing by \cref{lem:sep:infty}). 
For any two distinct words $w_1,w_2\in L$ with timings in $\Sep$,
\begin{itemize}
    \item either $d(w_1,w_2)\geq \mu/2>\varepsilon$,
    \item or $d(w_1,w_2)< \mu/2$, and by \cref{lem:compare},  $d(w_1,w_2)=||\w^1-\w^2||_\infty>\varepsilon$. \qed
\end{itemize}
\end{proof}

\subsection{Exercises in Puri's theory}
Definitions and results in this section are citations, variants, or extensions of those in the groundbreaking \cite{puri}, adapted to our settings. Remark that  these results are stated for the closed semantics of \RTA. 


Let us start with a convexity result. 
\begin{lemma}[\cite{puri}]\label{lem:puri:conv}
  In a TA, whenever $\rho_1,\dots,\rho_m$ are runs along the same path, and $\lambda_i\geq0, \ i=1..m$ satisfy $\sum \lambda_i=1$, then there exists a run $\rho$ along the same path with $\vec{\rho}=\sum \lambda_i\vec{\rho}_i$. This also holds under the closed semantics. 
\end{lemma}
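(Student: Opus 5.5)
The plan is to reduce the statement to the convexity of the set of runs of a fixed path. Fix the path $\pi=\delta_1\cdots\delta_n$ and let $P_\pi\subseteq\real^{(n+1)\#X}\times\real^n$ be the set of pairs $(\vec\rho,(d_1,\dots,d_n))$, where $\vec\rho$ is the vector of states of a run along $\pi$ and the $d_j$ are its delays. It suffices to show that $P_\pi$ is convex. Indeed, given runs $\rho_1,\dots,\rho_m$, we take the convex combination of their pairs (states together with delays). This yields a point of $P_\pi$, i.e.\ a run $\rho$, whose state vector is exactly $\sum\lambda_i\vec\rho_i$.

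To show convexity, I will write down the constraints defining a single timed transition $(q,\x)\trans[d]{\delta}(q',\x')$ for $\delta=(q,q',a,\guard,\reset)$ and check that each is linear in the unknowns $(\x,d,\x')$.
\begin{itemize}
\item The condition $\x\models S(q)$ is a conjunction of constraints $x\sim b$ and $x\sim y+b$, which are linear (strict or non-strict) inequalities in $\x$.
\item The condition $\x+d\models\guard$ is likewise a conjunction of linear inequalities in $(\x,d)$: a constraint $x+d\sim b$ is linear, and a diagonal constraint $(x+d)\sim(y+d)+b$ reduces to $x\sim y+b$.
\item The reset condition is a system of linear equations: $x'_c=0$ for $c\in\reset$ and $x'_c=x_c+d$ otherwise.
\item The condition $\x'\models S(q')$ is linear in $\x'$.
\item We also impose $d\geq 0$ (or $d>0$ in the closed semantics as defined).
\end{itemize}
A run along $\pi$ is the conjunction of these constraints over all $j$, with the intermediate states shared between consecutive transitions. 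Hence $P_\pi$ is the solution set of a finite system of linear equalities and of strict and non-strict linear inequalities.

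The set of solutions of any such system is convex. Strict inequalities are preserved under convex combinations because the weights $\lambda_i$ are nonnegative and sum to $1$, so at least one of them is positive. Therefore the combined tuple satisfies every constraint. Moreover, the combined delays $\sum_i\lambda_i d^i_j$ are exactly the delays compatible with the combined states, since $x'_c-x_c=d$ for any clock $c$ that is not reset. (If every clock is reset, then $d$ is still a legitimate delay, being a convex combination of legal delays.)

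For the closed semantics the same argument applies verbatim: all inequalities are replaced by non-strict ones, and the only remaining strict constraint is $d>0$, which is again preserved. There is no real obstacle here. The only point needing care is to carry the delays $d_j$ along as explicit variables: the state vector alone does not determine a delay when all clocks are reset. For this reason the convex combination must be taken in the extended space of states and delays.
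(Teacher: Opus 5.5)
Your proof is correct: runs along a fixed path, recorded as states together with delays, form the solution set of finitely many linear equalities and strict or non-strict linear inequalities. That set is convex, and projecting onto the state coordinates gives the claim; keeping the delays $d_j$ as explicit variables and checking that $d>0$ survives in the closed semantics are the right precautions. The paper gives no proof of its own and only cites Puri, whose convexity result rests on this same polyhedral observation, so your proposal supplies the standard argument.
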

We will call such a run $\rho$ a convex combination of $\rho_i$. 


\defOrbitMorphism*


For the empty path $\epsilon$, we define  $\gamma(\epsilon)=1$;  for a sequence of edges $\pi$ that is not a path (i.e.~is disconnected), we put  $\gamma(\pi)=0$, with $0$ and $1$ special elements. 

 First, let us define the operation $\circ$ representing graph composition: $G_1\circ G_2$ is the graph with edges  $(v, v'')$ such that there exists $v'$ with $(v,v')$ an edge of $G_1$ an $(v', v'')$, an edge of $G_2$. Consider now the set $O$ of the orbits of all edge sequences of a \RTA\ and let us define the operation $\cdot$ on $O$, such that $\langle p_1, G_1, p_2\rangle\cdot\langle p_2, G_2, p_3\rangle = \langle p_1, G_1\circ G_2, p_3\rangle$ (graphs do compose when orbits agree on their junction location), $1\cdot e = e \cdot 1 = e$ holds for any $e$ ($1$ is neutral) and  $e_1\cdot e_2 = 0$ holds in all the other cases. It is easy to see that $(O,\cdot,1)$ is a finite monoid, and $\gamma:\Delta^*\to O$ as in \cref{def:orbit:long} is a monoid morphism. In the sequel, when source and target locations are implied by context, we will just identify orbits with their graph component and describe them using graph terminology.

It is often useful to describe points within a simplex (e.g.~a closure of a region) using the following.
\begin{definition}[Barycentric coordinates, \cite{bary}]\label{def:bary}
Given a simplex $S$ with a set of vertices $V$ and a vector $\x\in S$, there is a unique vector $\lambda^S(\x)\geq 0$ such that $\x=\sum_{v\in V} \lambda^S_v(x) v$ and $\sum\lambda^S_v(\x)=1$.
We call $\lambda^S(\x)$ the \emph{barycentric coordinates} of $\x$ in $\bar S$.
\end{definition}
 We usually omit the superscript $S$ when the simplex is clear from context.

Reachability in \RTA\ can be characterized in terms of barycentric coordinates and orbit graphs, this is the key result of  Puri's theory. 
\begin{lemma}[\cite{puri}]\label{lem:puri:reach}
Given a path $\pi$ from $p$ to $q$ in an \RTA\ with orbit graph $e=\gamma(\pi)$, let $\{v_i\}$ be the vertices of $S(p)$ and $w_j$ the vertices $S(q)$,  and let $\x=\sum_i \lambda_i v_i\in \bar{S}(p)$ and $\y=\sum_j\mu_jw_j\in \bar{S}(q)$ two clock valuations and their barycentric coordinates. Then $(q,\y)$ is reachable from $(p,\x)$ along the path $\pi$ (by the closed semantics) iff there exists a non-negative matrix $P=(p_{ij})$ which 
\begin{itemize}
    \item is dominated by $e$, i.e.~$p_{ij}>0$ only if $e$ contains an edge $(v_i,w_j)$;
    \item is stochastic, i.e.~satisfies $\sum_j p_{ij}=1$ for all $i$;
    \item transforms $\x$ in $\y$, i.e.~satisfies $\sum_i p_{ij}\lambda_i =\mu_j$ for all $j$.
\end{itemize}
\end{lemma}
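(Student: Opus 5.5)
The plan is to prove both directions. The ``if'' direction goes through convexity (\cref{lem:puri:conv}). The ``only if'' direction reduces, by induction on the length of $\pi$, to a single edge, where the point is a polyhedral statement about integrality of vertices.

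\emph{Sufficiency.} Suppose $P$ is dominated by $e$, stochastic, and maps $\x$ to $\y$. For each pair $(i,j)$ with $p_{ij}>0$, the orbit graph contains the edge $(v_i,w_j)$. By \cref{def:orbit:long} this gives a closed-semantics run $\rho_{ij}$ along $\pi$ from $(p,v_i)$ to $(q,w_j)$. Give $\rho_{ij}$ the weight $\lambda_i p_{ij}\geq 0$. These weights sum to $\sum_i\lambda_i\sum_j p_{ij}=1$. By \cref{lem:puri:conv} (closed semantics), the corresponding convex combination of the $\rho_{ij}$ is a run along $\pi$. It starts at $\sum_{i,j}\lambda_ip_{ij}v_i=\sum_i\lambda_iv_i=\x$ and ends at $\sum_j\big(\sum_i p_{ij}\lambda_i\big)w_j=\sum_j\mu_jw_j=\y$.

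\emph{Necessity.} I would argue by induction on $|\pi|$.

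For the inductive step, write $\pi=\pi_1\pi_2$ and split the run at its intermediate valuation $\z$. The induction hypothesis gives stochastic matrices $P_1,P_2$, dominated by $\gamma(\pi_1)$ and $\gamma(\pi_2)$, with $\x\mapsto \z\mapsto \y$. The product $P_1P_2$ is again stochastic and sends $\x$ to $\y$. It is dominated by $\gamma(\pi_1)\circ\gamma(\pi_2)=\gamma(\pi)$, because a positive entry of the product needs a two-step path through an intermediate vertex.

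For a single edge $\delta$ from $p$ to $q$, consider the set $\overline{\mathrm{Reach}_\delta}\subseteq \bar S(p)\times\bar S(q)$. It is the projection of the polyhedron
\[
\{(\x,d,\y): \x\in\bar S(p),\ d\geq0,\ \x+d\models\bar\guard,\ \y=(\x+d)[\reset],\ \y\in\bar S(q)\}.
\]
This set is bounded, since the automaton is bounded.

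The key claim, and in my view the main obstacle, is that every vertex of this set is a pair $(v,w)$ of region vertices. To prove it, I would write all constraints in the variables $(\x,\x+d)$. They are then difference constraints with integer constants: bounds $x_c\sim b$ and $x_c-x_{c'}\sim b$, the link $(x+d)_c-x_c=(x+d)_{c'}-x_{c'}$, and the equations for the reset. Difference-constraint systems are totally unimodular, so all vertices are integral. An integral point of a closed region is necessarily one of its vertices, and the reset projection maps integral points to integral points. I would double-check that the projection step does not introduce non-integral vertices; since the projection is coordinate selection of an integral polytope, it should not. Hence each vertex $(v,w)$ lies in $\overline{\mathrm{Reach}_\delta}$, which means it is an edge of $\gamma(\delta)$.

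Now write the given pair $(\x,\y)$ as a convex combination $\sum c_{ij}(v_i,w_j)$ of these vertex pairs. By uniqueness of barycentric coordinates (\cref{def:bary}), $\sum_jc_{ij}=\lambda_i$ and $\sum_ic_{ij}=\mu_j$. Set $p_{ij}=c_{ij}/\lambda_i$ whenever $\lambda_i>0$. For rows with $\lambda_i=0$, put a $1$ on any orbit-graph edge leaving $v_i$. Such an edge exists: by condition (3) of \cref{def:rs:bounded} and compactness, every vertex of $\bar S(p)$ has some closed successor, and the integrality argument above shows that some successor is a vertex of $\bar S(q)$. The resulting matrix is stochastic, dominated by $e$, and transforms $\x$ into $\y$.
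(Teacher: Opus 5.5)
\textbf{Overall.} The paper gives no proof of this lemma; it cites it from \cite{puri}, so there is no in-paper argument to compare against. Your architecture is sound:
\begin{itemize}
\item The sufficiency direction is the same computation the paper performs, read in the decomposing direction, in the base case of \cref{lem:puri:proj}.
\item The inductive step is correct. You only need the easy inclusion $\gamma(\pi_1)\circ\gamma(\pi_2)\subseteq\gamma(\pi)$, obtained by concatenating corner runs. This matters, because the reverse inclusion is itself a consequence of the lemma.
\item Reducing the one-edge case to ``every vertex of $\overline{\mathop{Reach}_\delta}$ is a pair of region vertices'' is the right reduction.
\end{itemize}

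\textbf{The gap: the total unimodularity claim is false for your system.} In the variables $(\x,\z)$ with $\z=\x+d$, the link constraint $z_c-x_c-z_{c'}+x_{c'}=0$ has four nonzero entries. So this is not a difference-constraint system, and it is in general not totally unimodular. With three clocks, take the rows
\begin{itemize}
\item $z_1-x_1-z_2+x_2$ (the link),
\item $z_1-z_3$ (from $\bar\guard$, or from $\bar S(q)$ when clocks $1,3$ are not reset),
\item $x_2-x_3$ (from $\bar S(p)$),
\item $z_3-x_3$ (from $d\geq 0$).
\end{itemize}
Restricted to the columns $z_1,x_2,z_3,x_3$ they give
\[
\det\begin{pmatrix}1&1&0&0\\1&0&-1&0\\0&1&0&-1\\0&0&1&-1\end{pmatrix}=-2 .
\]
So ``difference constraints are TU, hence vertices are integral'' does not apply to the system you wrote. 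That integrality claim is the heart of the lemma.

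\textbf{The repair.} The conclusion is still true, and the fix is short: parametrize by $(\x,d)$ instead of $(\x,\z)$.
\begin{itemize}
\item Since $(x_c+d)-(x_{c'}+d)=x_c-x_{c'}$, and reset coordinates of $\y$ are the constant $0$, every constraint becomes one of $\pm x_c\leq b$, $\pm(x_c-x_{c'})\leq b$, $-d\leq 0$, or $\pm(x_c+d)\leq b$, plus constant conditions.
\item Negate the $\x$-columns, i.e.\ use $u_c=-x_c$ (the ``reset dates''). Every row then has at most one $+1$ and at most one $-1$. This is a transposed network matrix, hence TU, so the polytope in $(\x,d)$ has integral vertices.
\item $\overline{\mathop{Reach}_\delta}$ is the image of this polytope under the integral linear map $(\x,d)\mapsto(\x,(\x+d)[\reset])$. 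Vertices of a linear image of a polytope are images of vertices, so they are integral.
\end{itemize}
From there your argument goes through unchanged.

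\textbf{Two smaller points in the last step.}
\begin{itemize}
\item Condition (3) of \cref{def:rs:bounded} only says that \emph{some} point of $S(p)$ fires $\delta$. That \emph{every} point does needs two further facts: guards with integer constants are unions of regions, and region equivalence is a time-abstract bisimulation.
\item After your limit argument gives a closed successor $\y'$ of a vertex $v$, the cleanest way to get a vertex successor is to decompose $(v,\y')$ into vertex pairs. The barycentric coordinates of $v$ are concentrated on $v$, so every pair with positive weight starts at $v$.
\end{itemize}
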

We will extensively use the following consequence of the previous result.   
\begin{lemma}\label{lem:puri:proj}
 In an \RTA\ $\aut$, we fix a path $\pi$, going from location $p$ to location $q$, with orbit graph $e=\gamma(\pi)$ and denote the vertices of $S(p)$ as $\{v_i\}$ and those of $S(q)$ as $\{w_j\}$. 
 
Let $\rho\in\overline{\Runs_\pi}(\x,\y)$, then $\vec\rho$ can be represented as
$\sum_{ij}\alpha_{ij}\vec\rho_{ij}$ with 
\begin{itemize}
    \item $\rho_{ij}\in\overline{\Runs_\pi}(v_i,w_j)$ (defined when $\alpha_{ij}\neq 0$);
    \item $\alpha_{ij}\geq 0$ and $\sum_{ij} \alpha_{ij}=1$;
    \item coefficients $\alpha_{ij}$ dominated by the orbit graph $e$, i.e.~$\alpha_{ij}>0$ only if $e$ contains an edge $(v_i,w_j)$.
\end{itemize}
\end{lemma}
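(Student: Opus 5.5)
We derive the statement from \cref{lem:puri:reach} via a flow decomposition, using \cref{lem:puri:conv} to pass from vertex runs back to $\rho$. Let $\x=\sum_i\lambda_i v_i$ and $\y=\sum_j\mu_j w_j$ be the barycentric coordinates of the endpoints of $\rho$. The plan has four steps.

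\textbf{Step 1 (stochastic matrix).} Since $(\x,\y)\in\overline{\mathop{Reach}_\pi}$, \cref{lem:puri:reach} gives a stochastic matrix $P=(p_{ij})$ that is dominated by $e$ and satisfies $\sum_i p_{ij}\lambda_i=\mu_j$. Set $\alpha_{ij}=\lambda_i p_{ij}$. These coefficients are nonnegative and dominated by $e$, and they sum to $\sum_i\lambda_i=1$. Whenever $\alpha_{ij}>0$, the edge $(v_i,w_j)$ lies in $e$. By the definition of the orbit graph this means $(v_i,w_j)\in\overline{\mathop{Reach}_\pi}$, so some run $\rho_{ij}\in\overline{\Runs_\pi}(v_i,w_j)$ exists.

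\textbf{Step 2 (endpoints agree).} By \cref{lem:puri:conv} the convex combination $\sum\alpha_{ij}\vec\rho_{ij}$ is a run along $\pi$. Its first component is $\sum_{ij}\lambda_ip_{ij}v_i=\x$ and its last is $\sum_j\mu_jw_j=\y$. So the endpoints match those of $\rho$, but the intermediate states need not.

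\textbf{Step 3 (intermediate states).} This is the main obstacle, since the $\rho_{ij}$ must be chosen so that the intermediate clock vectors also coincide with those of $\vec\rho$. I would handle it by induction on the length of $\pi=\delta_1\cdots\delta_n$. For $n=1$, write $\vec\rho=(\x,\y)$ and apply the above directly. For the inductive step, split $\rho$ at the state $\z$ after $\delta_1$, with $\z=\sum_k\nu_k u_k$ over the vertices $u_k$ of the intermediate region.
\begin{itemize}
\item The single-edge step gives a decomposition of $(\x,\z)$ as $\sum\beta_{ik}(v_i,u_k)$, with the vertex pairs reachable via $\delta_1$.
\item The induction hypothesis gives a decomposition of the remainder of $\vec\rho$ as $\sum\gamma_{kj}\vec\sigma_{kj}$, with $\sigma_{kj}$ runs from $u_k$ to $w_j$.
\end{itemize}
Glue the two halves through the vertex $u_k$, weighting each path $i\to k\to j$ by $\beta_{ik}\gamma_{kj}/\nu_k$. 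This weighting is the standard Markov-chain composition; the marginals on $k$ agree, both equal to $\nu_k$. Then group all paths ending in the same pair $(i,j)$. Runs through different $u_k$ but with the same endpoints $(v_i,w_j)$ are merged into a single run $\rho_{ij}$ by \cref{lem:puri:conv}, using normalized weights. The glued pair lies in $e=\gamma(\delta_1)\circ\gamma(\delta_2\cdots\delta_n)$, which gives domination.

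\textbf{Step 4 (one-edge base case).} The base case needs care. We must show that a transition $(\x,\y)$ of duration $d$ decomposes over vertex pairs with the same combined duration. Here I would use the closed-region geometry. The set of pairs $(\x,\y)$ reachable through the closed guard is a polytope whose extreme points are vertex pairs $(v_i,w_j)$; this is Puri's characterization restated for one edge. Carathéodory then gives the convex decomposition, and \cref{lem:puri:reach} ensures it is compatible with the barycentric weights.
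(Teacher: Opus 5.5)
Your proposal is correct and follows essentially the paper's argument: induction on $|\pi|$, with the one-edge case given by \cref{lem:puri:reach} via $\alpha_{ij}=\lambda_i p_{ij}$, and an inductive step that glues decompositions through the vertices of the intermediate region with Markov-composition weights (marginals forced by uniqueness of barycentric coordinates) and merges runs with common endpoints using \cref{lem:puri:conv}. The differences are only cosmetic: the paper extends the prefix by the last edge (weights $\alpha_{ij}p_{jk}$) instead of prepending the first one, and your Step~4 concern about durations is unnecessary, because $\vec\rho$ records only the clock vectors, so the base case is exactly your Steps~1--2 with $n=1$.
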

\begin{proof}
In the proof below we explicitly define coefficients $\alpha_{ij}$ and their domination by $e$ is straightforward. 

We prove the lemma by induction over the length of $\pi$. For the \textbf{base case}, i.e.~one transition $\rho=(p,\x)\to (q,\y)$, the result  follows from \cref{lem:puri:reach}. 
Indeed, take
 $\rho_{ij}$ being just an edge $(p,v_i)\to (q,u_j)$, and $\alpha_{ij}=p_{ij}\lambda_i$. We have that
$$\sum_{ij}\alpha_{ij}=\sum_{ij}p_{ij}\lambda_i=\sum_j \sum_ip_{ij}\lambda_i=\sum_j\mu_j=1,$$
as required. Also,
$$\sum_{ij}\alpha_{ij}\vec\rho_{ij}=
\sum_{ij}p_{ij}\lambda_i[v_i|u_j],
$$
where $[v|u]$ stands for concatenation of two vectors.
Let us check that the last sum coincides with $\rho=[\x|\y]$. For the first half of coordinates:
$$\sum_{ij}p_{ij}\lambda_i v_i= 
\sum_{i}\left(\sum_j p_{ij}\right)\lambda_i v_i=
\sum_{i}\lambda_i v_i=\x,
$$
since by stochasticity $\sum_j p_{ij}=1$. As for the second half,
$$\sum_{ij}p_{ij}\lambda_i u_j= 
\sum_{j}\left(\sum_i p_{ij}\lambda_i\right) u_j=
\sum_{j}\mu_j u_j=\y,
$$
and the base case is done: $\sum_{ij}\alpha_{ij}\vec\rho_{ij}=[\x|\y]=\vec\rho$.

For  the \textbf{inductive hypothesis}, let $\rho$ from $(p,\x)$ to $(q,\y)$ (of some length $n$) be decomposed as required into runs $(p,v_i)\Trans{\rho_{ij}}(q,w_j)$. Let again $\mu_j$ be the barycentric coordinates of $\y$.
We have that 
$$
\y=\sum_j\mu_j w_j;\quad \vec\rho=\sum_{ij}\alpha_{ij}\vec\rho_{ij};\quad \sum_{ij}\alpha_{ij}=1. 
$$
Projecting the second equality onto the coordinates of the last clock vector in the run we get $\y=\sum_{ij}\alpha_{ij}w_{j}=\sum_j \left(\sum_i \alpha_{ij}\right)w_j,$
and thus $\sum_i \alpha_{ij}$ coincides with barycentric coordinates of $\y$:
$$
  \sum_i \alpha_{ij}=\mu_j.  
$$
For the \textbf{inductive step}, consider a longer run 
$$
\rho^+: (p,\x) \Trans{\rho} (q,\y) \trans{\delta} (r,\z), 
$$
with $u_k$ vertices of $S(r)$ and $\z=\sum_k \nu_ku_k$. By \cref{lem:puri:reach} applied to $\delta$, there exists a stochastic matrix $P=(p_{jk})$ dominated by $\gamma(\delta)$, such that $\nu_k=\sum_j\mu_jp_{jk}$.

We will now decompose $\rho^+$ as required. Let
$$
\alpha'_{ik}=\sum_j \alpha_{ij}p_{jk};\quad
\vec\rho'_{ik}=\frac{1}{\alpha'_{ik}}\sum_j \alpha_{ij}p_{jk}[\vec\rho_{ij}|u_k];\quad
\vec\rho'=\sum_{ik}\alpha'_{ik}\vec\rho'_{ik}.
$$
   We will prove that $\rho^+=\rho'$ and all the assertions of the lemma are satisfied.

\begin{itemize}
\item \textbf{Sum of $\alpha'$.} Let us check that it equals $1$. 
$$
\sum_{ik}\alpha'_{ik}=\sum_{ikj} \alpha_{ij}p_{jk}=
\sum_{ij}\alpha_{ij}\left(\sum_k p_{jk}\right)=\sum_{ij}\alpha_{ij}=1,
$$
since $\sum_k p_{jk}=1$  by stochasticity of $P$. 
\item \textbf{$\rho'_{ik}$ are runs.}  We observe  that 
 $[\rho_{ij}|u_k]$ is a run (if $\alpha_{ij}>0$   and $(w_j,u_k)\in \gamma(\delta))$. Thus $\vec\rho'_{ik}$ is a linear combination of runs $[\vec\rho_{ij}|u_k]$, we will check that it is a convex one (this would imply that $\rho'$ is a run by \cref{lem:puri:conv}). Indeed, the sum of the coefficients of $[\vec\rho_{ij}|u_k]$ equals $1$:
 $$
 \sum_{j}\frac{1}{\alpha'_{ik}}\alpha_{ij}p_{jk}=\frac{1}{\alpha'_{ik}}\sum_{j}\alpha_{ij}p_{jk}= \frac{\alpha'_{ik}}{\alpha'_{ik}}=1,
 $$
 as required. 
 
 To prove that $\rho'=\rho^+$, we deal separately with the projection on the first $n$ states of the runs  $\mathbf{pref}(\cdot)$ and on the last state $\mathbf{last}(\cdot)$.
 
\item \textbf{Prefix}. By inductive hypothesis,
\begin{multline*}
\mathbf{pref}(\rho')= 
\sum_{ik}\alpha'_{ik}\mathbf{pref}(\rho'_{ik})=\sum_{ik}\alpha'_{ik}\frac{1}{\alpha'_{ik}}\sum_j \alpha_{ij}p_{jk}\vec\rho_{ij}=\sum_{ijk}\alpha_{ij}p_{jk}\vec\rho_{ij}=\\
\sum_{ij}\alpha_{ij}\left(\sum_k p_{jk}\right)\vec\rho_{ij}=\sum_{ij}\alpha_{ij}\vec\rho_{ij}=\vec\rho=\mathbf{pref}(\rho^+),
\end{multline*}
 as required.
 \item \textbf{Last state.} We proceed with a final check:  
 \begin{multline*}
 \mathbf{last}(\rho')= 
\sum_{ik}\alpha'_{ik}\mathbf{last}(\rho'_{ik})=\sum_{ik}\alpha'_{ik}\frac{1}{\alpha'_{ik}}\sum_j \alpha_{ij}p_{jk}u_k=\\
\sum_{ijk}\alpha_{ij}p_{jk}u_k=
\sum_{jk}\left(\sum_i\alpha_{ij}\right)p_{jk}u_k=
\sum_{jk}\mu_j p_{jk}u_k=\\ \sum_k\left(\sum_j\mu_jp_{jk}\right)u_k=\sum_k \nu_k u_k=\z=\mathbf{last}(\rho^+),
\end{multline*}
and thus $\vec\rho^+=\vec\rho'$ and it can be decomposed as $\sum_{ik}\alpha'_{ik}\vec\rho'_{ik}$. This concludes the induction. 
\end{itemize}
\end{proof}

Similarly to \cite{entroJourn,3classes}, we will use another consequence of \cref{lem:puri:reach}: a Lyapunov function for the return map of runs along a cyclic path. 
\begin{lemma}[\protect{\cite[Lemma 11]{entroJourn}}] \label{lem:puri:Lyap}
Given a cyclic path $\pi$ and $I$, an inital (i.e.~with no incoming edges) subset of the vertices of $\gamma(\pi)$, then $(\x,\x')\in \overline{\mathop{Reach}_\pi}$ implies $\sum_{v\in I}\lambda_v(\x)\geq \sum_{v\in I}\lambda_v(\x')$.
\end{lemma}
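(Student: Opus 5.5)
The plan is to derive the inequality directly from Puri's characterization of reachability, \cref{lem:puri:reach}. Since $\pi$ is cyclic, from $q$ to $q$, the source and target regions coincide. So the vertices $\{v_i\}$ of $S(q)$ serve both as the left and right vertex sets of the orbit graph $e=\gamma(\pi)$. The hypothesis that $I$ is \emph{initial} means: for every edge $(v_i,v_j)$ of $e$ with $v_j\in I$, we also have $v_i\in I$.

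Write $\lambda_i=\lambda_{v_i}(\x)$ and $\mu_j=\lambda_{v_j}(\x')$. Because $(\x,\x')\in\overline{\mathop{Reach}_\pi}$, \cref{lem:puri:reach} provides a non-negative stochastic matrix $P=(p_{ij})$, dominated by $e$, with $\mu_j=\sum_i p_{ij}\lambda_i$ for all $j$. Summing over $j\in I$ and exchanging the sums gives
\[
\sum_{j\in I}\mu_j=\sum_i \lambda_i \sum_{j\in I} p_{ij}.
\]

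Next I split the outer sum according to whether $v_i\in I$. If $v_i\notin I$ and $v_j\in I$, then $e$ has no edge $(v_i,v_j)$, since $I$ is initial. Domination then forces $p_{ij}=0$, so the terms with $v_i\notin I$ vanish. For $v_i\in I$, non-negativity and stochasticity give $\sum_{j\in I}p_{ij}\le\sum_j p_{ij}=1$, and $\lambda_i\ge 0$. Therefore $\sum_{j\in I}\mu_j\le\sum_{i\in I}\lambda_i$, which is the claim.

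There is no real obstacle here. The only point needing care is the identification of the source and target vertex sets, which is what makes the notion of an initial subset of the orbit graph meaningful. The rest is a two-line computation.
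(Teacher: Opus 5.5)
Your proof is correct and is essentially the argument the paper intends. The paper gives no proof of its own: it cites \cite{entroJourn} and presents the lemma as a consequence of \cref{lem:puri:reach}. Your computation is exactly that derivation: sum the transfer identity over $j\in I$, kill the cross terms by domination plus initiality, and bound $\sum_{j\in I}p_{ij}\le 1$ by stochasticity. Your reading of ``initial'' as ``closed under predecessors in $\gamma(\pi)$'' is also the one the paper relies on when it applies the lemma to unions of topologically first SCCs in \cref{lem:lyap:faces}.
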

In other words,  the function $\x \mapsto \sum_{v\in I} \lambda_v(\x)$ is a Lyapunov function (for the return map of $\pi$).




\subsection{Some properties of heartbeat}
Heartbeat  brings a couple of useful properties. 
\begin{lemma}\label{lem:heart} A TA with a heartbeat has the following properties: 
\begin{itemize}
\item the duration of a run with $k$ heartbeats (events $b$)  from a state with $h=h_s$ to a state with $h=h_e$ equals $k-h_s+h_e$ and thus belongs to the interval $[k-1,k+1]$;
\item 
let $\rho_1,\rho_2$ be two runs along the same path, with timed words $w_1,w_2$. It holds that
$$d(w_1,w_2)\leq 2||\vec{\rho}_1-\vec{\rho}_2||_\infty.$$
\end{itemize}
\end{lemma}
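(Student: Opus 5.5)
Both bullets of \cref{lem:heart} rest on how the heartbeat clock $h$ evolves, so we first fix its behaviour. Every transition carries the guard $h\leq 1$, every location has the self-loop $b$ with guard $h=1$ and reset $\{h\}$, and no other edge resets $h$. Hence along any run $h$ grows at unit rate, never exceeds $1$, and is set to $0$ exactly at $b$-events.

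\textbf{First bullet.} We split the run at its $k$ heartbeat events. Time elapsed before the first $b$ is $1-h_s$, since $h$ must reach exactly $1$ with no intermediate reset. Between two consecutive $b$'s exactly one time unit elapses, giving $k-1$ units in total. After the last $b$, $h$ grows from $0$ to $h_e$, so $h_e$ units elapse. The total is $(1-h_s)+(k-1)+h_e=k-h_s+h_e$. If $k=0$, no reset occurs and the duration is $h_e-h_s$, which matches the formula. Since $h_s,h_e\in[0,1]$, the duration lies in $[k-1,k+1]$.

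\textbf{Second bullet.} Both runs follow the same path, so $w_1,w_2$ have the same untiming. By the first part of \cref{lem:compare}, $d(w_1,w_2)\leq\|\w_1-\w_2\|_\infty$, and it suffices to bound each $|t^1_i-t^2_i|$ by $2\|\vec\rho_1-\vec\rho_2\|_\infty$. The plan is to read event times off the $h$-coordinate.

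Let $\x^1_j,\x^2_j$ be the clock vectors after the $j$-th transition, with $\x_0$ the initial vectors, so the initial timestamp is $t_0=0$. Consider the $i$-th event. If the last heartbeat at or before index $i$ occurs at index $m\leq i$, then $t_m$ is the time of a $b$-event. Since $h$ is reset only there, $t_i=t_m+h$-value at $\x_i$. Recursively, each $b$-event time equals the previous heartbeat time plus $1$. The exception is the first heartbeat, whose time is $1-h(\x_0)$.

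Writing $k$ for the number of heartbeats up to index $i$, this gives:
\begin{itemize}
\item if $k\geq1$: $t_i=(1-h(\x_0))+(k-1)+h(\x_i)$;
\item if $k=0$: $t_i=h(\x_i)-h(\x_0)$.
\end{itemize}
In both cases $t_i$ is an integer depending only on the path, plus $h(\x_i)-h(\x_0)$. Hence
\[
|t^1_i-t^2_i|\leq|h(\x^1_i)-h(\x^2_i)|+|h(\x^1_0)-h(\x^2_0)|\leq 2\|\vec\rho_1-\vec\rho_2\|_\infty.
\]

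\textbf{Main obstacle.} The only delicate point is justifying that $h$ is reset exactly at $b$-events and nowhere else. This is part of the heartbeat construction, assuming the \CTA\ transformation preserves the clock $h$ and its resets. A second point is the convention that $\Word(\rho)$ starts at $t_0=0$ from state $\x_0$, which holds by definition.
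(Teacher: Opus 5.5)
Your proof is correct and follows the paper's argument. You use the same split of the duration at heartbeat events ($1-h_s$, then $k-1$, then $h_e$), and the same formula $t_i=k_i-h(\x_0)+h(\x_i)$ with the triangle inequality on the two $h$-coordinates; your explicit handling of $k=0$ and your appeal to the first part of \cref{lem:compare} only make explicit what the paper leaves implicit.
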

We remark that the latter property does not hold without a heartbeat. 
\begin{proof}
To prove the former property, we remark that the run spends $1-h_s$ time before the first beat, $k-1$ time units from the first to the last beat, and $h_e$ time units after the last beat. 

To prove the latter, let the norm in the right-hand side be $\nu$. Consider the $i$th position in the path, a letter $(a_i,t_{1i})$ (in the word $w_1$) and a letter $(w_{2i},t_{2i})$  in $w_2$, 
they  correspond respectively to transitions $$(q_{i-1},\x_{1,i-1}) \trans[d_{1i}]{\delta}(q_i,\x_{1,i}) \text{ and }(q_{i-1},\x_{2,i-1}) \trans[d_{2i}]{\delta}(q_i,\x_{2,i});
$$
due to the first assertion of the lemma $t_{1i}=k-h_{10}+{h_{1i}}$ and $t_{2i}=k-h_{20}+h_{2i}$. By definition $h_{10}$ is a coordinate of $\x_{10}$ etc, hence $|h_{10}-h_{20}|\leq\nu$ and $|h_{1i}-h_{2i}|\leq\nu$, and we conclude by the triangular inequality that $|t_{1i}-t_{2i}|\leq 2\nu$. Since this holds for any letter of $w_1$ and symmetrically $w_2$, we deduce that $d(w_1,w_2)\leq 2\nu$ as required. \qed
\end{proof}

\section{Lower bound}
%
\subsection{Case of finite $\alpha$}
\propLowerBound*



We recall the setting  for proving proposition \cref{lem:lower}:
%
\begin{equation}
\tag{\ref{eq:set}}
\text{\parbox{0.93\textwidth}{
\begin{itemize}
    \item $\pi,\pi'$ is a cyclic route from $(q,\ell)$ to itself with $\ell$ containing $k$ elements; 
    \item  $\sigma$  is a cyclic path from $q$ to $q$ in $\aut$ such that all $\ell_i$ are cliques in $\gamma(\sigma)$.
\end{itemize}
}}    
\end{equation}
We recall that (by \cref{def:facelist,con:main}) such a  ``resetting'' cycle $\sigma$ exists for each vertex $(q,\ell)$ in $G$. 

\lemSepOne*

The proof is based on the techniques of \cref{sec:geom}.
\begin{proof}
Let $R\subseteq S(q)$ be defined%
\footnote{Geometrically, for $k>1$, the set $R$ is a cross-section of the simplex $S(q)$ parallel to all the faces $\conv\ell_i$.} 
as $\frac1k\sum_{i=1}^k \conv\ell_i$.  Let $\x^*_i$ be the barycenter of $\ell_i$ and  $\x^*=\frac1k\sum_{i=1}^k  \x^*_i$.

Let $P_i$ be the set of timings of traces of  all runs in  $\overline{\Runs_\pi}$ starting in  $(q,\x^*_i)$  and hitting at the end the face $\conv\ell_i$. 
Finally, let $P=\frac1k\sum_{i=1}^k  P_i$. We make several observations concerning $P$:
\begin{itemize}
    \item its vectors are timings of traces of runs along $\pi$ from $(q,\x^*)$ to some state in $(q,R)$;
    \item they are bounded by  $t'$ since they  correspond to runs with $c(\pi)$ heartbeats (by \cref{lem:heart}). 
    \item $P$ is a convex polytope, and it has a dimension of at least $r(\pi)$. Indeed, the definition of a vector in $P_i$ can be stated as a conjunction of linear inequalities over timestamps in the trace; thus, $P_i$ is a convex polytope. Hence, $P$ is also a convex polytope. To estimate its dimension, we  find some vector  $\vec{v}\in P $ and $r(\pi)$ linearly independent vectors $\xi^j$ such that $\vec{v}+\xi^j\in P$, proceeding as follows: 
    \begin{itemize}
    \item  for each $i$ we take $\vec{v}_i$ timing of a run from $(q,\x^*_i)$ through internal points of all active faces $\conv\ell_i$ along the path;  such a run exists by virtue of \cref{lem:puri:reach}.  We have that $\vec{v_i}\in P_i$ 
        \item consider now $\vec{v}=\frac1k\sum_{i=1}^k \vec{v_i}\in P$, it also corresponds to a run along $\pi$;
        \item there are $r(\pi)$ edges $\delta^j=(q^{j-1},\ell) \to (q^j,\ell'^{j})$ on $\pi'$ having $r(\delta^j)=1$. For each of those, we take the smallest index $i(j)$ such that for some $v\in\ell_i^{j-1}$ the vertex $v$ can make a transition to $(q,\ell_i^{j})$ after any time in some interval $(l,u)$. Hence, $\vec{v}_i$  can also take the transition a bit earlier or later. Let $\vec{v}'^j_{i(j)}\in P_{i(j)}$ be the vector of timings of a run that agrees with $\vec{v}_i$ up to transition $\delta^{j-1}$, takes $\delta^j$ a bit earlier or later, and is completed to the full length of the path $\pi$. Let $\vec{v}'^j=\frac1k \left(\vec{v}'_{i(j)}+\sum_{i\neq i(j)} \vec{v}_i\right)$ and $\xi^j=\vec{v}'^j-\vec{v}$.
        \item By construction $\vec{v}+\xi^j\in P$.
        \item The  non-zero coordinate of $\xi^j$ with the smallest index is $t_j$, hence all these vectors are linearly independent.
    \end{itemize}
    We conclude that the dimension of $P$ is at least $r(\pi)$.
    \item Hence, \cref{lem:sep} applies and  for $\varepsilon$ small enough one can find an $\varepsilon$-separated set $\Sep$ within the timed words having timing in $P$, such that  $\#\Sep\geq\beta/\varepsilon^{ r(\pi)}$, for some $\beta>0$.
\end{itemize}
\end{proof}

\lemReturn*


\begin{proof}
By the theory of \cite{puri}, it follows from the condition on $\gamma(\sigma)$ that within  $\conv\ell_i$ every point is reachable from every point by a run along $\sigma$.

Let $\y_i\in\conv\ell_i$ be such that $\y=\frac1k\sum \y_i$. Let $\rho_i$ be a run from $\y_i$ to $\x^*_i$. We have that $\vec\rho^\y=\frac1k\sum  \vec\rho_i$ is a run from $\y$ to $\x^*$, as required. The duration bound follows from \cref{lem:heart}.  
\end{proof}
\corSigmaPi*

We remark that words in $\Sep'$ have runs cycling on $(q,\x^*)$ and can be concatenated to each other within $L_{(\pi\sigma)^*}(\x^*,\x^*)$.
\begin{proof}
Every word in $\Sep$ (from \cref{lem:sep:one}) has a run along $\pi$ from $(q,\x^*)$ to some state  $(q,\y)$ with $\y\in R$, and we can append to it a run $\rho^\y$ (from \cref{lem:return}) along $\sigma$ leading back to $(q,\x^*)$ and of duration $\leq \beta(\sigma)+1$. Thus we obtain an $\varepsilon$-separated  set $\Sep'\subset L_{\pi\sigma}(\aut)$ of cardinality  $c/\varepsilon^{ r(\pi)}$ and duration at most $t''$.  \qed
\end{proof}

We are now ready to prove the lower bound.
 
\begin{proof}[of \cref{lem:lower}] 
Let $\pi$ be an optimal cycle in the refined graph G from $(q,\ell)$ to itself, with  $r(\pi)=\alpha c(\pi)$. Let $\sigma$ be a resetting cycle for $(q,\ell)$. 

 Intuitively, most of the time  $\pi$ will be used for coding, and when needed, $\sigma$ will be taken to reset clocks.

Let us fix $\varkappa>0$, set $\theta=\left\lceil \frac{2\alpha(c(\sigma)+2)}{\varkappa c(\pi)}\right\rceil$ and  apply \cref{cor:one} to the cycle $\pi^\theta$. This yields an $\varepsilon$-separated set $\Sep'$ in $L_{\pi^\theta\sigma, t^*} (x^*,x^*)$, with $t^*= \theta c(\pi)+c(\sigma)+2 $ of cardinality $\geq  \beta /\varepsilon^{ar(\pi) }$.

Since runs corresponding to words in $\Sep'$ cycle from $(q,x^*)$ to the same state, they can be iterated indefinitely. Let also  timed words $u$ and $v$ be traces of runs from an initial state to $(q,x^*)$ and from $(q,x^*)$ to a final state, respectively and $\nu=\length(uv)$.  Thus $\Sep_T=u\Sep'^{\lfloor (T-\nu)/t^*\rfloor} v $ is an $\varepsilon$-separated set within $L_T(\aut)$. It provides a lower bound on the bandwidth 
$$
\bandc_\varepsilon(L)\geq \limsup_{T\to \infty} \frac{\log\#\Sep_T}{T}.
$$
The right-hand side equals 
\begin{multline*}
\limsup_{T\to \infty} \frac{\left\lfloor \frac{T-\nu}{t^*}\right\rfloor \log\#\Sep'}{T}\geq
\limsup_{T\to \infty} \frac{\left(\frac{T-\nu}{t^*} -1\right)(\log\beta+ar(\pi)\log(1/\varepsilon))}{T}=\\
\frac{\log\beta+\theta\alpha c(\pi)\log(1/\varepsilon)}{\theta c(\pi)+c(\sigma)+2}=\log(1/\varepsilon)\cdot (A_1+A_2),
\end{multline*}
with 
$$
A_1=\frac{\log\beta}{\log(1/\varepsilon)\cdot(\theta c(\pi)+c(\sigma)+2)}\geq -\varkappa/2
$$
for $\varepsilon$ small enough,
and 
$$
A_2= \frac{\alpha \theta c(\pi)}{\theta c(\pi)+c(\sigma)+2}= \frac{\alpha}{1+\frac{c(\sigma)+2}{\theta c(\pi)}}\geq \alpha\left(1-\frac{ c(\sigma)+2}{\theta c(\pi)}\right)\geq \alpha-\varkappa/2. 
$$
From the four inequalities above, we conclude that 
 $\bandc_\varepsilon(L(\aut))\geq(\alpha-\varkappa)\log(1/\varepsilon)$ as required. \qed
\end{proof}
\subsection{The extreme case}
\propAlphaInfty*
\begin{proof}
If $\alpha =\infty$, then $G$ contains a cycle $\pi$ from $(q,\ell)$ to itself, with cost $c(\pi)=0$ (that is, without heartbeat events) and a positive reward $r(\pi)\geq 1$. Let $\sigma$ be a resetting cycle for $(q,\ell)$.  We fix an $A>0$, and proceed similarly to the proof of \cref{lem:lower}. 
We set $\theta=A(c(\sigma)+2)$ and  apply \cref{cor:one} to the cycle $\pi^\theta$. This yields an $\varepsilon$-separated set $\Sep'$ in $L_{\pi^\theta\sigma, t^*} (x^*,x^*)$, with $t^*= c(\sigma)+2 $ of cardinality $\geq  \beta /\varepsilon^{\theta }$.
%
Let again $u$ and $v$ be traces of runs from an initial state to $(q,x^*)$ and from $(q,x^*)$ to a final state, respectively and $\nu=\length(uv)$.
In turn, $\Sep_T=u\Sep'^{\lfloor (T-\nu)/t^*\rfloor}v$ is an $\varepsilon$-separated set within $L_T(\aut)$ and provides a lower bound on the bandwidth
$$
\bandc_\varepsilon(L)\geq \limsup_{T\to \infty} \frac{\log\#\Sep_T}{T}.
$$
The right-hand side equals 
\begin{multline*}
\limsup_{T\to \infty} \frac{\left\lfloor \frac{T-\nu}{t^*}\right\rfloor \log\#\Sep'}{T}\geq
\limsup_{T\to \infty} \frac{\left(\frac{T-\nu}{t^*} -1\right)(\log\beta+\theta\log(1/\varepsilon))}{T}=\\
\frac{\log\beta+\theta\log(1/\varepsilon)}{ c(\sigma)+2}=\log(1/\varepsilon)\cdot (A_1+A),
\end{multline*}
with 
$$
A_1=\frac{\log\beta}{\log(1/\varepsilon)\cdot(c(\sigma)+2)}\geq -1
$$
for $\varepsilon$ small enough. 
From the three inequalities above, we conclude that \linebreak
 $\bandc_\varepsilon(L(\aut))\geq(A-1) \log(1/\varepsilon)$. Since it holds for any $A>0$, the language cannot be normal (nor meager), and thus, due to the classification theorem \cite[Thm.~5, Cor.~26]{3classes}, $L$ is obese.
\qed
\end{proof}

\section{Upper bound}
We recall the statement of the upper bound, proved in \cref{sec:idem,sec:fac,sec:follow,sec:0,sec:net} below.  
\propUpperBound*

The degenerate case of $\alpha=0$ is studied in  \cref{sec:deg}. 

The proof below for $\alpha >0$ is quite involved, and we provide a walkthrough to the motivated reader:
\begin{itemize}
\item In \cref{sec:idem} we perform some preparatory work. In particular, we identify conservative cyclic runs, intuitively in such runs no energy is dissipated (Lyapunov functions as in \cref{lem:puri:Lyap} remain constant). We also define  $\varepsilon$-conservative runs, admitting a small ($<\varepsilon$ over the whole lifespan) energy leak, and show that the latter can be approximated by the former in \cref{lem:eps:con}.  

\item \cref{lem:fac} is the pivot of the proof. Combining Simon's theorem with Puri's reachability, we factorize every run in long $\varepsilon$-conservative cycles with a bounded number of arbitrary intermediate fragments of a bounded total length.  
\item In \cref{lem:realize} we show that conservative runs ``follow routes'' in the refined graph $G$. Next, for each route in $G$ we describe some properties of runs following it.
\item In \cref{sec:0} we study ``pathological'' loops in time $0$ and show that they can be safely ignored (\cref{lem:fluff}). Without such loops, the number of routes to consider grows moderately (\cref{lem:count:routes}), and a bound on the number of adjacent $0/0$ edges in $G$ holds.
\item Finally, in \cref{sec:net} we use the preceding lemmas to build a series of $\varepsilon$-nets, more and more general, culminating with a required net for the whole language $L_T(\aut)$.  
\end{itemize}

\subsection{On reachability along idempotent cycles}\label{sec:idem}
Let $\sigma$ be a cycle from $q$ to itself in $\aut$ with idempotent orbit graph  $e=\gamma(\sigma)$ (it is a graph on vertices of $S(q)$). We decompose $e$ into a DAG of SCCs (which are cliques and singleton transient vertices). The SCCs can be topologically sorted, that is enumerated $C_1,\dots,C_k$ satisfying the property:
\begin{equation}\label{eq:top:sort}
 i<j \Rightarrow C_i \text{ unreachable from }C_j \text{ in } e.   
\end{equation}

We introduce a variant of barycentric coordinates (\cref{def:bary}) with respect to active faces $\conv{C_i}$.  Any point  $\x\in S(q)$ can be represented as $\x=\sum_i\lambda_i \x_i$  with  $\x_i\in \conv (C_i)$, and $ \lambda_i\geq 0$ such that $\sum_i\lambda_i=1$.

Intuitively, face-barycentric coordinates $\vec{\lambda}(\x)=(\lambda_1,\dots,\lambda_k)$ describe position of $\x$ with respect to faces spawned by each of $C_1,\dots,C_k$. We remark that weight $\lambda_i$ of a face   $\conv{C_i}$ in face-barycentric coordinates of $\x$ is merely the sum of weights of its vertices in barycentric coordinates.

Let $\succeq_e$ (in most cases we omit the subscript $e$) denote the lexicographic order over vectors  $(\lambda_1,\dots,\lambda_k)$. 

The order $\succeq$ is tightly related to reachability. 

\begin{lemma}\label{lem:lyap:faces}
For any path $\sigma\in \gamma^{-1}(e)$ and points $\x^1,\x^2\in S(q)$, if $(\x^1,\x^2)\in \overline{\mathop{Reach}_\sigma}$  then  $\vec\lambda (\x^1)\succeq \vec\lambda (\x^2)$.
\end{lemma}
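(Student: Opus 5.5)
The plan is to reduce the statement to the Lyapunov property of \cref{lem:puri:Lyap}, applied to a well-chosen family of initial vertex sets. Since the SCCs are topologically sorted as in \eqref{eq:top:sort}, for each $m\in 1..k$ the set $I_m=C_1\cup\dots\cup C_m$ is initial in $e=\gamma(\sigma)$: no edge enters $I_m$ from a vertex outside it. Indeed, an edge from $v\in C_j$ with $j>m$ to a vertex of $C_i$ with $i\leq m$ would make $C_i$ reachable from $C_j$, contradicting \eqref{eq:top:sort}. The same holds for any edge landing in $I_m$, since the targets lie in some $C_i$ with $i\le m$.

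Next I rewrite the face-barycentric coordinates in terms of the ordinary ones. As already remarked, $\lambda_i(\x)=\sum_{v\in C_i}\lambda_v(\x)$, where $\lambda_v$ are the barycentric coordinates of \cref{def:bary}. Hence the partial sums satisfy
$$\Lambda_m(\x):=\sum_{i\leq m}\lambda_i(\x)=\sum_{v\in I_m}\lambda_v(\x).$$
Applying \cref{lem:puri:Lyap} with $I=I_m$ to $(\x^1,\x^2)\in\overline{\mathop{Reach}_\sigma}$ yields $\Lambda_m(\x^1)\geq\Lambda_m(\x^2)$ for every $m$.

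It remains to convert dominance of all partial sums into lexicographic dominance. Suppose $\vec\lambda(\x^1)\neq\vec\lambda(\x^2)$, and let $m$ be the first index where they differ. Then $\lambda_i(\x^1)=\lambda_i(\x^2)$ for $i<m$, so $\Lambda_m(\x^1)\ge\Lambda_m(\x^2)$ gives $\lambda_m(\x^1)\geq\lambda_m(\x^2)$. Since the two coordinates differ, the inequality is strict, which is exactly $\vec\lambda(\x^1)\succ\vec\lambda(\x^2)$. If the vectors are equal, $\succeq$ holds trivially.

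The only delicate point is checking that $I_m$ is genuinely initial in the sense required by \cref{lem:puri:Lyap}. Two subtleties arise here. First, $e$ is a graph on the vertices of $S(q)$ viewed both as sources and as targets, and I must read "no incoming edges" as no edge from outside $I_m$ into $I_m$; the topological order guarantees this. Second, points lie in the closure $\bar S(q)$, which is where the barycentric coordinates are defined. Neither requires idempotency of $e$ beyond its use in fixing the decomposition into SCCs.
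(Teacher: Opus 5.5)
Your proposal is correct and follows the paper's approach: the paper's proof is the single sentence that the result follows from \cref{lem:puri:Lyap}. You supply the details it omits, namely that each prefix union $C_1\cup\dots\cup C_m$ is initial by \eqref{eq:top:sort}, that ``initial'' must be read as ``no edge entering from outside the set'', and that dominance of all these partial sums yields lexicographic dominance.
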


\begin{proof}
The result follows from \cref{lem:puri:Lyap}. \qed
\end{proof}

\defConservativeAndEps*




The following technical result says that ``most of the weight'' of any $\varepsilon$-conservative run goes from  clique SCCs to themselves.

\begin{lemma}\label{lem:diag}
Let $(q,\x^1)\Trans\rho(q,\x^2)$ be an $\varepsilon$-conservative $e$-run decomposed  as described in \cref{lem:puri:proj}: $\vec\rho=\sum_{mn}\mu_{mn}\vec\rho_{mn}$ with $(q,v_m)\Trans{\rho_{mn}}(q,v_n)$. 
Then, whenever   $v_m\in C_i$ and $v_n\in C_j$, it holds that
$$
\left[i\neq j \lor (C_i \text{ or } C_j\text{ is a transient})\right]\Rightarrow\mu_{mn}\leq K\varepsilon.
$$
\end{lemma}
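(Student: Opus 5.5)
The plan is to read off the face-barycentric coordinates of both endpoints from the decomposition of \cref{lem:puri:proj}, and then to compare them using the structure of the idempotent orbit graph $e$. Write $\lambda_m(\x^1)$ for the vertex barycentric coordinates of the start point. Projecting $\vec\rho=\sum_{mn}\mu_{mn}\vec\rho_{mn}$ onto the first and last states gives two identities:
\[
\lambda_m(\x^1)=\sum_n\mu_{mn},\qquad \lambda_n(\x^2)=\sum_m\mu_{mn}.
\]
Here $\mu_{mn}>0$ only if $(v_m,v_n)\in e$. Summing over the vertices of each SCC then gives, for the face-barycentric coordinates,
\[
\lambda_i(\x^1)=\sum_{j}F_{ij},\qquad \lambda_j(\x^2)=\sum_i F_{ij},\qquad F_{ij}:=\sum_{v_m\in C_i,\,v_n\in C_j}\mu_{mn}.
\]
By \eqref{eq:top:sort}, $F_{ij}>0$ only if $i\le j$, so the flow matrix $F$ is upper triangular.

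\textbf{Off-diagonal mass.} We induct on $i$ to show that $\sum_{j>i}F_{ij}\le c_i\varepsilon$, with constants $c_i$ depending only on $k\le K$. For $i=1$, the column sum of $F$ at index $1$ reduces to the single entry $F_{11}$, so $\lambda_1(\x^2)=F_{11}$. Hence
\[
\sum_{j>1}F_{1j}=\lambda_1(\x^1)-\lambda_1(\x^2)<\varepsilon
\]
by $\varepsilon$-conservativity. For general $i$, the column sum gives $\lambda_i(\x^2)=F_{ii}+\sum_{i'<i}F_{i'i}$, and the row sum gives $\lambda_i(\x^1)=F_{ii}+\sum_{j>i}F_{ij}$. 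Subtracting,
\[
\sum_{j>i}F_{ij}=\lambda_i(\x^1)-\lambda_i(\x^2)+\sum_{i'<i}F_{i'i}\le\varepsilon+\sum_{i'<i}c_{i'}\varepsilon .
\]
This yields $c_i\le 2^{i-1}$. So every $\mu_{mn}$ with $v_m\in C_i$, $v_n\in C_j$, $i\ne j$ is at most $2^{k}\varepsilon$, which is at most $K\varepsilon$ after adjusting the constant (or reading $K$ as a constant depending only on $\aut$).

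\textbf{Transient SCCs.} Let $C_i=\{v\}$ be a transient singleton. By definition it is not a clique, so $(v,v)\notin e$ and $\mu_{vv}=0$. Hence $F_{ii}=0$, and every $\mu_{mn}$ touching $C_i$ lies off the diagonal of $F$; the previous bound therefore covers it. The implication in the statement follows in all cases.

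The main obstacle is the inductive accumulation of constants. Conservativity only controls differences of the $\lambda_i$, not the individual off-diagonal flows, so the argument must ensure that mass leaving earlier SCCs cannot mask mass leaving $C_i$. The triangular structure of $F$ coming from \eqref{eq:top:sort} is exactly what makes the induction close. I would also check that the norm in the definition of $\varepsilon$-conservativity bounds each coordinate difference by $\varepsilon$, which holds for any standard norm up to a constant.
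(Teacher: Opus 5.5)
Your argument is correct and uses the same ingredients as the paper. These are the mass-balance identities obtained by projecting the decomposition of \cref{lem:puri:proj} onto the two endpoints, the triangularity of the flow coming from \eqref{eq:top:sort}, and the absence of a self-loop on a transient singleton.

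The one real difference is the bookkeeping, and it affects the constant. Your per-SCC induction gives $\sum_{j>i}F_{ij}\le 2^{i-1}\varepsilon$. But $2^{k-1}$ is not bounded by the paper's $K$: recall that $K-1$ is the number of clocks, so $K$ only bounds the number $k$ of SCCs. Your step ``at most $K\varepsilon$ after adjusting the constant'' therefore proves the lemma with a different, exponential constant.

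The paper avoids the blow-up by not inducting at all. It sums the conservativity inequalities over $i=1,\dots,i^*$, i.e.\ over the set $D^*=C_1\cup\dots\cup C_{i^*}$, which receives no flow from outside. In your notation, sum the identity $\sum_{j>i}F_{ij}-\sum_{i'<i}F_{i'i}=\lambda_i(\x^1)-\lambda_i(\x^2)$ over $i\le i^*$. All flows internal to $D^*$ cancel, leaving exactly the total flow from $D^*$ to its complement. That flow is therefore $<i^*\varepsilon\le K\varepsilon$, which bounds every off-diagonal $\mu_{mn}$ leaving $C_{i^*}$ in one step.

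Your weaker constant is harmless for the downstream uses: in \cref{lem:eps:con} it is absorbed into $K_0$, and \cref{lem:realize} only needs $\varepsilon=0$. Still, the telescoped sum is both shorter and yields the bound as stated.
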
 
This statement also holds for conservative runs (and $\varepsilon=0$).  
\begin{proof}
Indeed, if $i>j$, then   by \eqref{eq:top:sort} $v_n$ is unreachable from $v_m$ and $\mu_{mn}=0$. Whenever $C_i=\{v_m\}$ is a transient vertex,  $v_m$ is unreachable from $v_m$ and $\mu_{mm}=0$. 

Hence, it remains to prove the bound for $\mu_{mn}$ for the case when 
for some $j^*>i^*$, $v_m\in C_{i^*}$ and $v_n\in C_{j^*}$. 
 
 To that aim, we define, for a set of vertices $D$ and a vector $\x$, the weight  $W_D(\x)$ as the sum of barycentric coordinates of $\x$ corresponding to vertices in $S$, and remark that 
\begin{equation}\label{eq:leak1}
W_D(\x^1)=\sum_{\substack{v_m\in D\\ n\in 1..d }} \mu_{mn}\text{ and symmetrically } W_D(\x^2)=\sum_{\substack{m\in 1..d \\v_n\in D}} \mu_{mn}.
 \end{equation}
 With this notation $\lambda_i(\x)= W_{C_i}(\x)$, and by $\varepsilon$-conservativity
 \begin{equation}\label{eq:leak2}
    \varepsilon> W_{C_i}(\x^1)- W_{C_i}(\x^2).
 \end{equation}
Consider the set of vertices  $D^*=
\bigcup_{i=1}^{i^*}C_i$. Adding up  \eqref{eq:leak2} for $i=1..i^*$ we get
\begin{equation}\label{eq:leak3}
i^*\varepsilon>W_{D^*}(\x^1)- W_{D^*}(\x^2).
 \end{equation}
We remark that $\mu_{mn}=0$ for any $m\not\in D^*$ and $n\in D^*$ (nothing can enter $D^*$ from outside), due to \eqref{eq:top:sort}, and deduce from \eqref{eq:leak1}:
\begin{equation}\label{eq:leak4}
W_{D^*}(\x^1)=\sum_{\substack{v_m\in D^*\\ v_n\in D^* }} \mu_{mn}+\sum_{\substack{v_m\in D^*\\ v_n\not\in D^* }} \mu_{mn} \text{ and } W_{D^*}(\x^2)=\sum_{\substack{v_m\in D^* \\v_n\in D^*}} \mu_{mn}.
 \end{equation}
We upper-bound the left-hand side of \eqref{eq:leak3} and  rewrite the right-hand side using \eqref{eq:leak4}:
$$
K\varepsilon>\sum_{\substack{v_m\in D^*\\ v_n\not\in D^* }} \mu_{mn}\geq \mu_{i^*j^*},
$$ 
which concludes the proof. \qed
\end{proof}
Now we can prove an important fact. 
%
\epsilonCon*
\begin{proof}
Given an $\varepsilon$-conservative  run $\vec\rho=\sum_{mn}\mu_{mn}\vec\rho_{mn}$,
we build new coefficients $\mu'_{mn}$ by replacing by $0$ all those that are small by virtue of \cref{lem:diag}.  Formally:
$$
\mu'_{mn}= 
\begin{cases}
C\mu_{mn} & \text{if }  v_n \text{ and  } v_m \text{ are in the same clique;}\\
0 & \text{otherwise.} 
\end{cases}
$$
The normalizing coefficient $C$ is chosen in such a way that $\sum \mu'_{mn}=1$, it satisfies $1\leq C\leq (1-K^3\varepsilon)^{-1}\leq 1+K^3\varepsilon$, and we have that in both cases ($v_n$  and  $v_m$ 
are in the same clique or not) $|\mu'_{mn}-\mu_{mn}|\leq K^3\varepsilon$.  

The run  $\rho'=\displaystyle\sum_{m,n} \mu'_{mn}\rho_{mn}$ is conservative 
and $O(\varepsilon)$-close to $\rho$:
$$
||\vec\rho'-\vec\rho||_\infty=  \left\|\sum_{m,n} (\mu'_{mn}-{\mu_{mn}})\vec\rho_{mn}\right\|_\infty\leq K^3\varepsilon\sum_{m,n}||\vec\rho_{mn}||_\infty\leq K^5 M\varepsilon,
$$
where $M$ is the bound of all clocks in the \CTA. 

Traces  of $\rho$ and $\rho'$ are also $O(\varepsilon)$-close due to  \cref{lem:heart}. \qed
\end{proof}

\subsection{Factorization}\label{sec:fac}
We recall that by \cref{thm:simon}, each path in $\aut$ belongs to the level  $X_H$ of the sequence \eqref{eq:simon},
hence the same is true for runs of $\aut$, and the recurrence holds:
$$R_0=R_\Delta\cup\{\epsilon\};\qquad 
R_{h+1}=R_h\cdot R_h\cup\bigcup_{\substack{e\in O\\ee=e}}(R_h\cap R_e)^*
$$
with $R_\Delta$ standing for single-transition runs and $R_e$ for $\bigcup_{\pi\in \gamma^{-1}(e) }\Runs_\pi$.  Again $R_H$ contains all the runs of $\aut$.

\simonDecompConserv*



\begin{proof}
We will prove by induction, that for timed words that are traces of $R_h$ at each level $h$ there exists a required factorization with $n\leq b^h $ and $\sum\length(u_i)\leq b^h$ (with $b\geq 2$ depending only on $\varepsilon$ introduced below), and hence we can take $f(\varepsilon)=b^H$. 

For the base case $R_0$, we take the trivial factorization $w=u_0$.

For the first inductive case of $w=w_1 w_2$ with $w_i$ trace of a run 
$\rho_i\in R_h$, we obtain a factorization of $w$ by concatenating those of $w_1$ and $w_2$, and obtaining $n$ and $\sum\length(u_i)$ dominated by $2b^h\leq b^{h+1} $.

The case of $w$ trace of a run $\rho=\rho_0\dots \rho_m$ with all paths $\rho_j\in (R_h\cap R_e)$ for some idempotent $e$ is more involved. Let $q,C_i,\succ, \vec{\lambda}$ be as in \cref{sec:idem}.
The run $\rho$ has the form
$$
(q,\x_0)\Trans{\rho_0}(q,\x_1)\Trans{\rho_1}\cdots \Trans{\rho_m}(q,\x_m),
$$
and by \cref{lem:lyap:faces},
$$\vec{\lambda}(\x_0)\succeq\vec{\lambda}(\x_1)\succeq\cdots\vec{\lambda}(\x_m),$$
and thus (for some potentially very big $m$)
$$\lfloor\vec{\lambda}(\x_0)/\varepsilon\rfloor\succeq\lfloor\vec{\lambda}(\x_1)/\varepsilon\rfloor\succeq\cdots\lfloor\vec{\lambda}(\x_m)/\varepsilon\rfloor.$$
The latter is a non-increasing (in some total order) sequence within the finite set $\{0..\lfloor1/\varepsilon\rfloor\}^k$ (with cardinality bounded by $b=(1+1/\varepsilon)^K$). Such a sequence always can be partitioned into  $N\leq b$ chunks (corresponding to distinct values of $\lfloor\vec{\lambda}(\x_i)/\varepsilon\rfloor$); within each chunk, this value does not change, but it goes down w.r.t.~$\succ$ from every chunk to the next one.   Grouping $\rho_i$ belonging to the same chunk, we obtain a new, rougher factorization of the run $\rho$.  
$$
(q,\y_0)\Trans{\rho'_0}(q,\y'_0)\Trans{\rho''_0}(q,\y_1)\Trans{\rho'_1}(q,\y'_1)\cdots(q,\y_j)\Trans{\rho'_j}(q,\y'_j),
$$
such that $\y_0=\x_0$ and $\y'_j=\x_j$, satisfying the properties:
\begin{itemize}
    \item its size can be bounded: $j\leq b$; 
    \item $\lfloor\vec{\lambda}(\y_i)/\varepsilon\rfloor=\lfloor\vec{\lambda}(\y'_i)/\varepsilon\rfloor$ and thus $\rho'_i$ is $\varepsilon$-conservative for all $i$;
    \item each of $\rho''_i$ belongs to  $R_h$, and by inductive hypothesis its trace can be factorized into $\varepsilon$-conservative words and at most $b^h$ others with cumulated duration at most $b^h$. 
\end{itemize}
This factorization of each $\rho''_i$ yields a required factorization of $w$ into $\varepsilon$-conservative words and at most $b^{h+1}$ others with cumulated duration at most $b^{h+1}$. This concludes the inductive proof.\qed
\end{proof}
\subsection{On runs following routes}\label{sec:follow}
In this section, we relate some runs of the automaton with routes in $G$, explore such runs, and show that all conservative runs belong to this category. 
\defRoute*
\runFollowsRoute*


We remark that the previous definition is quite restrictive, however it captures all conservative runs:
\conservRunFollowsRoute*
\begin{proof}
Consider a conservative run $\rho$ from $(q,\x)$ to $(q,\y)$ along a path $\sigma$ in $\aut$ with $\gamma(\sigma) =e$ and $e^2=e$. Let again the vertex sets $C_1,\dots,C_k$ be all SCCs of $e$ as in the previous section. By definition of conservative run, there exist $\x_i, \y_i$ such that $\x_i,\y_i\in \conv{C_i}$ and $\x=\sum_{i=1}^k \lambda_i \x_i$ and $\y=\sum_{i=1}^k \lambda_i \y_i$. 
We define the set of active faces as
 $\AF=\{i| \lambda_i>0\}$.  By virtue of \cref{lem:diag}, $\AF$ contains only indices of some cliques, not the transient vertices.

Let us decompose the run as described in \cref{lem:puri:proj}: $\vec\rho=\sum_{mn}\mu_{mn}\vec\rho_{mn}$ with $(q,v_m)\Trans{\rho_{mn}}(q,v_n)$. Using \cref{lem:diag} (with $\varepsilon=0$)  we conclude that  $\mu_{mn}>0$ only when $v_m$ and $v_n$ belong to some clique $C_i$ (the same).

We have that 
$$
\lambda_i=W_{C_i}(\x)=\sum_{v_m,v_n\in C_i} \mu_{mn}. 
$$
Defining (for $i\in \AF$) 
$$
\vec\rho_i=\frac1{\lambda_i}\sum_{v_m,v_n\in C_i} \mu_{mn}\vec\rho_{mn}
$$
we obtain a useful decomposition 
\begin{equation}\label{eq:proj}
    \vec\rho=\sum_{i\in \AF} \lambda_i\vec\rho_i. 
\end{equation}

For any $j<|\sigma|$ let us factorize $\sigma$ into a prefix $\sigma_1$ of length $j$ from $q$ to $p^j$ and a suffix  $\sigma_2$ from $p^j$ to $q$. Let $g_1=\gamma(\sigma_1)$ and $g_2=\gamma(\sigma_2)$ their orbit graphs.

For each $i\in \AF$ let $C_i^j$ be the set of vertices of $v\in S(q^j)$ such that $g_1$ contains an edge from $C_i$ to $v$   and $g_2$ contains an edge from $v$  to $C_i$.

Let us prove that each $C^j_i$ is recurrent. Indeed, let $v,w\in C^j_i$, then by construction there exist $v_0,w_0\in C_i$ and edges  $v_0 \to v$ in $\gamma(\sigma_1)$  and  $w\to w_0$ in  $\gamma(\sigma_1)$. Since $C_i$ is a clique of $\gamma(\sigma)$, this orbit graph contains a path from $w_0$ to $v_0$. We can now conclude that there exists an edge from $v$ to $w$ in $g=\gamma(\sigma_2\sigma\sigma_1)$, and $C^j_i$ is a clique of the orbit graph $g$.

Let us prove that $C^j_i$ are disjoint. Indeed, suppose that some vertex $v$ belongs to both  $C^j_i$ and $C^j_k$. By construction, there exists a run (along $\sigma_1$) from a vertex of $C_i$ to $v$ and another from $v$ to a vertex of $C_k$, which implies a run along $\sigma$ from a vertex of $C_i$ to one of $C_k$. For the same reason, there exists a run from $C_k$ to $C_i$.  Since those are maximal SCCs, we conclude that $k=i$.

We conclude that $(q_j,(C^j_i, i\in \AF))$  is a legal vertex of the refined graph $G$. 

We observe now that each projection run $\rho_i, i\in \AF$ passes through
 all the $(q_j,\conv(C^j_i))$. Thus $\rho$ follows the route $\sigma,\sigma'$  with $\sigma'=(q_j,(C^j_i, i\in A))_{j\in 1..n}$ as required. \qed
\end{proof}

Now, let us fix a cyclic route $(\sigma,\sigma')$ and consider a  run $\rho$ following it and its trace $w=\Word(\rho,t_0)$, with standard notations for their components:
\begin{equation}\label{eq:all}
\begin{split}
 \sigma&=q_0\trans[a_1]{\delta_1}q_1\trans[a_2]{\delta_2}\cdots q_n\text{ with }q_0=q_n;\\
 \sigma'&=(q_0,\ell_0)\trans{r_1/c_1}(q_2,\ell_2)\trans{r_2/c_2}\cdots (q_n,\ell_n);\\
 \rho&=(q_0,\x_0)\trans[a_1,d_1]{\delta_1}(q_1,\x_1)\trans[a_2,d_2]{\delta_2}\cdots (q_n,\x_n); \\
 w&=(a_1,t_1)(a_2,t_2)\cdots(a_n,t_n)\text{ with }t_0\in [\tau,\tau+1), \tau\in \real_+;\\ &\hspace{15em}\text{ and } t_i=t_{i-1}+d_i \text{ for }i\in 1..n.
\end{split}    
\end{equation}

The following technical lemma describes the quantities in  $w$ and $\rho$ and will be instrumental  in the construction of the $\varepsilon$-net.

\begin{lemma}\label{lem:quantity}
In conditions of \eqref{eq:all} the following holds:
\begin{enumerate}
   \item if $t_i-t_0>1$ and  $a_i= b$ (the heartbeat), mthen $t_i= t_{\prevb_i}+1$; 
    \item  $t_i\in [\tau+\countb_i-1,\tau+\countb_i+2]$;
    \item if $t_i-t_0>M$, then $\x_i^k=t_i-t_{\lastreset_i^k}$;
    \item whenever  $r(\delta')=0$, the duration of this transition $d_i=g_i(\x_{i-1})$.
\end{enumerate}

Here $\prevb_i$ is the index\footnote{previous beat} of the last hearbeat event within $a_1,\dots,a_{i-1}$; also   $\countb_i$ stands for the number\footnote{count of beats} of heartbeat events $b$ within $a_1,\dots,a_i$, and $\lastreset_i^k$ for the maximal $j\leq i$ such that $\delta_j$ resets $x^k$ (it exists since all clocks are bounded by $M$). Finally, $g_i$ are polynomial functions (described below in the proof). Functions $g_i$ are uniformly Lipshitz continuous: there exist $\nu>0$ depending only on $\aut$ such that
$|g_i(\x)-g_i(\y)|\leq \nu||\x-\y||_\infty$. 
\end{lemma}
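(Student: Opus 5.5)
We prove the four items separately. Items 1–3 follow from clock semantics and the heartbeat construction. Item 4 is where the route structure is actually used.

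\textbf{Item 1.} In an \CTA\ with heartbeat, every $b$-transition has guard $h=1$ and resets $h$, and every other transition requires $h\leq 1$. Assume $t_i-t_0>1$ and $a_i=b$. Then at least one heartbeat has occurred among $a_1,\dots,a_{i-1}$: since $h\leq 1$ is maintained, at most one time unit can elapse before a beat is forced. So $\prevb_i$ is defined. Clock $h$ was reset at time $t_{\prevb_i}$ and equals $1$ at time $t_i$, hence $t_i=t_{\prevb_i}+1$.

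\textbf{Item 2.} This is the first assertion of \cref{lem:heart}, applied to the prefix run of $\rho$ up to transition $i$. That prefix contains $\countb_i$ beats, so its duration lies in $[\countb_i-1,\countb_i+1]$. Adding $t_0\in[\tau,\tau+1)$ gives the stated interval.

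\textbf{Item 3.} All guards and starting regions bound the clocks by $M$, and time elapses during the run. Therefore, once more than $M$ time units have passed since $t_0$, each clock $x^k$ must have been reset at least once; otherwise its value would exceed $M$. The last such reset is at index $\lastreset_i^k$. Since then $x^k$ has grown with elapsed time, so $\x_i^k=t_i-t_{\lastreset_i^k}$.

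\textbf{Item 4.} This is the substantive step. Use the decomposition $\vec\rho=\sum_{i}\lambda_i\vec\rho_i$ from \cref{def:follow}, where $\rho_i$ passes through $\conv\ell^{j}_i$ at every step.

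First consider a single face. If $r(\delta')=0$, then for every face $i$ and every vertex $v\in\ell^{j-1}_i$, the set of delays $d$ with $v+d\models\guard$ and $(v+d)[\reset]\in\conv\ell^j_i$ contains no open interval. Since this set is an interval (a polytope section), it is a single point $d_v$.

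Next extend from vertices to points of the face. For $\y\in\conv\ell^{j-1}_i$, apply \cref{lem:puri:reach}/\cref{lem:puri:proj} to the one-step run of $\rho_i$. Its one-step components go from vertices of $\ell^{j-1}_i$ to vertices of $\ell^j_i$, and each has the forced delay $d_v$. The delay is an affine function of the clock vector along a transition, so the delay of $\rho_i$ is $\sum_v\lambda_v(\y)d_v$. This is an affine function of the barycentric coordinates of $\y$.

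Then combine the faces. The delay $d_j$ of $\rho$ is the same convex combination of the delays of the $\rho_i$. We need to express it as a function $g_j$ of $\x_{j-1}$ alone. The obstacle is that the face-barycentric decomposition $\x=\sum\lambda_i\x_i$ need not be unique in general. Here, however, the $\ell_i$ are disjoint vertex sets of a simplex. So the vertex barycentric coordinates of $\x_{j-1}$ determine each $\lambda_i$ (as $W_{\ell_i}$) and each normalized component. Concretely, $d_j=\sum_{v\in\bigcup\ell_i}\lambda_v(\x_{j-1})d_v$.

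This formula is linear in $\x_{j-1}$, because barycentric coordinates are affine functions of the point. It is therefore polynomial and Lipschitz. Its constant is bounded by $\max|d_v|\leq M$ times the norm of the inverse vertex matrix. There are finitely many regions, so a uniform $\nu$ exists.

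I expect the main difficulty to be the uniqueness and well-definedness argument in the last step. In particular, $g_j$ must depend only on the transition and the face list, and not on the particular decomposition of $\rho$. If needed, $g_j$ can be defined as the linear extension above over the whole affine hull.
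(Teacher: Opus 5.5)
Your proof is correct and follows the same route as the paper. Items 1--3 come directly from the heartbeat mechanics (\cref{lem:heart}) and the clock bound $M$; item 4 comes from the face decomposition $\vec\rho=\sum_i\lambda_i\vec\rho_i$ of \cref{def:follow}, together with the fact that reward $0$ forces the delay of each component. Your extra step through vertices via \cref{lem:puri:proj} supplies the justification the paper omits, since it only asserts a unique timing from each face point, and it gives the sharper conclusion that $g_i(\x)=\sum_{v}\lambda_v(\x)\,d_v$ is affine in $\x_{i-1}$, rather than the degree-$\le 2$ polynomial the paper states.
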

\begin{proof}
(1) and (2) follow from the definition of the heartbeat. 
(3) is always true in timed automata whenever ${\lastreset_i^k}$ is well-defined. This is always the case for $t_i>M$ since all clocks are bounded by $M$.

Let us prove (4). As in \cref{def:follow}, let $\vec\rho=\sum\lambda_j\vec\rho_j$
 and let $\x_{i-1}$ decompose as $\sum_j\lambda_j{\z_j}$. Since the reward is $0$, there exist only one possible timing for a transition $\delta$ from $(q_{i-1},\z_j)$ to $(q_i,\ell_{ij})$, we denote it $t_j$, and $\rho_j$ must respect this timing. Thus necessarily $t=\sum_j\lambda_j t_j$, and $g_i$ will map $\x$ to this $t$.
 
 To understand the exact analytic form of $g(\x)$, we observe that $\lambda_j$ are affine combinations of $x_j$ and that $t_j$ has one of the forms: $c,x_k-c,c-x_k$. Thus $g(\x)$ is a polynomial of degree $\leq 2$ with uniformly bounded coefficients. 
\qed 
\end{proof}
\subsection{On zero loops and proper routes}\label{sec:0}
We consider now loops feasible in $0$ time. As usual in the theory of timed automata, such Zeno cycles  require special attention. 
For a  one-step route $\delta,\delta'$ of cost and reward $0$ with $(p^1,\ell^1)\trans{\delta'}(p^2,\ell^2)$ we define its 0-orbit graph $\gamma_0(\delta,\delta')=\langle p^1,\ell^1,G,p^2,\ell^2\rangle$ with $G$ containing only those edges $(v^1,v^2)$ of $\gamma(\delta)$ that are compatible with $\delta'$, i.e.  $v^1\in\ell^1_i$ and $v^2\in\ell^2_i$ with the same $i$. 

We put $\gamma_0(\delta,\delta')=0$ for transitions with positive cost or reward, and we propagate $\gamma_0$ to all nonempty routes of cost and reward $0$ as a semigroup morphism, similarly to \cref{def:orbit:long}. For any $(q,\ell)\in G$ we define $1_\ell=\langle q,\ell,\mathbf{Id}_{\cup\ell},q,\ell\rangle$ with $\mathbf{Id}$ the graph consisting of self-loops on all vertices in $\cup\ell$. 

We denote the  image of $\gamma_0$ augmented with $0$  by $O_0$, it is a semigroup.   
Let us state some  useful properties of $O_0$.
\begin{lemma}\label{lem:0:monoid}
The finite semigroup $O_0$ has the following properties:
\begin{enumerate}
    \item for each non-zero $a\in O_0$ (image of a route from $(p^1,\ell^1)\to(p^2,\ell^2)$ with cost and reward 0), its graph corresponds to a surjective function  $\ell^1_i \to \ell^2_i$ for every $i$;
    \item for image of a cycle from $(p,\ell)\to(p,\ell)$ (with cost and reward 0), its 0-orbit graph corresponds to a permutation  $\ell_i \to \ell_i$ for every $i$;
    \item if $ab=a\neq 0$ for $a,b\in O_0$, then $b=1_\ell$ for a certain  $\ell$. 
\end{enumerate}
\end{lemma}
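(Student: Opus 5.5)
The plan is to prove the three items in order, each building on the previous one. Throughout, we work directly with the edge conditions of \cref{con:main} and with the behaviour of runs of duration $0$.

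\textbf{Item 1.} We first treat a single step $\delta,\delta'$ from $(p^1,\ell^1)$ to $(p^2,\ell^2)$ with cost and reward $0$. By \cref{con:main}, for each $i$ every vertex of $\ell^2_i$ has an incoming edge from $\ell^1_i$, and every vertex of $\ell^1_i$ has an outgoing edge to $\ell^2_i$. This gives totality and surjectivity of the compatible edge relation, so what remains is functionality.

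Since $c(\delta')=0$, the heartbeat clock is not reset. We want to argue that the delay is forced on each active face. Because the reward is $0$, for each $v\in\ell^1_i$ no open interval of delays leads from $v$ into $\conv\ell^2_i$. The delays admissible from a vertex under the closed semantics form an interval, and we intersect it with the constraint of landing in $\conv\ell^2_i$. This intersection is therefore a single point $d_v$.

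The target of $v$ is then $(v+d_v)[\reset]$. This is a single point, and it must be a vertex, since integer vertices are mapped to integer points. So each $v$ has exactly one compatible successor, and the relation is a function $\ell^1_i\to\ell^2_i$, surjective by the second edge condition. Functions compose, and surjections compose to surjections. Hence by induction along the route, every non-zero element of $O_0$ is a family of surjections $\ell^1_i\to\ell^2_i$, one for each $i$.

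\textbf{Item 2.} For a cycle from $(p,\ell)$ back to $(p,\ell)$, each map $\ell_i\to\ell_i$ is a surjection of a finite set to itself. It is therefore a bijection, that is, a permutation.

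\textbf{Item 3.} Suppose $ab=a\neq0$, with $a$ going from $(p^1,\ell^1)$ to $(p^2,\ell^2)$. Since the product is non-zero, $b$ must start at $(p^2,\ell^2)$; since $ab=a$, $b$ must also end at $(p^2,\ell^2)$. By Item 2, $b$ is therefore a family of permutations $\beta_i$ of $\ell^2_i$.

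Write $a_i:\ell^1_i\to\ell^2_i$ for the surjective components of $a$. The equation $ab=a$ says $\beta_i\circ a_i=a_i$. Surjectivity of $a_i$ then forces $\beta_i=\mathrm{id}$ on all of $\ell^2_i$. Thus the graph of $b$ is $\mathbf{Id}_{\cup\ell^2}$ at $(p^2,\ell^2)$, that is, $b=1_{\ell^2}$.

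\textbf{Main obstacle.} The delicate step is the functionality claim in Item 1: deriving punctuality on every vertex from the $0$-reward condition. Reward $0$ only excludes open intervals of delays landing inside $\conv\ell^2_i$. We therefore must check that the admissible set, being the intersection of an interval with a face constraint, cannot degenerate into something other than a single point, for instance two isolated points. Convexity of the closed-semantics transition relation (\cref{lem:puri:conv}) settles this, because the admissible set is convex. If the vertex target were nonetheless ambiguous, the fallback is to use the fact that distinct vertices are distinct integer points, so a convex admissible set can contain only one of them.
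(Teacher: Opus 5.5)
Your proposal is correct and follows the paper's proof essentially step for step. Totality and surjectivity come from the two edge conditions of \cref{con:main}, and uniqueness of the successor comes from reward $0$. The permutation claim follows from finiteness, and in item 3 surjectivity of $a$ forces $b=1_{\ell^2}$.

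Your convexity argument spells out the paper's one-line ``otherwise the reward would be $>0$'': the set of admissible delays from $v$ into $\conv\ell^2_i$ is an interval, nonempty by the orbit edge, and hence a single point. The integer-point remark and the ``fallback'' are unnecessary, because the orbit edge from $v$ to some vertex of $\ell^2_i$ already identifies the unique target as that vertex.
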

\begin{proof}
\begin{enumerate}
    \item Any vertex of active face  $\ell^1_i$ has a  successor vertex in $\ell^2_i$ by \cref{con:main}; this successor is unique  (otherwise the reward would be $>0$). As for surjectivity,  every   vertex in $\ell^2_i$ has a predecessor also by \cref{con:main}. 
    \item A surjective function from a finite set to itself is always a permutation. 
    \item Orbit graphs $a$ and $b$ correspond to some routes from $(p^1,\ell^1)$ to $(p^2,\ell^2)$ and
    from $(p^2,\ell^2)$ to $(p^2,\ell^2)$, respectively. Let $v$ be a vertex in $\ell^2$, vertex  $v'$ its predecessor via $a$ in $\ell^1$, and $v''$ its successor via $b$ in $\ell^2$.  Since $ab=a$, we have that $v''=v$. The only permutation graph satisfying this property for all $v$ is $1_{\ell^2}$. We conclude that $b=1_{\ell^2}$. \qed    
\end{enumerate}
\end{proof}

\begin{definition}
A \emph{$0$-loop} is a cyclic route $\sigma,\sigma'$ from $(q,\ell)$ to itself with $c(\sigma')=r(\sigma')=0$, such that $\gamma_0(\pi)=1_\ell$.
\end{definition}

\begin{lemma}
Any run following a $0$-loop has duration $0$ and cycles from a state $(q,\x)$ to exactly the same state.
\end{lemma}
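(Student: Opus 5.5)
Two things must be shown for a run $\rho$ following a $0$-loop $\sigma,\sigma'$ from $(q,\ell)$ to itself: its duration is $0$, and its last clock vector equals its first. The plan is to pass to the decomposition given by \cref{def:follow}, $\vec\rho=\sum_{i=1}^k\lambda_i\vec\rho_i$ with $\lambda_i>0$, where each $\rho_i$ is a closed-semantics run along $\sigma$ whose $j$-th state lies in $\conv\ell^j_i$. Everything will be proved for each $\rho_i$ first and then transferred to $\rho$ by linearity: delays and all clock vectors are affine in $\vec\rho$, so durations and endpoints combine with the same weights $\lambda_i$.

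\textbf{Duration.} Since $c(\sigma')=0$, no transition of $\sigma$ resets the heartbeat clock $h$. Every transition carries the guard $h\leq 1$, so along $\rho$ the clock $h$ grows by exactly the elapsed time. A cyclic route returns to the same location, hence to the same starting region $S(q)$; but I do not want to rely only on this, since a region permits some spread of values. Instead I will use that the reward is $0$ on every edge. By \cref{lem:quantity}(4), each delay is then $d_j=g_j(\x_{j-1})$, the unique admissible delay for each projected run $\rho_i$ from its face to the next face. \cref{lem:0:monoid}(1) says the $0$-orbit graph of each step is a function from $\ell^{j-1}_i$ onto $\ell^j_i$. I will argue that each vertex $v\in\ell^{j-1}_i$ is sent by a unique punctual transition to its image vertex. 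That transition is a vertex-to-vertex edge of a region-split automaton with integer vertices, so its delay is an integer. Because the cost is $0$, no heartbeat fires; together with $h\leq 1$, this forces the delay to be $0$ once the run has passed through the interior of the same region. This step is the most delicate. The fallback is the direct argument: over the whole cycle $h$ increases by $\length(\rho_i)$ without being reset, both endpoints lie in $\bar S(q)$, and iterating the loop $m$ times gives a run of duration $m\cdot\length(\rho_i)$ that still contains no reset of $h$. This contradicts $h\leq M$ unless $\length(\rho_i)=0$.

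\textbf{Return to the same state.} With all delays $0$, each transition only resets clocks, so the map $\x\mapsto\x'$ along $\sigma$ is a fixed composition of reset operators, i.e.\ a linear projection $\Phi$. On vertices of active faces, $\Phi$ acts as the map encoded by the $0$-orbit graph $\gamma_0(\sigma,\sigma')$, which by definition of a $0$-loop equals $1_\ell$, the identity on every vertex of $\cup\ell$. Because $\Phi$ is affine and fixes every vertex of $\conv\ell_i$, it fixes each point of $\conv\ell_i$, so $\rho_i$ ends where it started. Taking the convex combination with the weights $\lambda_i$ gives $\x_n=\sum_i\lambda_i\x^n_i=\sum_i\lambda_i\x^0_i=\x_0$, and likewise duration $0$ for $\rho$. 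The step I expect to need the most care is identifying the zero-delay reset map on vertices with the graph $\gamma_0$, that is, checking that the retained edges of $\gamma(\delta)$ are exactly the images under the punctual zero-time transitions.
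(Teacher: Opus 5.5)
There is a genuine gap in the duration step, and your return step depends on it, since it needs zero delays to reduce the dynamics to the linear reset map $\Phi$.

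Your main argument shows only that a vertex-to-vertex transition has an integer delay. Since $h$ is never reset and $h\le 1$, the delays along a vertex chain are $0$ or $1$ and sum to at most $1$. The claim that this ``forces the delay to be $0$ once the run has passed through the interior of the same region'' does not follow. Punctual cost-$0$ transitions with delay $1$ do exist (e.g.\ from the vertex $h=x=0$ with guard $x=1$, resetting only $x$). What rules them out here is the cyclic structure of $\gamma_0(\sigma,\sigma')$, which you do not use at that point.

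The fallback is circular. Concatenating $\rho_i$ with itself $m$ times, with total duration $m\cdot\length(\rho_i)$, presupposes that $\rho_i$ ends at its own starting point $\x^0_i$, which is exactly your second claim. A run following the route from $\x^n_i\neq\x^0_i$ could be shorter, so no unbounded growth of $h$ results. Knowing only that both endpoints lie in $\bar S(q)$ gives $\length(\rho_i)\le 1$, not $0$.

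The missing idea is to use $\gamma_0(\sigma,\sigma')=1_\ell$ first and read off the duration afterwards. Apply \cref{lem:puri:reach} to each step of $\rho_i$. The support of $\x^{j-1}_i$ lies in $\ell^{j-1}_i$, and from it the stochastic matrix can only use edges landing in the support of $\x^j_i$, which lies in $\ell^j_i$. These are edges of the one-step $0$-orbit graph, which is a function $f_j$ by \cref{lem:0:monoid}. So the barycentric coordinates of $\x^j_i$ are the push-forward of those of $\x^{j-1}_i$ under $f_j$.

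The composition of the $f_j$ is the identity on $\ell_i$ because $\gamma_0(\sigma,\sigma')=1_\ell$, hence $\x^n_i=\x^0_i$. Since $h$ is a coordinate that is never reset, $\length(\rho_i)=h(\x^n_i)-h(\x^0_i)=0$. The convex combination then gives both claims for $\rho$.

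This is essentially the paper's proof: active vertices are sent back to themselves, the $h$-coordinate then forces time $0$, and $\rho$ is a convex combination of such runs. In this order, the map $\Phi$ and the identification you flagged as delicate are no longer needed. In your order, that identification would also require zero delays for the vertex components themselves, not only for $\rho_i$.
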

\begin{proof}
We remark that the heartbeat clock $h$ is never reset in such a run, and this clock is one of the coordinates of each vertex  $v$ of active faces of $S(q)$. The only run respecting $\gamma_0(\pi)=1_\ell$ from $v$ leads to $v$ and takes $0$ time (otherwise the value of   $h$ would change). The run of $(q,\x)$ following the route, by \cref{def:follow} is a convex combination of runs of active vertices, and as such goes exactly to $(q,\x)$ in $0$ time. \qed
\end{proof}

Routes of cost and reward $0$ satisfy the following pumping lemma. 
\constBforZeroLoop*
\begin{proof}
Let $B=\# O_0$.  For the route $(\pi,\pi')$ denote its factor  between positions  $i$ and $j$ by $(\pi,\pi')_{i..j}$ and consider its $\gamma_0$-image $e_{i..j}=\gamma_0((\pi,\pi')_{i..j}$. By pigeonhole principle, for some $i<j$ we have $e_{0..j}=e_{0..i}$, hence  $e_{0..i}e_{i..j}=e_{0..i}$, and by \cref{lem:0:monoid} $e_{i..j}=1_\ell$, and we have found a 0-loop $(\pi,\pi')_{i..j}$. \qed
\end{proof}

Routes without $0$-loops have some interesting properties and will be instrumental in the construction of $\varepsilon$-net in the next section. 

\defProperRoute*



\begin{corollary}\label{cor:cost}
In proper routes of normal automata, all factors of cost and reward $0$ have length $\leq B$. 
\end{corollary}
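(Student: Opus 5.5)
The statement is Corollary~\ref{cor:cost}: in a proper route, every factor of cost and reward $0$ has length at most $B$. I will derive it directly from the pumping lemma (Lemma~\ref{lem:pumping}) by contraposition, with $B$ the constant given there (namely $B=\#O_0$).

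\textbf{Step 1: a factor is itself a route.} Let $\pi,\pi'$ be a proper route, and let $(\pi,\pi')_{i..j}$ be a factor with $c=r=0$. The factor is a couple of a path in $\aut$ and a path in $G$, edge-by-edge refining it. So it is itself a route in the sense of the definition of routes, and its cost and reward are the sums over its edges, both $0$.

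\textbf{Step 2: apply the pumping lemma.} Suppose the factor has length $>B$. Then Lemma~\ref{lem:pumping} gives a sub-factor $(\pi,\pi')_{i'..j'}$ with $i\le i'<j'\le j$ that is a $0$-loop. This sub-factor is a cyclic route from some $(q,\ell)$ to itself, with cost and reward $0$ and $\gamma_0$-image $1_\ell$.

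\textbf{Step 3: contradiction with properness.} A $0$-loop occurring inside a factor of $\pi,\pi'$ also occurs inside $\pi,\pi'$ itself. This contradicts the assumption that $\pi,\pi'$ is proper, i.e.\ contains no $0$-loops. Hence every factor of cost and reward $0$ has length $\le B$.

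\textbf{Points to check.} The argument is short, and the only thing to verify is that ``contains a $0$-loop'' is read consistently in the pumping lemma and in the definition of proper. Both should mean containment as a contiguous factor of the route, which is exactly what the pigeonhole argument in the proof of Lemma~\ref{lem:pumping} produces (it finds positions $i<j$ of the given route).

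The hypothesis ``normal automata'' plays no real role in the argument; $B$ depends only on the semigroup $O_0$. It is presumably stated because only this case matters later, in Lemma~\ref{lem:count:routes}, where this bound controls the number of consecutive $0/0$ edges between costly or rewarding edges.

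I expect no real obstacle. The one subtle point, Step~1, amounts to noting that the pumping lemma applies to an arbitrary route rather than only to cyclic ones. This holds because its proof uses only the $\gamma_0$-images of prefixes together with Lemma~\ref{lem:0:monoid}(3), and neither requires the route to be cyclic.
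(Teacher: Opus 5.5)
Your proposal is correct and follows the paper's argument: the paper states only that the corollary is immediate from \cref{lem:pumping}, namely that a cost-and-reward-$0$ factor longer than $B$ is itself a route, so the pumping lemma yields a $0$-loop inside the proper route, a contradiction. Your two side remarks are also accurate: the pumping lemma's proof does not need the route to be cyclic, and normality is not used for this particular statement.
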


The last corollary  is immediate from \cref{lem:pumping} and helps count proper routes:
\countNumberCyclicRoutes*

\begin{proof}
Let $\alpha$ be the maximal reward-to-cost ratio in $\aut$, by \cref{lem:obese} it cannot be infinite. 

Let $\sigma,\sigma'$ be such a proper route from $q$ to itself. By \cref{cor:cost}, at least every $B$ steps, $\sigma'$ has  a transition of cost $1$ or reward $1$.  Thus $$
|\sigma|=|\sigma'|\leq (c(\sigma')+r(\sigma')+1)B\leq ((\alpha+1)T+1)B=K_{10}T+K_{11}.
$$ 
The cardinality of all such routes cannot exceed $\left(N\cdot 2^{\#\Sigma}\#G\right)^{K_{10}T+K_{11}}$, which admits a required upper bound. \qed
\end{proof}

We show how to approximate traces of runs following arbitrary routes by those following proper ones.
\begin{con}
Let $w$ be a timed word over $\Sigma$, its \emph{fluffing} $\phi(w)$ is the set of all words obtained from $w$ by inserting after each letter $(a,t)$ any sequence $(c_1,t)\dots(c_m,t)$ with $c_i\in\Sigma$ distinct elements in alphabetic order. 
\end{con}
We remark that $\#\phi(w)\leq 2^{|w|\#\Sigma}$.  

\fluffFollowingRoute*
%
\begin{proof}
We need a notation: given a run $\rho$, let $\mathop{letters}(\rho)$ be the timed word of duration $0$ containing all the events of $\Sigma$ occurring in $\rho$ and sorted (without duplicates) in alphabetic order. 

Let $\rho$ be  a run, it can be written in the form $\xi_1\zeta_1\dots\xi_n\zeta_n\xi_{n+1}$, where the $\zeta_i$ follow $0$-loops and the $\xi_i$ don't contain factors along $0$-loops. Then $\hat\rho\triangleq\xi_1\dots\xi_n\xi_{n+1}$ is a run (because the $\zeta_i$ are cycles), which is proper, by definition. The labelings of $\hat\rho$ and $\rho$ are respectively $\hat w=\Word(\xi_1)\dots\Word(\xi_n)\Word(\xi_{n+1})$ and $w=\Word(\xi_1)\Word(\zeta_1)\dots\Word(\xi_n)\Word(\zeta_n)\Word(\xi_{n+1})$.
Now let us define $\tilde w\triangleq\Word(\xi_1)\mathop{letters}(\zeta_1)\dots\Word(\xi_n)\mathop{letters}(\zeta_n)\Word(\xi_{n+1})\in \phi(\hat w)$.  Since $\Word(\zeta_i)$ has duration $0$, $d(w,\tilde w)=0$ and thus $d(w, \phi(\hat w))=0$.
%
%
%
\qed
\end{proof}

\subsection{Building $\varepsilon$-nets}\label{sec:net}

\epsNetWordsOneRoute*
Note that the above constraints are relaxations of the properties in Lemma \cref{lem:quantity}.


\lemmaEpsNetWordsOneRoute*

\begin{proof}
A word $w =(a_1,t_1)\dots(a_n,t_n)$ following the route can be $\varepsilon$-approximated by $v=(a_1,s_1)\dots(a_n,s_n)$ with $s_i=\varepsilon\lfloor t_i  /\varepsilon\rfloor$. By \cref{lem:quantity} applied to $w$, we deduce that $v$ satisfies the constraints of \cref{con:con:1} and thus $v\in \Net^{\mathbf{fol}}_\varepsilon(\pi,\pi',\tau)$. \qed
\end{proof}



\epsFollowProperRoutes* 


\indeedEpsFollowProperRoutes*

\begin{proof}
The statement follows from \cref{lem:net:con:1}. \qed
\end{proof}

\epsNetConservWords*


\coroEpsNetCoservWords* 

\begin{proof}
The first statement follows from \cref{lem:net:con,lem:fluff}  and the fact that $dur(w) < \Theta$ implies that the cost of its run $< \Theta + 1$. 
The statement about $\varepsilon$-conservative words is now immediate by \cref{lem:eps:con}. \qed
\end{proof}




\epsNetAut* 

%



\lemmaIndeedEpsNetAut* 

\begin{proof}
Immediate from the construction, \cref{lem:fac,lem:net:con}. \qed
\end{proof}

\propEpsCount*
\begin{proof}[of \eqref{eq:count:fol}]  All the elements of the net have the same untiming $a_1\dots a_n$, so we are interested in the number of timings.
We remark that in the route/the word:
\begin{itemize}
    \item there are $r(\pi')$ edges of reward 1;
    \item there are $c(\pi')$ heartbeat edges;
    \item there are $\leq r(\pi')+ c(\pi')$ contiguous blocks of edges of reward $0$ without heartbeat transitions (after an edge of reward $1$ or after a heartbeat transition);
    \item each such block contains at most $B$ edges (see \cref{cor:cost});
    \item by \cref{lem:count:routes} there are at most $K_{10}M+K_{11}$ events during first $M$ time units.
\end{itemize}
Thus in $\Net^{\mathbf{fol}}_\varepsilon(\pi,\pi',\tau)$ there are 
\begin{itemize}
    \item $3/\varepsilon$ possible timings for each edge within first $M$ time units and each edge of reward $1$;
    \item (after $M$ time) only three possible timings for each heartbeat edge;
    \item (after $M$ time) in each contiguous block of edges of reward $0$ without heartbeat transitions, there are $2\nu+3$ possible time values for the first edge, followed by $2\nu+3$ possible time values for the second one, etc. This yields at most $(2\nu+3)^B$ possible timings for the block. 
\end{itemize}
This leads to the following estimate for the net cardinality:
$$\#\Net^{\mathbf{fol}}_\varepsilon(\pi,\pi',\tau)\leq(3/\varepsilon)^{r(\pi')+K_{10}M+K_{11}}\cdot 3^{c(\pi')} \cdot  (2\nu+3)^{B(r(\pi')+c(\pi'))}.
$$
To simplify it, we pass to logarithms, introduce new constants and use the fact that the reward-to-cost ratio in any cycle $\leq \alpha$ and thus $r(\pi')\leq \alpha c(\pi')$. 
$$
\log\#\Net^{\mathbf{fol}}_\varepsilon(\pi,\pi',\tau)\leq
(\alpha c(\pi')+K_3)\log\frac1\varepsilon+ K_5c(\pi')+K_{12}. \qquad\qed
$$
\end{proof}
\begin{proof}[of \eqref{eq:count:cons}]
It is immediate from \cref{eq:count:fol,lem:count:routes} that
$$
\log\#\Net^{\mathbf{fol}}_{\varepsilon,\Theta}(\tau)\leq (\log K_1 +K_2\Theta)+ (\alpha \Theta+K_3)\log\frac1\varepsilon+ K_5 \Theta+K_{12}. 
$$
Fluffing  operation $\phi$ maps every word of $n$ letters into  $<2^{n\#\Sigma}$ words, in our case $n\leq K_{10}(\Theta+1)+K_{11}$ (see \cref{lem:count:routes}), thus
\begin{multline*}
\log\#\phi\left(\Net^{\mathbf{fol}}_{\varepsilon,\Theta+1}(\tau)\right)\leq \log\#\Net^{\mathbf{fol}}_{\varepsilon,\Theta+1}+(K_{10}(\Theta+1)+K_{11})\#\Sigma= \\ \alpha \Theta \log\frac1\varepsilon + K_3\log\frac1\varepsilon + K_6 \Theta +K_7.\qquad\qed
\end{multline*}
\end{proof}
\begin{proof}[of \eqref{eq:count:all}]
The number of breakpoint sequences $\zeta$ is bounded by
$(T/\varepsilon)^{f_\aut(\varepsilon)}$.

For a given breakpoint sequence $\zeta$,
the $\log$ of cardinality of a universal $\varepsilon$-net on $[a_i,b_i]$ is bounded by ${K_8(b_i-a_i)/\varepsilon}$, see for example \cite{distance}.
The $\log$ of cardinality of the conservative net on $[b_i,a_{i+1}]$ is given by   \eqref{eq:count:cons} with
$\Theta=a_{i+1}-b_i$.
Also, $\sum_{i=0}^{n}(b_i-a_i)\leq f_\aut(\varepsilon)$ by \cref{lem:fac}, and $\sum_{i=0}^{n}(a_{i+1}-b_i)\leq T$.
Thus the cardinality of $\Net_\varepsilon^\zeta$ can be bounded  as follows:
\begin{multline*}
\log\#\Net_\varepsilon^\zeta\leq
\sum_{i=0}^{n}\frac{K_8(b_i-a_i)}{\varepsilon} +\\
\sum_{i=0}^{n-1}\left(\alpha (a_{i+1}-b_i) \log\frac{K_0+1}{\varepsilon} + K_3\log\frac{K_0+1}{\varepsilon} + K_6 (a_{i+1}-b_i) +K_7\right)\leq\\
\frac{K_8f_\aut(\varepsilon)}{\varepsilon}+
\alpha T \log\frac1\varepsilon+ \alpha T \log (K_0+1) + K_3f_\aut(\varepsilon)\log\frac{K_0+1}\varepsilon + K_6 T +K_7f_\aut(\varepsilon)=\\
\alpha T \log\frac1\varepsilon + K_9 T+ h_\aut(\varepsilon)
\end{multline*}
for some function $h_\aut$. 
Multiplying by the number of possible $\zeta$ sequences, we get
\[
\log\#\Net_{\varepsilon,T}\leq
f_\aut(\varepsilon)\log(T/\varepsilon)+\alpha T \log\frac1\varepsilon + K_9 T+ h_\aut(\varepsilon)
\]
and thus
$$
\limsup_{T\to\infty}
 \frac{\log\#\Net_{\varepsilon,T}}{T} \leq 
\lim_{T\to\infty} \frac{ f_\aut(\varepsilon)\log\frac T\varepsilon+\alpha T \log\frac1\varepsilon + K_9 T+ h_\aut(\varepsilon)}T
=
\alpha \log\frac1\varepsilon+K_9.
$$
\qed
\end{proof}

\Cref{lem:upper} is now immediate from \cref{lem:it:is:a:net,eq:count:all}. 

\subsection{The extreme case}\label{sec:deg}
We remark that the constructions and lemmas of \cref{sec:net} still hold in the case of  $\alpha=0$.
\propAlphaZero*
\begin{proof}
If the graph $G$ is acyclic, then by \cref{lem:realize} there are no conservative cycles, and by \cref{lem:fac} the duration of all words in $L(\aut)$ is bounded by $f(\varepsilon)$ for $\varepsilon$ small enough. Hence  $L(\aut)$ is meager.

Otherwise, the result follows from \cref{lem:it:is:a:net,eq:count:all}. \qed
\end{proof}